\newcommand*\circled[1]{\tikz[baseline=(char.base)]{\node[shape=circle,draw,inner sep=0.3pt] (char) {#1};}}
\newtheorem{definition} {Definition}
\newtheorem{theorem} {Theorem}
\newtheorem{lemma} {Lemma}
\newtheorem{assumption} {Assumption}
\def\header{\vspace{2mm} \noindent}
\def\figcapup{\vspace{-0mm}}
\def\figcapdown{\vspace{-0mm}}
\def\tblcapup{\vspace{-0mm}}
\def\tblcapdown{\vspace{2mm}}
\def\tbldown{\vspace{-0mm}}
\def\algocapup{\vspace{-3mm}}
\def\algocapdown{\vspace{-2mm}}
\def\r{\alpha}
\def\H{H^*}
\def\G{G^*}
\def\a{\lambda}
\begin{document}
\begin{sloppy}

\title{Shortest Path and Distance Queries on Road Networks: Towards Bridging Theory and Practice}

\numberofauthors{1}
\author{
Andy Diwen Zhu$^\dag$ $\:\:$ Hui Ma$^\dag$ $\:\:$ Xiaokui Xiao$^\dag$ $\:\:$ Siqiang Luo$^\S$ $\:\:$ Youze Tang$^\dag$ $\:\:$ Shuigeng Zhou$^\S$ \\
\and
\alignauthor
\affaddr{$^\dag$Nanyang Technological University $\qquad \qquad \qquad \qquad \qquad \qquad$ $^\S$Fudan University} \\
\affaddr{$\qquad$ Singapore $\qquad \qquad \qquad \qquad \qquad \qquad \qquad \qquad \qquad \qquad$ China} \\
\email{$^\dag$\{dwzhu, mahui, xkxiao, yztang\}@ntu.edu.sg $\qquad$ $^\S$\{sqluo, sgzhou\}@fudan.edu.cn  $\qquad \qquad$}
}

\maketitle
\begin{abstract}
Given two locations $s$ and $t$ in a road network, a distance query returns the minimum network distance from $s$ to $t$, while a shortest path query computes the actual route that achieves the minimum distance. These two types of queries find important applications in practice, and a plethora of solutions have been proposed in past few decades. The existing solutions, however, are optimized for either practical or asymptotic performance, but not both. In particular, the techniques with enhanced practical efficiency are mostly heuristic-based, and they offer unattractive worst-case guarantees in terms of space and time. On the other hand, the methods that are worst-case efficient often entail prohibitive preprocessing or space overheads, which render them inapplicable for the large road networks (with millions of nodes) commonly used in modern map applications.

This paper presents {\em Arterial Hierarchy (AH)}, an index structure that narrows the gap between theory and practice in answering shortest path and distance queries on road networks. On the theoretical side, we show that, under a realistic assumption, AH answers any distance query in $\tilde{O}(\log \r)$ time, where $\r = d_{max}/d_{min}$, and $d_{max}$ (resp.\ $d_{min}$) is the largest (resp.\ smallest) $L_\infty$ distance between any two nodes in the road network. In addition, any shortest path query can be answered in $\tilde{O}(k + \log \r)$ time, where $k$ is the number of nodes on the shortest path. On the practical side, we experimentally evaluate AH on a large set of real road networks with up to twenty million nodes, and we demonstrate that (i) AH outperforms the state of the art in terms of query time, and (ii) its space and pre-computation overheads are moderate.
\end{abstract}


\section{Introduction} \label{sec:intro}

Given two locations $s$ and $t$ in a road network, a distance query returns the network distance from $s$ to $t$, while a shortest path query computes the actual route that achieves the minimum distance. These two types of queries find important applications in map, navigation, and location-based services. To illustrate, consider that a user of a map service is looking for a nearby Italian restaurant for dinner. In response to the user's query, the service provider can first retrieve the list of Italian restaurants in the region close to the user's current location $u$. After that, the network distance from $u$ to each restaurant is computed (using a distance query), and those distances are returned to the user along with the list of restaurants. Then, if the user chooses a preferred restaurant $r$ from the list, the service provider can employ a shortest path query to provide the user with driving directions from $u$ to $r$.

The classic solution for shortest path and distance queries is Dijkstra's algorithm \cite{d59}. It traverses the road network nodes in ascending order of their distances from $s$; once it reaches $t$ during the traversal, it can compute the distance from $s$ to $t$ and can retrieve the shortest path based on the information recorded before $t$ is visited. With proper data structures, Dijkstra's algorithm runs in $O(n \log n + m)$ time for any shortest path or distance query, where $n$ (resp.\ $m$) is the number of nodes (resp.\ edges) in the road network. Albeit simple and elegant, Dijkstra's algorithm is inefficient for sizable road networks, as it requires traversing all network nodes that are closer to $s$ than $t$, which incurs a significant overhead when $s$ and $t$ are far part.


A plethora of techniques \cite{ssa08,ssa09,ss10,gss08,bfm06,tsp11,rt10,gh05,gkw06,kmw10,hkm06,dss09,fr06,afg10,bfs07,jp02,jhr98,gkr04,ms12} have been proposed to improve over Dijkstra's algorithm in terms of either practical efficiency or asymptotic bounds. The existing methods that focus on practical performance mostly rely on heuristics, and hence, their asymptotic bounds are unattractive in general. For instance, the best heuristic approach by Geisberger et al.\ \cite{gss08} answers shortest path or distance queries in at most a few milliseconds even on road networks with millions of nodes, but its space and time complexities are both $O(n^2)$, i.e., its asymptotic performance is even worse than that of Dijkstra's algorithm. On the other hand, the solutions that offer favorable query time complexities often entail prohibitive preprocessing cost or space overhead, rendering them only applicable for small datasets. For example, the state-of-the-art approaches by Samet et al.\ \cite{ssa08} and Abraham et al.\ \cite{afg10} provide superior bounds on query time, but they require pre-computing the shortest path between {\em any} pair of nodes, which is impractical for the large road networks commonly used in modern map applications.

\header {\bf Contributions.} This paper presents {\em Arterial Hierarchy (AH)}, an index structure that narrows the gap between theory and practice in answering shortest path and distance queries on road networks. On the theoretical side, we show that, under a realistic assumption, AH answers any distance query in $\tilde{O}(\log \r)$ time, where $\r = d_{max}/d_{min}$, and $d_{max}$ (resp.\ $d_{min}$) is the largest (resp.\ smallest) $L_\infty$ distance between any two nodes in the road network. In addition, any shortest path query can be answered in $\tilde{O}(k + \log \r)$ time, where $k$ is the number of nodes on the shortest path.
On the practical side, we experimentally evaluate AH on a large set of real road networks with up to twenty million nodes, and we demonstrate that (i) AH outperforms the state of the art in terms of query time, and (ii) its space and pre-computation overheads are moderate.

In a nutshell, AH organizes the nodes in the road network into a hierarchy, based on which it pre-computes auxiliary information to facilitate query processing. For instance, given the road network $G$ in Figure~\ref{fig:intro-original}, AH constructs a three-level hierarchy $H$ (illustrated in Figure~\ref{fig:intro-hierarchy}), where each level consists of a disjoint subset of the nodes in $G$. Note that $H$ contains all edges in $G$, as well as two auxiliary edges, $\langle v_9, v_{10} \rangle$ and $\langle v_{10}, v_{11} \rangle$, each of which has a length that equals the distance between the two nodes that it connects. These auxiliary edges are referred as {\em shortcuts}, and they can be exploited to considerably reduce the numbers of nodes and edges that need to be traversed during a shortest path or distance query.

For example, given a distance query between $v_1$ and $v_{10}$ (in $G$), AH would perform two alternating traversals (in $H$) starting from $v_1$ and $v_{10}$, respectively, and {\em it would always avoid traveling from a higher-level node to a lower-level node}. In particular, the traversal starting from $v_1$ can only reach two nodes, $v_{10}$ and $v_{11}$, since (i) $v_{11}$ is the only node adjacent to $v_1$, and (ii) from $v_{11}$, AH would only traverse to $v_{10}$ (since $v_{10}$ is the only neighbor of $v_{11}$ that is not at a lower level than $v_{11}$). Similarly, the traversal starting from $v_{10}$ would only reach $v_{11}$. Once the two traversals terminate, the distance between $v_1$ and $v_{10}$ is calculated by summing up the weights of $\langle v_1, v_{11} \rangle$ and $\langle v_{11}, v_{10} \rangle$.

In general, AH answers any distance query with two traversals of the node hierarchy, such that each traversal only moves up from low-level nodes to high-level nodes, but not vice versa. We show that, for real road networks, the node hierarchy contains $O(\log \r)$ levels, where $\r = d_{max}/d_{min}$.
Furthermore, each traversal performed by AH visits only a constant number of nodes and edges in any level of the hierarchy. As a consequence, the total number of nodes and edges visited by AH is $O(\log \r)$, which results in an $\tilde{O}(\log \r)$ time complexity for any distance query. In addition, once the distance between two nodes $s$ and $t$ is computed, AH can derive the actual shortest path from $s$ to $t$ in $O(k)$ time, where $k$ is the number of nodes on the shortest path.


The aforementioned time complexities of AH rely on an assumption on road networks (to be clarified in Section~\ref{sec:def}). We provide detailed discussion on the assumption, and we demonstrate its applicability on practical road networks with extensive experiments on a large collection of real datasets. These experimental findings not only form a basis for our theoretical claims but also shed light on the characteristics of real road networks, which paves the path for future research on shortest path and distance queries.


\begin{figure}[t]
\begin{minipage}[t]{1.63 in}
\centering
\includegraphics[width = 1.62in]{./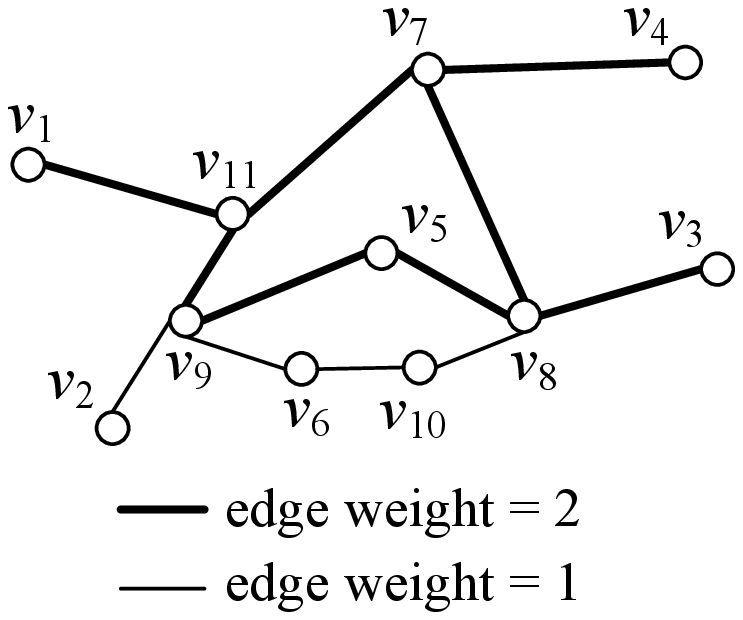}
\vspace{-3.52mm}
\figcapup \caption{Road network $G$ $\quad$ (with bidirectional edges).} \figcapdown \label{fig:intro-original}
\end{minipage}
\begin{minipage}[t]{1.85 in}
\centering
\includegraphics[width = 1.70in]{./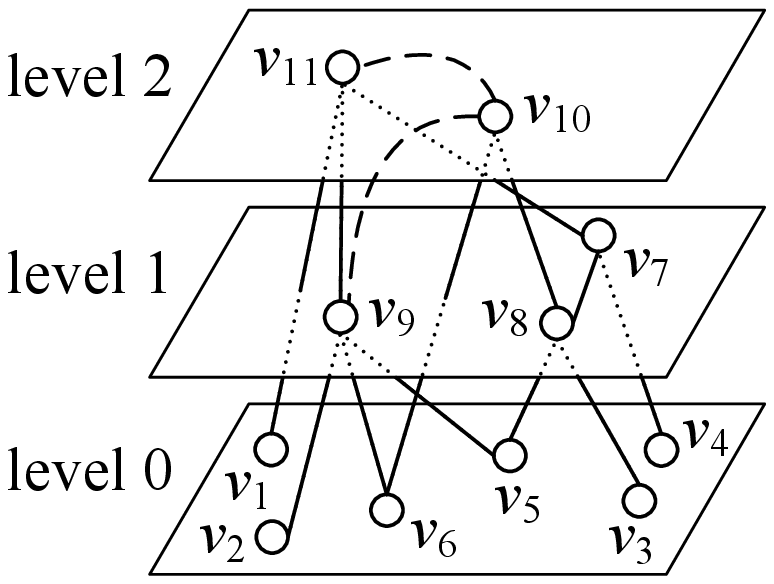}
\figcapup \caption{Node hierarchy $H$. $\quad$} \figcapdown \label{fig:intro-hierarchy}
\end{minipage}
\end{figure}

\section{Problem and Assumptions}\label{sec:def}

Let $G$ be a road network. We assume that $G$ is a directed, degree-bounded, and connected graph with a node set $V$ and an edge set $E$, such that (i) $|V| = n$, (ii) each node in $V$ locates in a two-dimensional space, and (iii) each edge $e \in E$ is associated with a positive weight $l(e)$. Without loss of generality, we consider that $l(e)$ equals the length of  $e$.
For any path $P$ in $G$, we define its length $l(P)$ as the total length of the edges in $P$.


We study two types of queries on $G$, namely, shortest path queries and distance queries. Given an ordered pair of nodes $(s, t) \in V \times V$, a shortest path query asks for a sequence of edges $e_1, e_2, \ldots, e_k$ that form a path from $s$ to $t$, such that $\sum_{i = 1}^k l(e_i)$ is minimized. On the other hand, a distance query from $s$ to $t$ asks only for the value of $\sum_{i = 1}^k l(e_i)$ instead of the actual shortest path. For convenience, we define the {\em distance} from $s$ to $t$ as $dist(s, t) = \sum_{i = 1}^k l(e_i)$.

Our solution for shortest path and distance queries is developed based on an observation on the properties of real road networks, as explained in the following.

\header {\bf Observation.} Assume that we impose a square grid $R$ on $G$. Let $B$ be a region containing $4 \times 4$ cells in the grid. We define the left-most (resp.\ right-most) column of cells in $B$ as the {\em west strip} (resp.\ {\em east strip}) of $B$, and we refer to the vertical line that evenly divides $B$ as the {\em vertical bisector} of $B$. We also define $B$'s {\em north strip}, {\em south strip}, and {\em horizontal bisector} in a similar manner. For example, Figure~\ref{fig:def-grid} illustrates (i) a square grid imposed on the road network in Figure~\ref{fig:intro-original}, (ii) a region $B$ covering $4 \times 4$ grid cells, and (iii) the strips and bisectors of $B$.

We observe that, in practice, the shortest paths between the west and east strips of $B$ can often be {\em covered} by a small set $S_{we}$ of road network edges intersecting $B$'s vertical bisector. That is, given any two points in $B$'s west and east strips, respectively, the shortest path between the two points should pass through at least one edge in $S_{we}$. For instance, suppose that $B$ covers the area of a state. In that case, any shortest path $P$ between the west and east strips of $B$ corresponds to a route that connects the west and east ends of the state. Intuitively, $P$ would have to pass through some major intra-state highways. Therefore, if $S_{we}$ contains the road network edges on the intra-state highways that intersect $B$'s vertical bisector, then $S_{we}$ should cover any aforementioned shortest path $P$. Furthermore, the cardinality of $S_{we}$ should be small, as there should exist only a handful of major highways in the state that go across the vertical bisector. Similar statements can be made even when $B$ corresponds to a larger region (e.g., a continent) or a smaller one (e.g., a city). In addition, we also observe that all shortest paths between the north and south strips of $B$ can be covered by a few edges intersecting $B$'s horizontal bisector.

The above observations are similar in spirit to those made in previous work \cite{afg10,bfm06,ssa09}, which all illustrate that there exists a small set of {\em important} road network edges or nodes that cover all shortest paths connecting distant regions (see Section~\ref{sec:related} for a survey of related work). In what follows, we will formalize our observations and provide empirical evidence, so as to form a basis for further discussions in Sections \ref{sec:fc} and \ref{sec:ah}.

\header {\bf Formalization.} Given a region $B$ of $4\times4$ grid cells, we say that a road network path $P$ is a {\em local path} in $B$, if at most one edge in $P$ intersects the boundary of $B$. For instance, in Figure~\ref{fig:def-grid}, the paths $\langle v_9, v_5, v_8 \rangle$ and $\langle v_{11}, v_7, v_4 \rangle$ are both local paths in $B$. A local path in $B$ is the {\em shortest}, if it is shorter than any other local path in $B$ with the same endpoints. For simplicity, we assume that no two local paths in $B$ share the same endpoints and have the same length -- This assumption can be enforced by adding a small perturbation to each edge in $G$, as shown in Appendix~\ref{sec:perturb}. 

We are interested in the local shortest paths between opposite strips of $B$, and a set of edges on $B$'s bisectors that cover all such paths, as defined in the following.
\begin{definition}[Spanning Paths \& Arterial Edges] \label{def:def-ae}
A local shortest path $P$ in $B$ is a {\bf spanning path} of $B$, if (i) the two endpoints of $P$ are on different sides of a bisector of $B$ (denoted as $l_b$), and (ii) neither of the endpoints is contained in a grid cell adjacent to the bisector $l_b$. Any edge on $P$ that intersects $l_b$ is an {\bf arterial edge} of $B$.
\end{definition}
By Definition~\ref{def:def-ae}, the path $P = \langle v_9, v_6, v_{10}, v_8\rangle$ in Figure~\ref{fig:def-grid} is a spanning path of $B$, since (i) $P$ is a local shortest path of $B$, (ii) $v_9$ and $v_8$ are on different sides of $B$'s vertical bisector, and (iii) neither $v_9$ nor $v_8$ is in a grid cell adjacent to the bisector. Accordingly, the edge $\langle v_6, v_{10}\rangle$ is an arterial edge of $B$, as it is the only edge in $P$ that intersects $B$'s vertical bisector. Likewise, $\langle v_{11}, v_7, v_4\rangle$ is also a spanning path of $B$, and $\langle v_{11}, v_7\rangle$ is an arterial edge of $B$.

As explained previously, the number of arterial edges in a $(4{\times}4)$-cell region $B$ tends to be small in practice, since there usually exist only a few major connections between opposite strips of $B$. We formalize this observation as follows.

\begin{assumption}[Arterial Dimension] \label{assu:def-ae}
For any square grid on $G$ and any region $B$ with $4 \times 4$ grid cells, the number of arterial edges of $B$ is at most a constant $\lambda$, referred to as the {\bf arterial dimension} of $G$.
\end{assumption}

To demonstrate the applicability of Assumption~\ref{assu:def-ae}, we conduct an experiment on eight real datasets that represent various parts of the road network in the United States (see Section~\ref{sec:exp} for details). The weight of each edge in the data equals the time required to travel between the two endpoints of the edge. On each dataset, we impose a $2^r \times 2^r$ square grid ($r \in [3, 17]$), and compute the number of arterial edges for each ($4{\times}4$)-cell region (ignoring the regions that are empty).
After that, we compute the maximum number of arterial edges for a region, as well as the mean, $90\%$ quantile, and $99\%$ quantile. Figure~\ref{fig:def-ae-num} plots the results as functions of the grid resolution $r$. Regardless of the grid resolution and the dataset size, the maximum number of arterial edges for a ($4 {\times} 4$)-cell region is at most $97$, and is below $60$ in most cases. Furthermore, the $90\%$ and $99\%$ quantiles are at most $60$, while the mean is never above $22$. This indicates that practical road networks have fairly small arterial dimensions. In Sections \ref{sec:fc} and \ref{sec:ah}, we will exploit this fact to construct efficient indices for shortest path and distance queries.


\begin{figure*}[!t]
\begin{small}
\begin{tabular}{cccc}
\multicolumn{4}{c}{\hspace{-4mm} \includegraphics[height=3.6mm]{./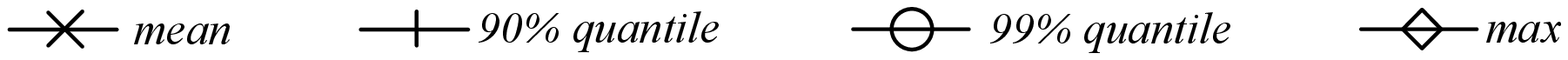}} \\ \vspace{-2.5mm} \\
\hspace{-9.5mm}\includegraphics[height=38mm]{./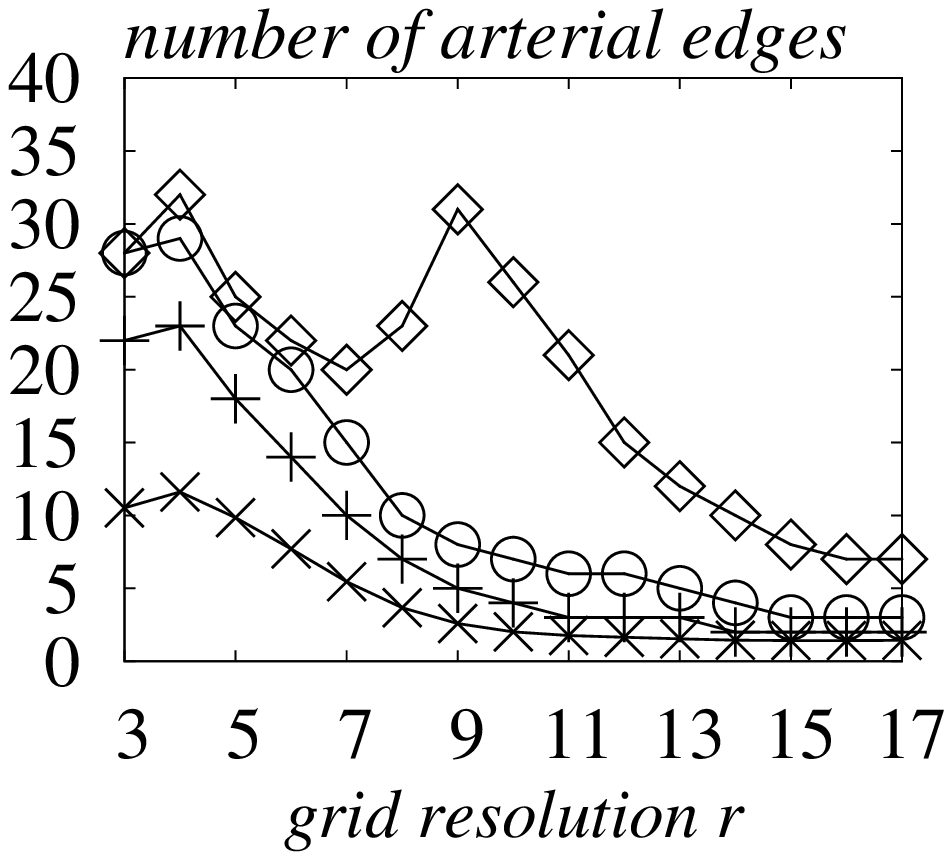}
&
\hspace{-12.5mm}\includegraphics[height=38mm]{./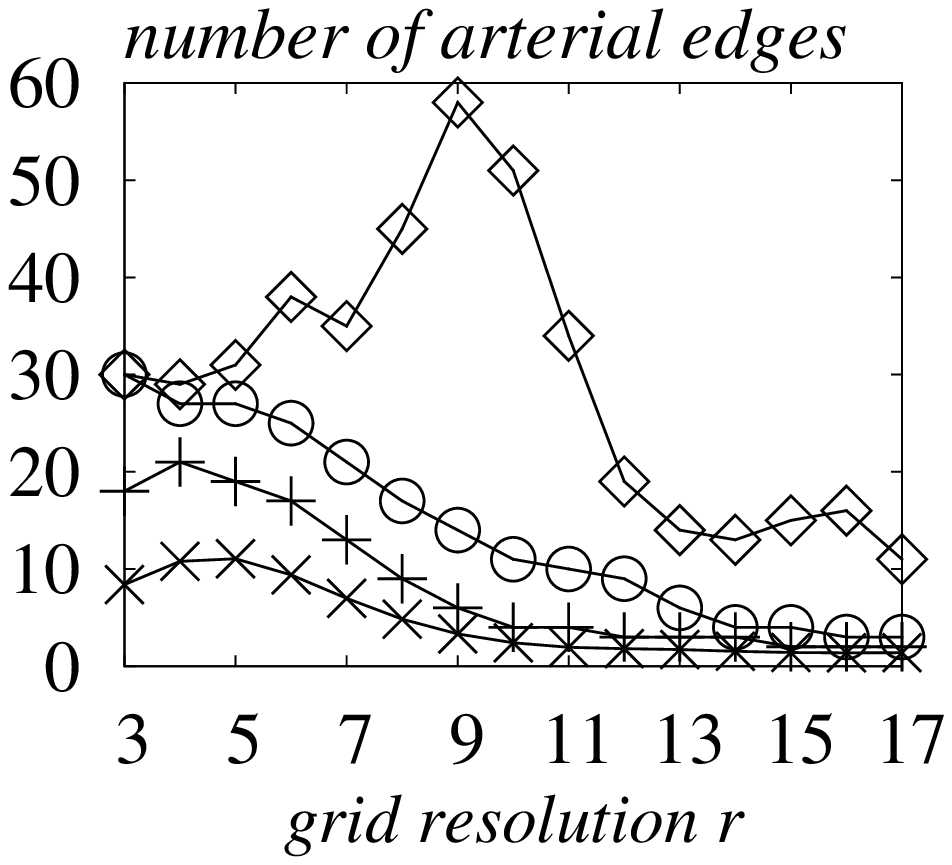}
&
\hspace{-12.5mm}\includegraphics[height=38mm]{./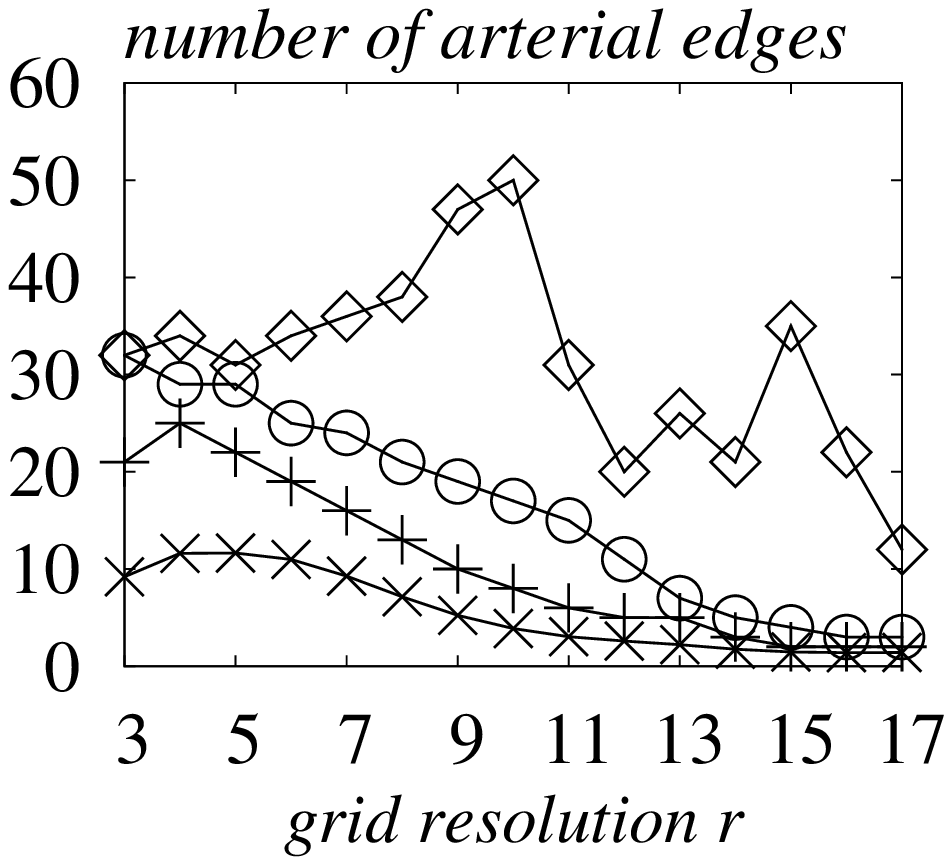}
&
\hspace{-12.5mm}\includegraphics[height=38mm]{./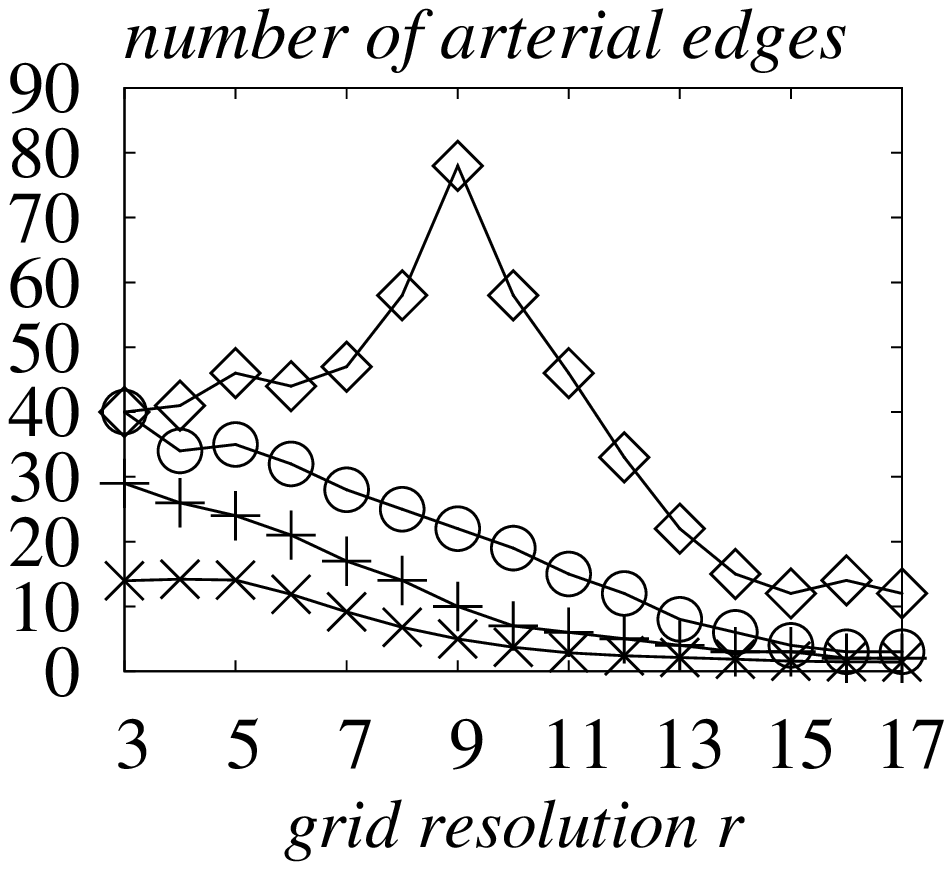} \\

\hspace{-5.5mm}(a) ME ($n = 187{,}315$) & \hspace{-8mm}(b) CO ($n = 435{,}666$) & \hspace{-8mm}(c) FL ($n = 1{,}070{,}376$) & \hspace{-8mm}(d) CA ($n = 1{,}890{,}815$) \\ \vspace{-0.5mm}\\

\hspace{-9.5mm}\includegraphics[height=38mm]{./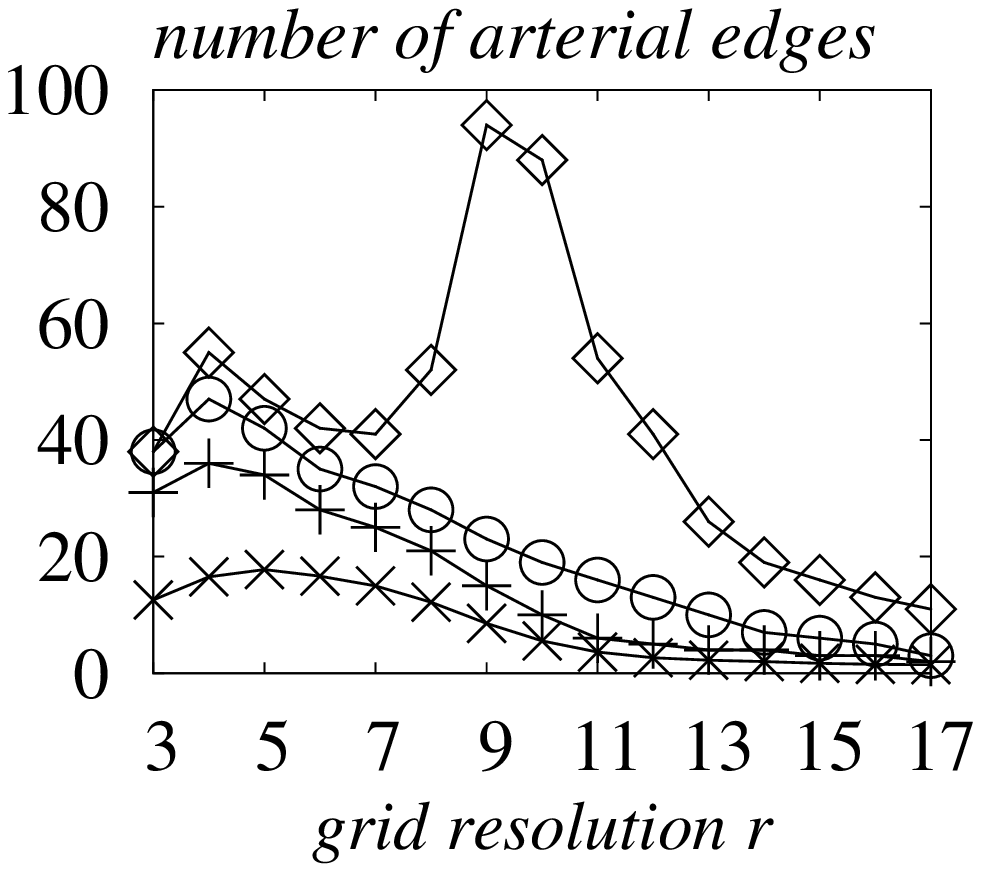}
&
\hspace{-12.5mm}\includegraphics[height=38mm]{./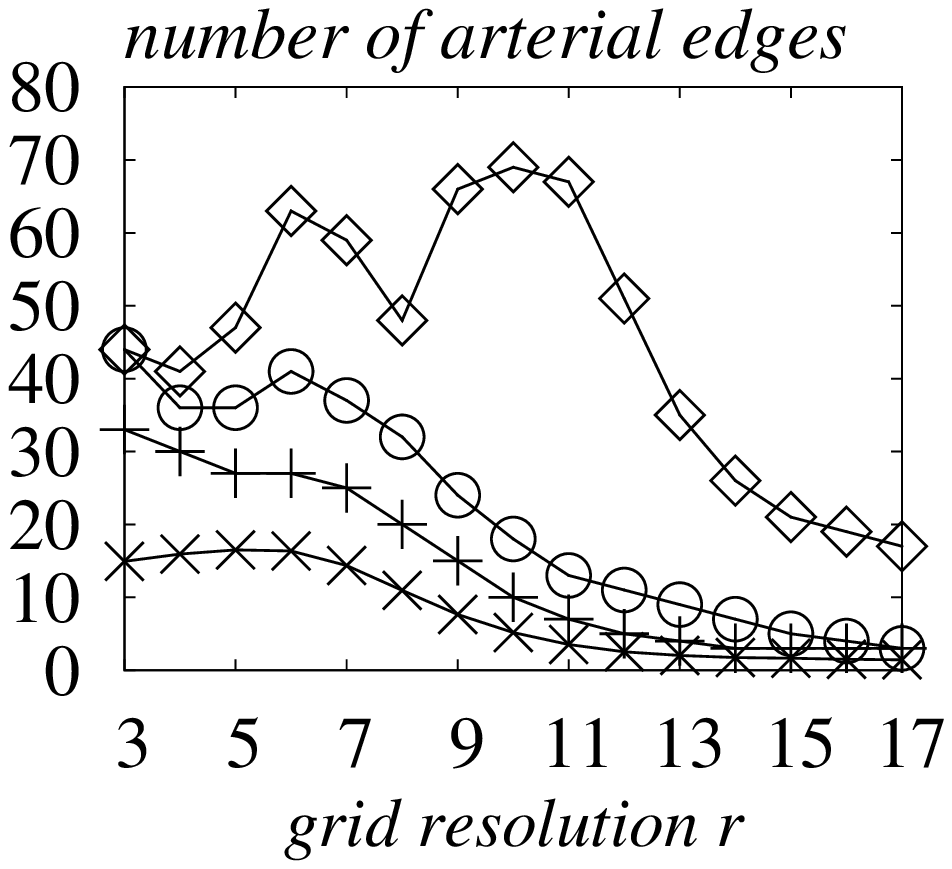}
&
\hspace{-12.5mm}\includegraphics[height=38mm]{./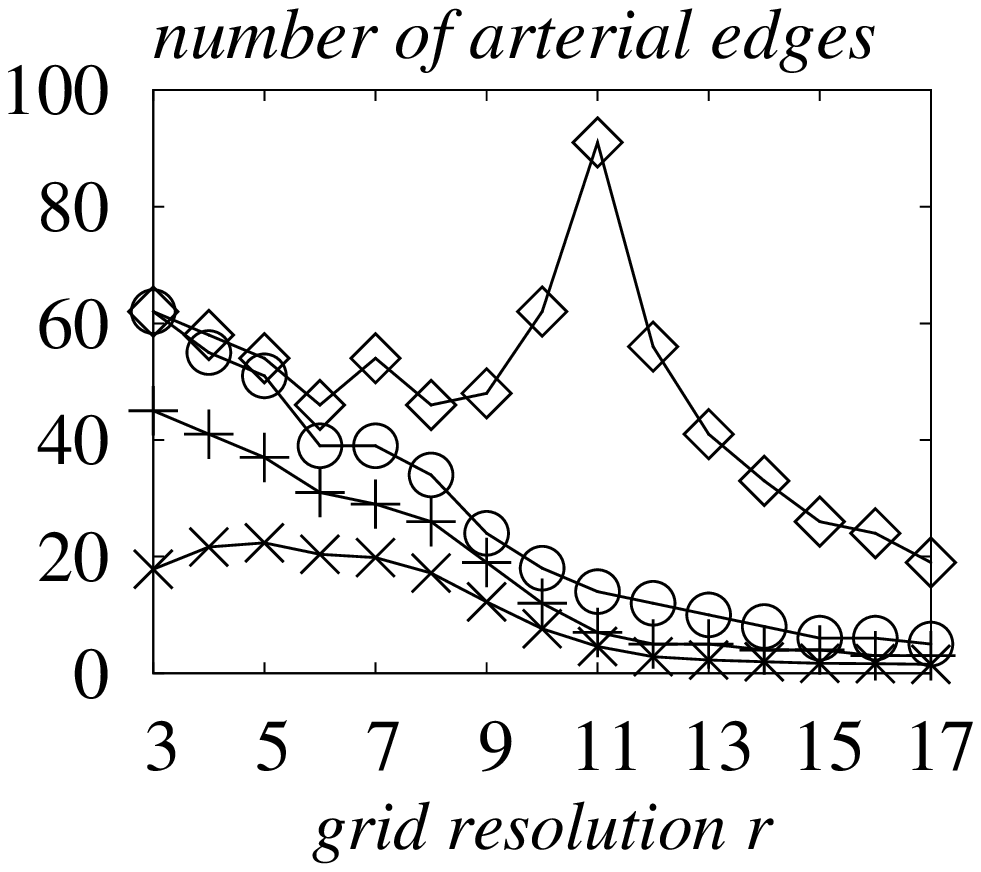}
&
\hspace{-12.5mm}\includegraphics[height=38mm]{./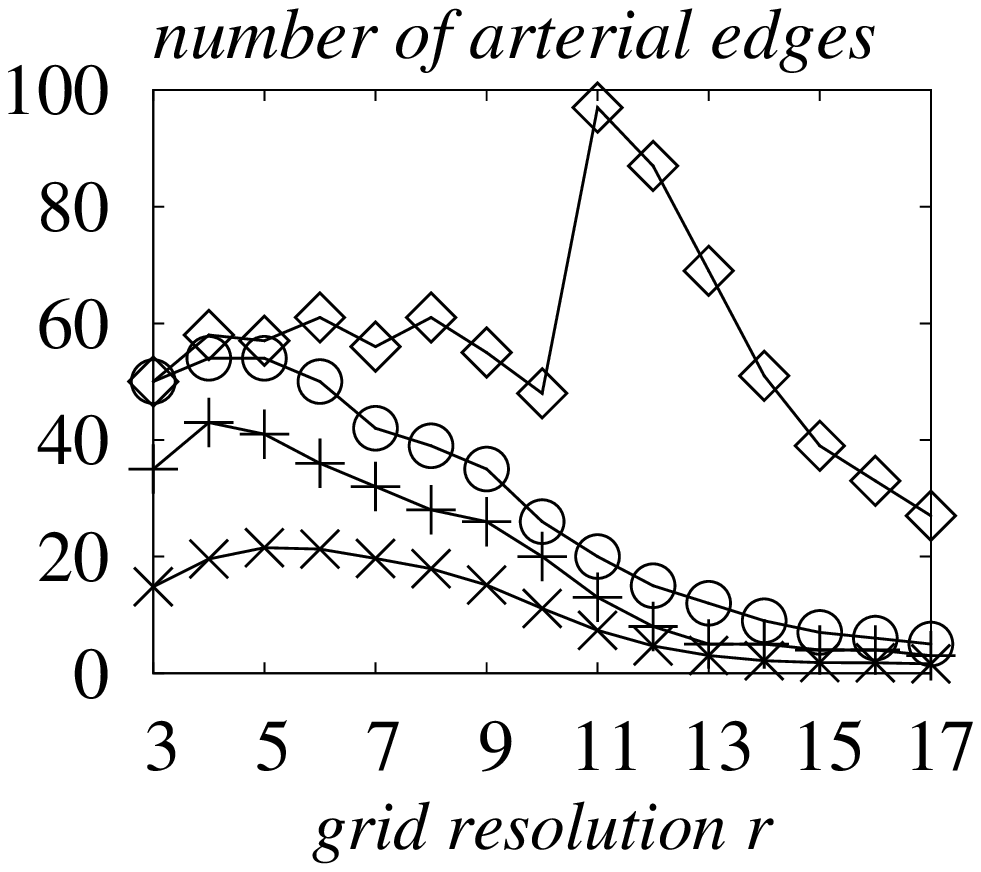} \\

\hspace{-5.5mm}(e) E-US ($n = 3{,}598{,}623$) & \hspace{-8mm}(f) W-US ($n = 6{,}262{,}104$) & \hspace{-8mm}(g) C-US ($n = 14{,}081{,}816$) & \hspace{-8mm}(h) US ($n = 23{,}947{,}347$)
\end{tabular}
\end{small}
\figcapup \caption{Arterial dimensions of real road networks.} \figcapdown
\label{fig:def-ae-num}
\end{figure*}

\section{A First-Cut Solution}\label{sec:fc}

This section presents {\em FC} (\underline{f}irst-\underline{c}ut), an index structure designed for road networks with small arterial dimensions. FC is worst-case efficient for distance queries, and its space consumption is modest; nevertheless, FC is unsuitable for large road networks as it incurs significant pre-processing cost. The reasons that we introduce FC are (i) it is a conceptually simple method that demonstrates the key idea of our proposal, and (ii) with a few modifications and optimizations, FC can be turned into a scalable method that handles both distance and shortest path queries (see Section~\ref{sec:ah}).

\begin{figure}[t]
\centering
\includegraphics[width = 3.2in]{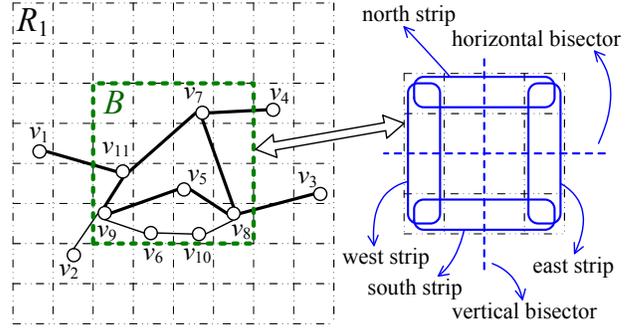}
\figcapup \caption{Strips and bisectors of a region with $4 \times 4$ cells.} \label{fig:def-grid} \figcapdown
\end{figure}


\subsection{Index Construction} \label{sec:fc-pre}

Given a road network $G$, FC first assigns a {\em level} to each node in $G$, such that nodes with higher levels tend to be more {\em important}. After that, FC organizes the nodes into a hierarchy based on their levels, and it adds auxiliary edges between various nodes to facilitate query processing. In the following, we will elaborate how the node levels are decided and how the auxiliary edges are created.

\header {\bf Deciding Node Levels.} First, FC imposes on $G$ a ($4 {\times} 4$)-cell square grid that tightly covers all nodes in $G$. After that, FC recursively splits each grid cell into $2 \times 2$ smaller cells, until each cell contains at most one node in $G$. This results in a sequence of square grids with increasing resolutions. Let $h$ be the number of grids thus constructed. We use $R_i$ to denote the grid with $2^{h+2 - i} \times 2^{h+2 - i}$ cells, i.e., $R_h$ is the ($4 {\times} 4$)-cell grid that FC first constructed, and $R_1$ is the grid with the highest resolution.

Let $d_{max}$ (resp.\ $d_{min}$) be the largest (resp.\ smallest) $L_\infty$ distance between any two nodes in $G$. It can be verified that $h \le \log_2 (d_{max}/d_{min}) - 1$. We note that $h$ is always a small number for practical road networks: Even if $d_{max}$ is as large as the length of the Equator ($\approx 4 \times 10^7$ meters) and $d_{min}$ is as small as $1$ meter, the value of $h$ is no more than $26$.

Given each $R_i$ ($i \in [1, h]$), FC computes the arterial edges in any ($4 {\times} 4$)-cell region in $R_i$. Let $A_i$ be the set of arterial edges obtained from $R_i$. For any edge in $A_i$, we define it as a {\em level-$i$ edge} if it does not appear in $A_{i+1}, \ldots, A_h$. If an edge does not appear in any $A_i$, then we refer to it as a {\em level-$0$ edge}. In other words, an edge has a higher level if it is an arterial edge for a larger region. Similarly, we also define the level of each node $v$ in $G$: we say that $v$ is a {\em level-$i$ node} if it is adjacent to some edge at level $i$ but not any edge at level $i+1, \ldots, h$. Intuitively, a higher-level node tends to be more important for shortest path and distance queries.

\header {\bf Creation of Shortcuts.} Once the node levels are decided, FC organizes the nodes in $G$ into a hierarchy $H$ of $h+1$ levels $L_0, L_1, \ldots, L_h$, such that all level-$i$ ($i \in [0, h]$) nodes are contained in $L_i$. For example, Figure~\ref{fig:intro-hierarchy} illustrates a $3$-level hierarchy of the nodes in Figure~\ref{fig:intro-original}.

The hierarchy $H$ retains all edges in $G$. In addition, FC inserts into $H$ some auxiliary edges, referred to as {\em shortcuts}. For any two nodes $v_s$ and $v_t$, FC creates a shortcut $c$ from $v_s$ to $v_t$, if the shortest path from $v_s$ to $v_t$ only passes through nodes whose levels are lower than both $v_s$'s and $v_t$'s. Furthermore, the length of $c$ equals to the distance from $v_s$ to $v_t$, i.e., $l(c) = dist(v_s, v_t)$. For instance, consider the nodes $v_6, v_8, v_9, v_{10}$ in Figure~\ref{fig:intro-hierarchy}, whose levels are $0$, $1$, $1$, and $2$, respectively. There is a shortcut from $v_9$ to $v_{10}$, since the shortest path from $v_9$ to $v_{10}$ only goes through $v_6$, and the level of $v_6$ is lower than those of $v_9$ and $v_{10}$. On the other hand, there is no shortcut from $v_8$ to $v_9$, since the shortest path from $v_8$ to $v_9$ passes through $v_{10}$, whose level is higher than both $v_8$'s and $v_9$'s.

The shortcuts inserted into $H$ enable us to avoid visiting unimportant nodes when processing distance queries. For example, given the shortcut $c$ from $v_9$ to $v_{10}$ in Figure~\ref{fig:intro-hierarchy}, we can determine that $dist(v_9, v_{10}) = l(c)$, without having to compute the actual shortest path from $v_9$ to $v_{10}$. In general, for any two nodes $v_s$ and $v_t$ in $H$, there exists a path from $v_s$ to $v_t$ that bypasses unimportant nodes with shortcuts, as will be explained in Section~\ref{sec:fc-query}. For simplicity, we will use the term ``edge'' to refer to either an original edge or a shortcut in $H$, unless otherwise specified.

\subsection{Query Processing} \label{sec:fc-query}

Consider a query $q$ that asks for the distance from a node $s$ in $G$ to another node $t$. Given the node hierarchy $H$, FC answers $q$ with two concurrent traversals of $H$ that start from $s$ and $t$, respectively. Each traversal is performed using a {\em constrained} version of Dijkstra's algorithm \cite{d59}, as explained in the following.

\header {\bf Traversal Algorithm.}  The traversal from $s$ maintains a hash table $T_s$ and a priority queue $Q_s$. The hash table $T_s$ maps each node $v$ in $G$ to a value $\kappa_s(v)$, which equals the length of the shortest path from $s$ to $t$ that has been found so far. Initially, we have $\kappa_s(s) = 0$ and $\kappa(v) = +\infty$ for any other node $v$.

Meanwhile, each entry in the priority queue $Q_s$ corresponds to a certain node $v'$ in $G$, and the key of the entry equals $\kappa_s(v')$. In the beginning of the traversal, $Q_s$ contains only one entry, which corresponds to $s$. Subsequently, FC iteratively extracts (from $Q_s$) the node $u$ with the smallest key. For each $u$ extracted, FC inspects every edge $\langle u, v\rangle$ in $H$ that starts from $u$, and it checks whether $v$ satisfies certain constraints. (We will clarify these constraints shortly). If $v$ violates any of the constraints, it would be ignored; otherwise, FC would further check whether $\kappa_s(u) + l(\langle u, v\rangle) < \kappa_s(v)$, i.e., whether the path from $s$ to $v$ via $u$ is shorter than all known paths from $s$ to $v$. If the inequality holds, then FC sets $\kappa_s(v) = \kappa_s(u) + l(\langle u, v\rangle)$ and inserts $v$ into $Q_s$ (if $v$ has not been inserted before).

The traversal from $t$ also maintains a hash table $T_t$ and a priority queue $Q_t$. It is performed in a manner similar to the traversal from $s$, with one notable difference: Whenever FC extracts a node $u$ from $T_t$, it only inspects the edges $\langle v, u\rangle$ that {\em points to} $u$. In other words, the traversal from $t$ focuses on paths that end at $t$.

FC conducts the above two traversals in a round-robin fashion, i.e., it extracts nodes from the two priority queues $Q_s$ and $Q_t$ in turns. To determine when the traversals can be terminated, FC maintains a variable $\theta$ that records the length of the shortest path from $s$ to $t$ that is seen so far. Initially, $\theta = +\infty$. After that, for each node $u$ extracted from either priority queue, FC retrieves its key $\kappa_s(u)$ in the hash table $T_s$, as well as its key $\kappa_t(u)$ in $T_t$. Recall that $\kappa_s(u)$ (resp.\ $\kappa_t(u)$) records the length of the shortest path from $s$ to $u$ (resp.\ from $u$ to $t$) found so far. Therefore, the shortest path from $s$ to $t$ should be no longer than $\kappa_s(u) + \kappa_t(u)$. Based on this, if $\kappa_s(u) + \kappa_t(u) < \theta$, then FC would update $\theta$ and set it to $\kappa_s(u) + \kappa_t(u)$.

Whenever $\theta$ is no more than the smallest key value in $Q_s$, we know that $dist(s, u) \ge \theta$ for any node $u$ remaining in $Q_s$, which indicates that $u$ cannot be on the shortest path from $s$ to $t$. In that case, FC would terminate the traversal from $s$. Similarly, the traversal from $t$ is stopped if $\theta$ is no more than any key values in $Q_t$. When both traversals are terminated, FC returns $\theta$ as the answer to the distance query.

\header {\bf Constraints on Node Traversals.} As mentioned above, whenever FC extracts a node $u$ from a priority queue (either $Q_s$ or $Q_t$), it inspects the neighbors of $u$, and it processes only those neighbors $v$ that satisfy certain constraints. Specifically, there are two constraints on $v$:
\begin{enumerate} 
\item {\em Level Constraint:} $v$ should not be at a level lower than $u$'s.

\item {\em Proximity Constraint:} Let $i$ be the level of $v$ ($i \in [0, h-1]$). If $u$ is extracted from $Q_s$ (resp.\ $Q_t$), then $v$ and $s$ (resp.\ $t$) should be covered in the same ($3 {\times} 3$)-cell region in $R_{i+1}$. (Recall that $R_i$ is a square grid with $2^{h+2 - i} \times 2^{h+2 - i}$ cells.)
\end{enumerate}

Both of the above constraints are intended to improve the efficiency of FC. In particular, the level constraint helps FC bypass unimportant nodes during query processing. For example, consider that we use FC to compute the distance from $v_8$ to $v_{11}$ in Figure~\ref{fig:intro-hierarchy}. As explained previously, FC would invoke two traversals starting from $v_8$ and $v_{11}$, respectively. Since $v_{11}$ is at level $2$, the traversal from $v_{11}$ would only visit the neighbors of $v_{11}$ that are at levels no lower than $2$. As a consequence, $v_{10}$ is visited (since its level equals $2$), while $v_1$ and $v_9$ are bypassed. After that, the traversal would visit any neighbor of $v_{10}$ whose level is no lower than that of $v_{10}$. Since none of the neighbors of $v_{10}$ fulfills this requirement, the traversal terminates. Similarly, the traversal from $v_8$ would first visit two of $v_8$'s neighbors, $v_7$ and $v_{10}$, ignoring the remaining neighbor $v_3$, since $v_3$'s level is lower than that of $v_8$. After that, the traversal visits only $v_{11}$ and terminates, as all other remaining nodes violate the level constraint. In summary, the two traversals by FC visit only four nodes: $v_7$, $v_8$, $v_{10}$, and $v_{11}$. 

\begin{figure}[t]
\centering
\includegraphics[width = 2.8in]{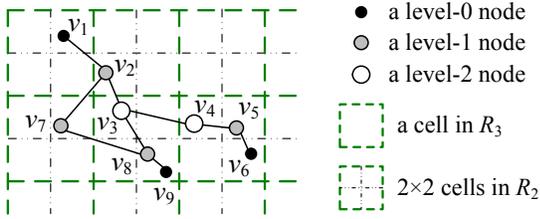}
\figcapup \caption{Illustration of the proximity constraint.} \label{fig:fc-proximity} \figcapdown
\end{figure}


Meanwhile, the proximity constraint ensures that FC only searches a small number of grid cells in each level of the node hierarchy $H$. For example, suppose that we are given the node hierarchy in Figure~\ref{fig:fc-proximity}, and we use FC to compute the distance from $v_1$ to $v_6$. Among the two traversals invoked by FC, the one starting from $v_1$ would first visit $v_2$, and then $v_3$ and $v_7$. The node $v_7$ has a neighbor $v_8$, which is at the same level as $v_7$, i.e., $v_8$ satisfies the level constraint. However, $v_8$ would still be ignored by FC, as it violates the proximity constraint. In particular, $v_8$ is a level-$1$ node, but there does not exist any ($3 {\times} 3$)-cell region in $R_2$ that can cover both $v_1$ and $v_8$. In contrast, the node $v_4$, which is a neighbor of $v_3$, would be visited by FC as it satisfies both the level and proximity constraints. Specifically, the level of $v_4$ equals $2$, which is no less than that of $v_3$; furthermore, $v_4$ and $v_1$ are contained in the same ($3 {\times} 3$)-cell region in $R_3$. Note that, although FC ignores $v_8$, the correctness of the query result is not affected, since $v_8$ is not on the shortest path from $v_1$ to $v_6$.

In general, the proximity constraint guarantees that in each level $i$ of the node hierarchy, FC only traverses the nodes contained in two ($5 {\times} 5$)-cell regions, which are centered at the source $s$ and destination $t$ of the query, respectively. In particular, the region centered at $s$ (resp.\ $t$) is the union of all ($3 {\times} 3$)-cell regions that cover $s$ (resp.\ $t$). This, when combined with the level constraint, ensures that FC is worst-case efficient in terms of query time, as will be shown in Section~\ref{sec:fc-complexity}.


\subsection{Complexity Analysis} \label{sec:fc-complexity}

In this section, we will prove that FC takes $O(h n)$ space, and it answers any distance query in $O(h^2)$ time, where $h$ is the maximum level in the node hierarchy $H$, and $n$ is the number of nodes in the road network $G$. In addition, we will discuss the pre-computation time of FC.

\header {\bf Query Time.} As explained in Section~\ref{sec:fc-query}, FC answers any distance query by two traversals on the node hierarchy $H$, starting from the source $s$ and destination $t$ of the query, respectively. Due to the level and proximity constraints, each traversal of FC visits any level of $H$ at most once; in addition, for the $i$-th level ($i \in [0, h]$), each traversal only examines the nodes in a ($5 {\times} 5$)-cell region in the grid $R_{i+1}$. A natural question is: How many level-$i$ nodes are there in the ($5 {\times} 5$)-cell region? The following lemma provides an answer.
\begin{lemma} \label{lmm:fc-node-num}
Any ($\alpha {\times} \alpha$)-cell region in $R_i$ contains $O(\alpha^2 \a)$ level-$i$ nodes in $H$, where $\a$ is the arterial dimension of $G$.
\end{lemma}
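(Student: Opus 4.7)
The plan is to charge each level-$i$ node in the given $(\alpha \times \alpha)$-cell region $B'$ to an arterial edge of some nearby $(4 \times 4)$-cell region, and then bound the total via Assumption~\ref{assu:def-ae}. By unwinding definitions, every level-$i$ node $v \in B'$ must be incident to at least one level-$i$ edge $e$, and every level-$i$ edge belongs to $A_i$; hence $e$ is an arterial edge of some $(4 \times 4)$-cell region $B$ of $R_i$. The proof then reduces to bounding the number of distinct such regions $B$ that can arise and multiplying by $\lambda$.

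The crux is a locality claim: the cell of $v$ in $R_i$ must lie in (or within an absolute constant number of cells of) $B$. I would obtain this from Definition~\ref{def:def-ae}: $e$ lies on a spanning path $P$ of $B$, which by the definition of a local path has all of its vertices inside $B$ with at most one edge crossing $\partial B$. Since $e$ is forced to intersect an interior bisector of $B$, its two endpoints lie on opposite sides of that bisector, so both endpoints are pinned close to the interior of $B$; in particular, the cell containing $v$ lies in (or at $O(1)$ distance from) $B$. Granting this, a standard packing argument finishes: each cell of $R_i$ belongs to at most $4 \times 4 = 16$ distinct $(4 \times 4)$-cell regions, so the number of regions $B$ that share a cell with $B'$ is at most $16 \alpha^2 = O(\alpha^2)$.

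Finally, Assumption~\ref{assu:def-ae} bounds each such $B$ by $\lambda$ arterial edges, each contributing $2$ endpoints, giving a total of $O(\alpha^2) \cdot 2\lambda = O(\alpha^2 \lambda)$ level-$i$ nodes in $B'$. The main obstacle I foresee is the locality step: I need to rule out the edge case in which an arterial edge happens to be the unique boundary-crossing edge of its spanning path, in which case one endpoint could a priori sit outside $B$. A short geometric argument exploiting the fact that $e$ must also cross an interior bisector of $B$ should confine that stray endpoint to a cell at most constant-distance outside $B$, which only inflates the packing constant and preserves the $O(\alpha^2 \lambda)$ bound.
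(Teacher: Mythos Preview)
Your overall strategy---charge each level-$i$ node in $B'$ to an arterial edge of some $(4\times4)$-cell region and then bound the number of contributing regions by $O(\alpha^2)$---is exactly the paper's approach, and the packing count is correct.

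The gap is precisely where you flag it, but your proposed patch does not work. Nothing in the model bounds the $L_\infty$ length of an edge in units of $R_i$-cells: an edge $e=\langle v,w\rangle$ can cross the bisector of $B$ while $v$ sits arbitrarily many cells outside $B$. The bisector is just a line; lying on one side of it imposes no proximity to $B$, so the inference ``$e$ crosses an interior bisector, hence both endpoints are at $O(1)$ cells from $B$'' is simply false. Consequently you cannot bound the number of regions $B$ by restricting to those near $B'$.

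The paper closes this gap differently. Rather than trying to confine $v$ near the given $B$, it shows that whenever $v\in B'$ happens to be the exterior endpoint of the boundary-crossing edge of a spanning path of some region $B_2$ disjoint from $B'$, one can exhibit a \emph{different} $(4\times4)$-cell region $\tilde B$ that contains $v$ and for which the same edge $e$ is again arterial (intuitively, slide a $(4\times4)$ window so that $v$ sits in one strip and the other endpoint of $e$, which is inside $B_2$ past its bisector, lies beyond the opposite strip; the relevant path is then a spanning path of $\tilde B$ and $e$ crosses a bisector of $\tilde B$). Since $v\in B'$, this $\tilde B$ is contained in the $((\alpha+6)\times(\alpha+6))$-cell inflation of $B'$, and there are only $(\alpha+3)^2$ such regions. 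Multiplying by $2\lambda$ endpoints per region yields the $O(\alpha^2\lambda)$ bound. Your argument becomes complete once you replace the distance-confinement step with this region-swap.
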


To explain the rationale behind Lemma~\ref{lmm:fc-node-num}, recall that each level-$i$ node in $H$ is adjacent to an arterial edge in a ($4 {\times} 4$)-cell region in $R_i$. Furthermore, each ($4 {\times} 4$)-cell region in $R_i$ has at most $\a$ arterial edges. For any ($\alpha \times \alpha$)-cell region in $R_i$, it can overlap with $O(\alpha^2)$ regions of $4 \times 4$ cells, and hence, it contains $O(\alpha^2 \a)$ level-$i$ nodes.

Observe that any ($5 {\times} 5$)-cell region in $R_{i+1}$ corresponds to a ($10 \times 10$)-cell region in $R_i$. By Lemma~\ref{lmm:fc-node-num}, this region contains $O(\a)$ level-$i$ nodes in $H$. In other words, the number of level-$i$ nodes visited by each traversal of FC is $O(\a)$. Given that $H$ has $h+1$ levels, the total number of nodes traversed by FC is $O(h\a)$.

Next, we will show that each node in $H$ has $O(h\a)$ edges that satisfy the level constraint. (The edges that violate the constraint can be removed from $H$ beforehand, as they would never be traversed by FC for any query). Consider any level-$i$ node $u$, and any node $v$ whose level is at least $i$. By the way that $H$ is constructed, there is a shortcut connecting $u$ to $v$, if and only if the shortest path between $u$ and $v$ only goes through nodes at levels lower than $i$. Intuitively, this indicates that $u$ and $v$ should not be too far apart from each other; otherwise, the shortest path between $u$ and $v$ in $G$ would be a path that connects two distant locations, in which case the path might contain some highly important node at a level higher than $i$, due to which there would not be any shortcut between $u$ and $v$. More formally, we have the following lemma:

\begin{lemma} \label{lmm:fc-sliding}
Let $P$ be a shortest path in $G$, such that no ($3 {\times} 3$)-cell region in $R_i$ ($i \in [1, h]$) can cover all nodes in $P$ simultaneously. Then, $P$ must contain an arterial edge of some ($4 {\times} 4$)-cell region in $R_i$.
\end{lemma}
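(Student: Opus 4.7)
My plan is to exhibit a sub-path $P^* \subseteq P$ that qualifies as a spanning path of some $4 \times 4$-cell region $B$ in $R_i$. Once such a $P^*$ is produced, its two endpoints straddle the bisector of $B$ by construction, so at least one edge of $P^*$ must cross the bisector; Definition~\ref{def:def-ae} then identifies that crossing edge as an arterial edge of $B$, and this edge lies on $P$, proving the lemma.

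\textbf{Finding $P^*$ and $B$.} Because the cells of $P$'s nodes do not fit in any $3 \times 3$-cell region of $R_i$, the cell-bounding-box of $P$ has either width or height at least $4$. By the symmetry between rows and columns, I assume without loss of generality that the width is at least $4$. Walking along $P = (v_1, \ldots, v_k)$, I let $j^*$ be the smallest index for which $\{v_1, \ldots, v_{j^*}\}$ no longer fits in any $3 \times 3$-cell region. By minimality, the prefix $(v_1, \ldots, v_{j^*-1})$ fits inside some $3 \times 3$-cell region $B'$, while $v_{j^*}$ lies outside $B'$; without loss of generality $v_{j^*}$ is east of $B'$. I let $v_w$ be any prefix-node attaining the westmost column of the prefix, and take $P^* = (v_w, v_{w+1}, \ldots, v_{j^*})$. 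For $B$, I choose the $4 \times 4$-cell region whose west strip is exactly that westmost column, with rows aligned so that $B$ contains all of the prefix's row range and (whenever possible) $v_{j^*}$'s row. With this choice, $v_w$ lies in the west strip of $B$; and $v_{j^*}$ lies either in the east strip of $B$ (when its column is at most three east of the prefix) or strictly east of $B$ (when the ``gap'' edge $(v_{j^*-1}, v_{j^*})$ spans many columns). In either situation, $P^*$ is a sub-path of the shortest path $P$, hence itself a shortest path in $G$ and \emph{a fortiori} a local shortest path in $B$; and because the prefix portion $(v_w, \ldots, v_{j^*-1})$ lies entirely inside $B$, the only edge that can possibly cross $B$'s boundary is the gap edge, giving at most one boundary-crossing. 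Neither $v_w$ nor $v_{j^*}$ sits in a cell adjacent to $B$'s vertical bisector, since both lie in or beyond $B$'s outermost columns. Hence $P^*$ fulfils every requirement of a spanning path of $B$, and the conclusion follows as in the first paragraph.

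\textbf{Main obstacle.} The technically delicate part is aligning $B$'s four rows so that the prefix portion $(v_w, \ldots, v_{j^*-1})$ of $P^*$ really does fit inside $B$. This is immediate when $v_{j^*}$ only extends the width of the prefix's bounding box, but requires a short case analysis when $v_{j^*}$ simultaneously extends the height — in that scenario the row alignment of $B$ is essentially forced by where $v_{j^*}$ lies. A related subtlety is that the ``westmost column of the prefix'' need not coincide with the westmost column of $B'$ if $B'$ has slack, so I must work with the actual prefix extent rather than with $B'$ itself. Finally, the borderline situation where the gap edge spans only two columns needs the extended sub-path (not merely the gap edge on its own) to avoid placing an endpoint in a bisector-adjacent cell. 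These configurations together exhaust all ways the minimal witness $v_{j^*}$ can break the $3 \times 3$ fit, and in each the same recipe produces the required arterial edge.
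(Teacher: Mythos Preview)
Your proposal is correct and takes essentially the same approach as the paper. The paper packages your prefix-scanning construction into an explicit procedure called \emph{SlidingWindow} (Appendix~\ref{sec:slidingwindow})---scan $P$ until the minimal bounding box of the visited nodes first reaches width or height~$4$, then return a $4\times4$ region $B$ aligned so that the extremal-coordinate node sits in the appropriate strip, together with the sub-path between the two extreme-coordinate nodes---and the proof of Lemma~\ref{lmm:fc-sliding} is then a one-line invocation of that procedure; your discussion of row alignment and the simultaneous width/height case is in fact more explicit than what the paper spells out.
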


By Lemma~\ref{lmm:fc-sliding}, $u$ and $v$ must be covered in the same ($3 {\times} 3$)-cell region in $R_{i+1}$; otherwise, the shortest path between $u$ and $v$ must pass through a level-($i{+}1$) node, for which there cannot exist any shortcut between $u$ and $v$. This implies that $v$ must be in the ($5 {\times} 5$)-cell region in $R_{i+1}$ that is centered at $u$. By Lemma~\ref{lmm:fc-node-num}, this region contains $O(\a)$ level-$i$ nodes in $H$. With a similar analysis, it can be shown that the region also covers $O(\a)$ nodes at any level higher than $i$. Therefore, the total number of edges adjacent to $u$ is $O(h \a)$.

In summary, FC answers any distance query with two constrained Dijkstra search, each of which traverses $O(h\a)$ nodes and $O(h^2 \a^2)$ edges. As such, the time complexity of each traversal equals $O(h\a \log(h\a) + h^2 \a^2)$. Given that the arterial dimension $\a$ of the road network $G$ is a constant, the overall time complexity of FC is $O(h^2)$.

\header {\bf Space Complexity.} Recall that the node hierarchy contains $h+1$ levels, each of which contains $O(n)$ nodes. In addition, each node in $H$ has $O(h\a)$ edges. Therefore, the space consumption of FC is $O(h n)$ when $\a$ is constant.

\header {\bf Preprocessing Cost.} The pre-computation of FC consists of two steps: First, we identify the arterial edges in any ($4 {\times} 4$)-cell region in any grid $R_i$ ($i \in [1, h]$); After that, we decide the level of each node and we connect pairs of nodes with shortcuts. The identification of arterial edges requires computing the shortest paths in all ($4 {\times} 4$)-cell regions in all $R_i$, which incurs considerable overhead, especially when the granularity of the grid is low. Similarly, the construction of shortcuts is time consuming as it requires deriving a larger number of shortest paths (between nodes that are potentially far apart). Such significant preprocessing cost renders FC only applicable for small road networks. In Section~\ref{sec:ah}, we will address this issue and present a modified and scalable version of FC.

\subsection{Correctness Proof} \label{sec:fc-correctness}

Let $P = \langle v_1, v_2, \ldots, v_k\rangle$ be a shortest path in $G$. Let $P'$ be a path from $v_1$ to $v_k$ on the node hierarchy $H$, such that FC reports $l(P')$ as the distance from $v_1$ to $v_k$. We will prove the correctness of FC's result by showing that $l(P') = l(P)$. In particular, we will show that both $l(P') \ge l(P)$ and $l(P') \le l(P)$ hold.

\header {\bf Proving $\boldsymbol{l(P') \ge l(P)}$}. Recall that every shortcut on $H$ corresponds to a path in $G$. Therefore, if we replace each shortcut in $P'$ with the corresponding path, we can transform $P'$ into a path $P''$, such that (i) $P''$ does not contain any shortcut, (ii) $P''$ connects $v_1$ to $v_k$, and (iii) $l(P') = l(P'')$. On the other hand, we have $l(P'') \ge l(P)$, since $P$ is the shortest path from $v_1$ to $v_k$ in $G$. Therefore, $l(P') = l(P'') \ge l(P)$.

\header {\bf Proving $\boldsymbol{l(P') \le l(P)}$}. Assume for simplicity that $P$ contains a node $v_j$ ($j \in [1, k]$) whose level is higher than that of any other node on $P$. (Our analysis can be easily extended to the case when the highest-level node on $P$ is not unique.) Let $P_1 = \langle v_1, v_2, \ldots, v_j\rangle$ and $P_2 = \langle v_j, v_{j+1}, \ldots, v_k\rangle$. In the following, we will show that the node hierarchy $H$ contains a path $P'_1$ from $v_1$ to $v_j$ that has the same length with $P_1$. Furthermore, we will prove that the sequence of nodes on $P'_1$ satisfies both the level and proximity constraints, i.e., $P'_1$ can be identified by FC with a traversal starting from $v_1$. In a similar manner, it can be shown that $H$ contains a path $P'_2$ from $v_j$ to $v_k$, such that $l(P'_2) = l(P_2)$, and that $P'_2$ can be found by FC with a constrained Dijkstra search starting from $v_k$. This would lead to
\begin{displaymath}
l(P') \,\; \le \,\; l(P'_1) + l(P'_2) \,\; = \,\; l(P_1) + l(P_2) \,\; = \,\; l(P).
\end{displaymath}

Consider the path $P_1 = \langle v_1, v_2, \ldots, v_j\rangle$. Suppose that we remove from $P_1$ any node $v_i$ ($i \in [1, j]$) that has a smaller level than some node $v_a$ ($a < i$) preceding it. Let $S = \langle v'_1, v'_2, \ldots v'_b \rangle$ be the sequence of nodes remaining on $P_1$. We have $v'_1 = v_1$ (since no node precedes $v_1$), and $v'_b = v_j$ (since $v_j$ is the highest-level node in $P$). For instance, if $P_1$ contains six nodes $v_1$, $v_2$, $v_3$, $v_4$, $v_5$, $v_6$ at levels $1$, $0$, $2$, $2$, $1$, $3$, respectively, then $S = \langle v_1, v_3, v_4, v_6 \rangle$.

By the way that $S$ is constructed, for any $v'_i$ ($i \in [1, b{-}1]$), the shortest path from $v'_i$ to $v'_{i+1}$ contains only nodes whose levels are smaller than those of $v'_i$ and $v'_{i+1}$. As such, the node hierarchy $H$ would contain a shortcut from $v'_i$ to $v'_{i+1}$, and the length of the shortcut equals $dist(v'_i, v'_{i+1})$. This indicates that $H$ contains a path $P'_1 = \langle v'_1, v'_2, \ldots v'_b \rangle$ that connects $v_1$ to $v_k$, such that $l(P'_1) = l(P)$. Furthermore, $P'_1$ satisfies the level constraint, since the level of $v'_i$ ($i \in [1, b{-}1]$) is no larger than that of $v'_{i+1}$.



Assume to the contrary that $P'_1$ does not satisfy the proximity constraint. Then, there should exist a node $v'_a$ ($a \in [1, b{-}1]$) on $P'_1$, such that (i) $v'_a$ is at level $i$ ($i \in [0, h]$), but (ii) no ($3\times3$)-cell region in $R_{i+1}$ covers both $v'_1$ and $v'_{a+1}$. Then, by Lemma~\ref{lmm:fc-sliding}, the shortest path from $v'_1$ to $v'_a$ must contain an arterial edge $e$ of a ($4 {\times} 4$)-cell region in $R_{i+1}$, since none of the ($3 {\times} 3$)-cell regions in $R_{i+1}$ covers both $v'_1$ and $v'_a$. In that case, each endpoint of $e$ has a level at least $i+1$. In other words, on the shortest path from $v'_1$ to $v'_a$, there exists some node whose level is higher than that of $v'_a$ (recall that $v'_a$ is at level $i$). This contradicts the assumption that $v'_a$ has a level no lower than any node preceding it on $P_1$.

In summary, the node hierarchy $H$ contains a path $P'_1$ from $v_1$ to $v_j$, such that $P'_1$ has the same length with $P_1$ and satisfies both the level and proximity constraints. Therefore, FC can correctly identify the distance from $v_1$ to $v_j$ with a traversal starting from $v_1$. Similarly, we can show that FC can correctly compute the distance from $v_j$ to $v_k$ with a traversal starting from $v_k$. This proves the correctness of the query processing algorithm of FC.

\vspace{-1mm}
\section{Arterial Hierarchy} \label{sec:ah}

This section presents {\em Arterial Hierarchy (AH)}, a scalable indexing method built upon the FC approach introduced in Section~\ref{sec:fc}. Compared with FC, AH has the same space complexity, a similar time complexity for distance queries, but significantly smaller pre-computation cost. In addition, AH also supports shortest path queries in a worst-case efficient manner.


\subsection{Overview} \label{sec:ah-overview}

The main structure of AH is a node hierarchy $\H$ that resembles FC's node hierarchy $H$. In particular, both $\H$ and $H$ have $h+1$ levels, and both of their $i$-th levels ($i \in [1, h]$) are associated with a square grid $R_i$ of $2^{h+2 - i} \times 2^{h+2 - i}$ cells. However, AH and FC differ substantially in the ways that they decide node levels, construct shortcuts, and process queries.

\header
{\bf Differences in Node Levels.} To compute the level of each node, FC first imposes each $R_i$ on the road network $G$, and then computes the arterial edges in each ($4 {\times} 4$)-cell region in $R_i$, after which FC decides the node levels based on the arterial edges. As discussed in Section~\ref{sec:fc-complexity}, the derivation of arterial edges could incur significant overheads, since each ($4 {\times} 4$)-cell region in a coarse grid may cover a large number of nodes and edges in $G$.

In contrast, AH computes node levels with an incremental algorithm that substantially improves efficiency. Given $G$, it first imposes the grid $R_1$ on $G$. Based on $R_1$, it identifies a set of unimportant nodes in $G$, and it assigns them to level $0$ of the node hierarchy $\H$. Then, it removes a subset of the unimportant nodes from $G$, and constructs shortcuts between the remaining nodes. This results in a {\em reduced} graph $G_1$ that is considerably smaller than $G$. After that, AH recursively reduces $G_1$ into smaller graphs $G_2, G_3, \ldots G_h$, during which it assigns nodes to higher levels of $\H$. For the reduction from $G_{i}$ to $G_{i+1}$, AH needs to impose the grid $R_{i+1}$ on $G_{i}$ and compute the shortest paths in each ($4 {\times} 4$)-cell region. However, this computation is inexpensive since (i) $G_i$ has a much smaller size than $G$, and hence, (ii) each ($4 {\times} 4$)-cell region in $R_{i+1}$ contains only a small number of nodes and edges in $G_i$.

\header
{\bf Differences in Shortcuts.} FC creates only necessary shortcuts to ensure correct results for distance queries under the level and proximity constraints. In contrast, the shortcuts constructed by AH are not only for processing distance queries under the level and proximity constraints, but also for computing the actual shortest path between any two given nodes. Specifically, every shortcut $\langle v_a, v_c\rangle$ in AH's node hierarchy $\H$ is associated with a node $v_b$, such that (i) both $\langle v_a, v_b \rangle$ and $\langle v_b, v_c \rangle$ are edges in $\H$, and (ii) the length of $\langle v_a, v_c\rangle$ equals the lengths of $\langle v_a, v_b \rangle$ and $\langle v_b, v_c \rangle$ combined. In other words, $\langle v_a, v_c\rangle$ can be transformed into a two-hop shortest path $\langle v_a, v_b, v_c \rangle$. As such, given any path $P'$ in $\H$, we can transform $P'$ into a path in $G$, by recursively replacing each shortcut in $P'$ with its corresponding two-hop path.

For example, Figure~\ref{fig:ah-recon} illustrates a shortest path $\langle v_1, v_2, \ldots, v_6\rangle$ in $G$, as well as three shortcuts $\langle v_1, v_4\rangle$, $\langle v_2, v_4 \rangle$, and $\langle v_4, v_6\rangle$. The shortcut $\langle v_1, v_4\rangle$ is associated with the node $v_2$, since $v_1$ is directly connected with $v_2$ and $v_2$ is directly connected with $v_4$. Similarly, $\langle v_2, v_4 \rangle$ and $\langle v_4, v_6\rangle$ are associated with $v_3$ and $v_5$, respectively. Now suppose that, given a distance query from $v_1$ to $v_6$, AH identifies $P' = \langle v_1, v_4, v_6\rangle$ as the shortest path from $v_1$ to $v_6$ in $\H$. To derive the actual shortest path from $v_1$ to $v_6$ in $G$, AH first replaces the shortcut $\langle v_1, v_4 \rangle$ in $P'$ with a two-hop path $\langle v_1, v_2, v_4 \rangle$, since $\langle v_1, v_4 \rangle$ is associated with $v_2$. This transforms $P'$ into another path $\langle v_1, v_2, v_4, v_6\rangle$. After that, we can replace $\langle v_2, v_4 \rangle$ with $\langle v_2, v_3, v_4 \rangle$, and substitute $\langle v_4, v_6\rangle$ with $\langle v_4, v_5, v_6\rangle$. As such, we obtain the shortest path $\langle v_1, v_2, \ldots, v_6\rangle$ from $v_1$ to $v_6$ in $G$.

In general, given any shortest path query from a node $s$ to another node $t$, AH first computes the shortest path $P'$ from $s$ to $t$ in $\H$, and then it converts $P'$ into the corresponding path $P$ in the original road network. The conversion from $P'$ to $P$ takes only $O(k)$ time, where $k$ is the number of edges in $P$. This is because (i) for any shortcut in $\H$, we can identify its corresponding two-hop path in $O(1)$ time, and (ii) converting $P'$ to $P$ requires only $O(k)$ replacements of shortcuts.

\header
{\bf Differences in Query Processing.} Besides the aforementioned shortcuts (for reconstructing shortest paths), the node hierarchy $\H$ of AH also contains some extra shortcuts that can be leveraged for higher query efficiency. As a consequence, AH's query processing algorithm is slightly more sophisticated than FC's, as will be elaborated in Section~\ref{sec:ah-query}.

\subsection{Index Construction} \label{sec:ah-pre}

Similar to the case of FC, AH constructs its node hierarchy $\H$ in two steps: it first assigns each node in $G$ to a level in $\H$, and then it constructs shortcuts in $\H$ for query processing.

%
\header
{\bf Deciding Node Levels.}
Given the road network $G$, AH first imposes on $G$ the grid $R_1$, where each cell contains at most one node. After that, AH identifies all ($4 {\times} 4$)-cell regions in $R_1$ that cover at least one node in $G$. For each of the ($4 {\times} 4$)-cell region identified, AH computes the arterial edges of the region in $O(1)$ time, and it marks each endpoint of an arterial edge as a {\em level-$1$ core}. At the same time, AH assigns all unmarked nodes to level $0$ of the node hierarchy $\H$ since, intuitively, those nodes are less important than the level-$1$ cores. After that, if any ($4 {\times} 4$)-cell region $B$ contains a local shortest path $P$ from a level-$1$ core $u$ to another level-$1$ core $v$, such that $P$ only goes through unmarked nodes, then AH inserts into $G$ a shortcut $\langle u, v\rangle$ with the same length as $P$. We say that $\langle u, v\rangle$ is a shortcut generated from $B$, and we use $G_1$ to denote the modified version of $G$ with all shortcuts added.
Overall, the computation of level-$1$ cores and the construction of $G_1$ take only $O(n)$ time, since the number of non-empty ($4 {\times} 4$)-cell regions in $R_1$ is $O(n)$, and each of those regions contains $O(1)$ nodes and edges in $G$ (recall that $G$ is degree-bounded).

For example, given the road network $G$ and the grid $R_1$ in Figure~\ref{fig:def-grid}, assume that AH identifies $5$ level-$1$ cores: $v_7, v_8, v_9, v_{10}, v_{11}$. After adding shortcuts, $G$ is transformed into the graph $G_1$ in Figure~\ref{fig:ah-G0}. There exists a shortcut $\langle v_9$, $v_{10}\rangle$ in $G_0$ since (i) both $v_9$ and $v_{10}$ are level-$1$ cores, and (ii) in the ($4 {\times} 4$)-cell region $B$ illustrated in Figure~\ref{fig:def-grid}, the local shortest path between $v_9$ and $v_{10}$ goes only through $v_6$, which is unmarked. In general, the shortcuts in $G_1$ ensure that the level-$1$ cores form a connected graph even if we remove all unmarked nodes from $G_0$.

Given $G_1$, AH selects a subset of the level-$1$ cores in $G_1$ that are deemed more important than the others. The selected nodes are marked as the {\em level-$2$ cores}, while the remaining level-$1$ cores are assigned to level $1$ of $\H$. After that, AH converts $G_1$ into a smaller graph $G_2$ that retains all level-$2$ cores. This procedure is applied in a recursive manner: In the $i$-th recursion ($i \in [1, h-1]$), AH picks level-($i{+}1$) cores from the level-$i$ cores in $G_{i}$, and then assigns the un-picked ones to level $i$ of $\H$, after which it transforms $G_{i}$ into a smaller graph $G_{i+1}$.

A natural question is: Given $G_{i}$, how should AH select the level-($i{+}1$) cores from the level-$i$ cores? One straightforward solution is to construct a subgraph of $G_{i}$ that contains only the level-$i$ cores, and then compute the arterial edges in the subgraph to identify the more important nodes as level-($i{+}1$) cores. For example, given $G_1$ in Figure~\ref{fig:ah-G0}, we can first construct a subgraph of $G_1$ that contains only the five level-$1$ cores (i.e., $v_7, v_8, v_9, v_{10}, v_{11}$) and the edges connecting them (i.e., the five edges on the loop $\langle v_7, v_8, v_{10}, v_9, v_{11}, v_7\rangle$). After that, we impose the grid $R_2$ on the subgraph, compute the arterial edges, and then mark the endpoints of the arterial edges as level-$2$ cores. While this approach is intuitive, we find that (i) the resulting node hierarchy does not guarantee query correctness under the level and proximity constraints, and (ii) without the level and proximity constraints, it is difficult to achieve favorable asymptotic bounds on query time. To address this issue, we adopt a more careful approach to choose the level-($i{+}1$) cores without affecting the applicability of the level and proximity constraints in query processing. Specifically, our approach utilizes the concept of {\em border nodes}:
\begin{definition}[Border Nodes] \label{def:ah-border}
Let $B$ be a ($4 {\times} 4$)-cell region in $R_i$ ($i \in [1, h]$). A node $v$ in $G$ is a {\bf border node} of $B$, if (i) $v$ is not contained in the $2 \times 2$ cells centered at $B$, and (ii) $v$ is an endpoint of an edge in $G$ that intersects the boundary of the east, west, south, or north strip of $B$.
\end{definition}

\begin{figure}[t]
\begin{minipage}[t]{1.68 in}
\centering
\includegraphics[width = 1.64in]{./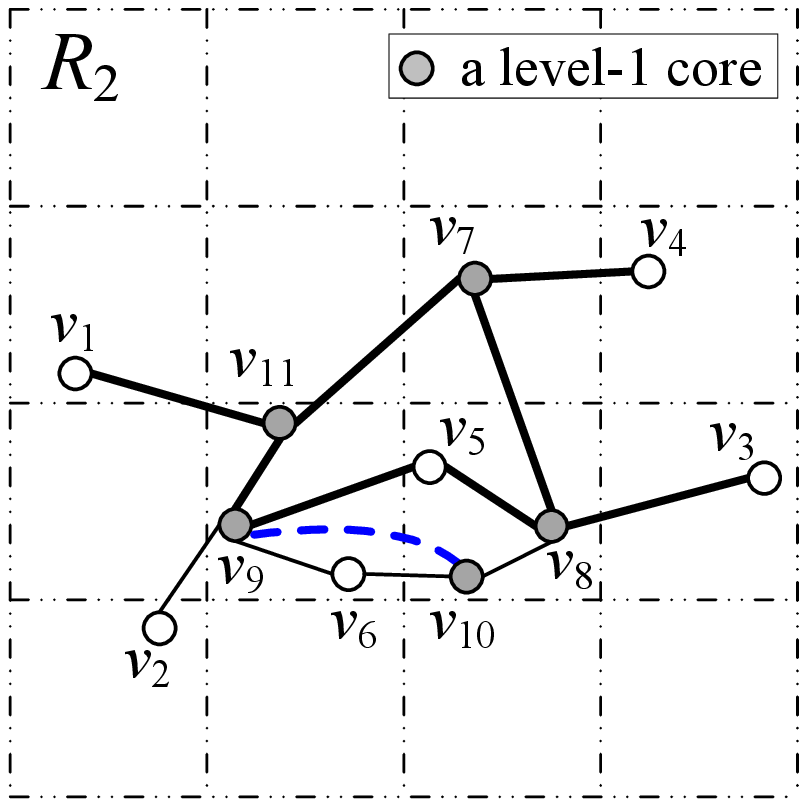}
\figcapup \caption{Graph $G_1$.} \figcapdown \label{fig:ah-G0}
\end{minipage}
\begin{minipage}[t]{1.68 in}
\centering
\includegraphics[width = 1.64in]{./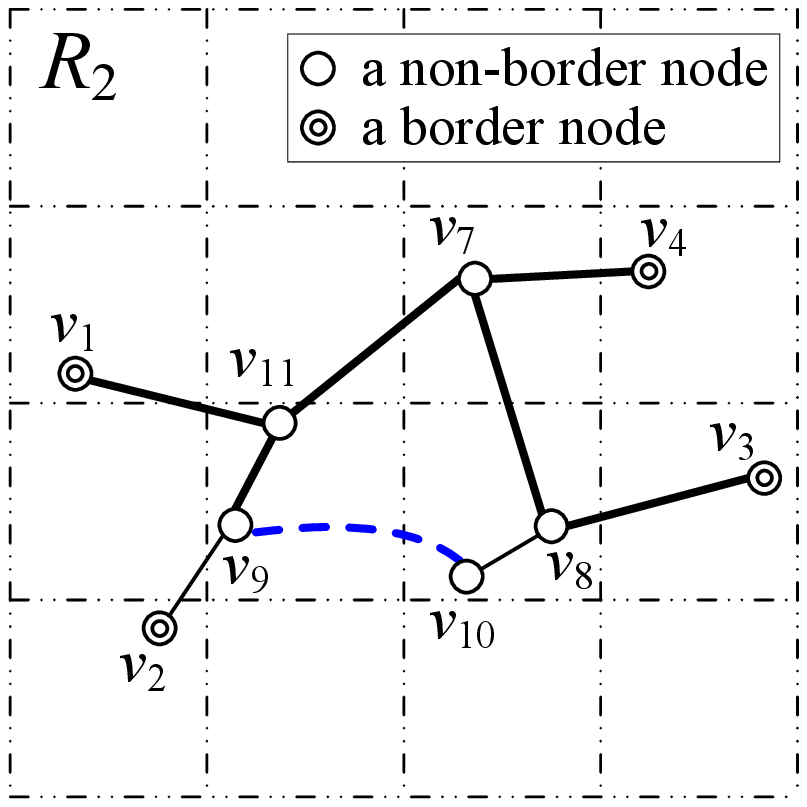}
\figcapup \caption{Reduce graph $G'_1$.} \figcapdown \label{fig:ah-G0reduced}
\end{minipage}
\vspace{-2mm}
\end{figure}

For example, in Figure~\ref{fig:def-grid}, $v_1, v_2, v_9, v_{11}$ are all border nodes of the ($4 {\times} 4$)-cell region $B$, since each of them is an endpoint of an edge that intersects the boundary of $B$'s west strip, and none of them is contained in the $2\times2$ cells centered at $B$. Similarly, $v_3, v_4, v_7, v_8$ are also border nodes of $B$. On the other hand, $v_6$ and $v_{10}$ are not border nodes of $B$, since they are not adjacent to any edge that intersects the boundaries of $B$'s four strips.

To select level-($i{+}1$) cores from $G_{i}$, we first reduce $G_{i}$ by removing any node in $G_{i}$ that is neither a level-$i$ core nor a border node of any ($4 {\times} 4$)-cell region in $R_{i+1}$. We use $G'_{i}$ to denote the reduced graph thus obtained. For instance, given $G_1$ in Figure~\ref{fig:ah-G0}, we would remove $v_5$ and $v_6$, since none of them is a level-$1$ core or a border node in $R_2$. Figure~\ref{fig:ah-G0reduced} illustrates the reduced graph $G'_1$, with the border nodes in $R_2$ highlighted.

Given $G'_{i}$, we impose $R_{i+1}$ on $G'_{i}$ and inspect each ($4 {\times} 4$)-cell region in $R_{i+1}$ that contains at least one node. For each such region $B$, we compute every spanning path of $B$ (see Definition~\ref{def:def-ae}) that satisfies two conditions:
\begin{enumerate} [itemsep = 0.7mm, topsep = 1.5mm]
\item {\em Border Condition:} The two endpoints of the path are border nodes of $B$, while the other nodes are all level-$i$ cores.

\item {\em Coverage Condition:} Every shortcut on the path is generated from a region completely covered by $B$.
\end{enumerate}
For example, in the ($4 {\times} 4$)-cell region in Figure~\ref{fig:ah-G0reduced}, the spanning path $\langle v_2, v_9, v_{10}, v_8, v_3 \rangle$ satisfies both the border and coverage conditions, since (i) both $v_2$ and $v_3$ are border nodes, and (ii) the only shortcut on the path, $\langle v_9, v_{10} \rangle$, is generated from the region $B$ in Figure~\ref{fig:def-grid}, which is contained in the current ($4 {\times} 4$)-cell region.

%


For each spanning path $P$ that fulfills the border and coverage conditions, if it connects the west and east (resp.\ north and south) strips of $B$, we identify the edge\footnote{If multiple edges or shortcuts in $P$ intersect the bisector, we choose an arbitrary one among them as the pseudo-arterial edge.}
in $P$ that intersects $B$'s vertical bisector (resp.\ horizontal bisector) as a {\em pseudo-arterial edge} of $B$. Observe that each pseudo-arterial edge of $B$ corresponds to a path in $G$ that contains an arterial edge of $B$. Intuitively, this indicates the importance of pseudo-arterial edges in the reduced graph $G'_{i}$. Accordingly, we mark the two endpoints of every pseudo-arterial edge as level-($i{+}1$) cores, and we assign all unmarked level-$i$ cores to the $i$-th level of the node hierarchy $\H$. After that, for any local shortest path in a ($4 {\times} 4$)-cell region $B'$, if (i) the two endpoints of the path are either level-($i{+}1$) cores or border nodes of $B'$, and (ii) other than its endpoints, the path does not go through any level-($i{+}1$) core, then we insert into $G'_{i}$ a shortcut between $u$ and $v$ with the same length as the local shortest path. Once all such shortcuts are added, we define the resulting graph as $G'_{i+1}$, and use it to recursively compute higher-level nodes in $\H$.

It remains to show that we can efficiently derive the pseudo-arterial edges and construct shortcuts in $G'_{i}$. Let $B$ be a ($4 {\times} 4$)-cell region in $R_{i+1}$, and $u$ be a border node of $B$. Suppose that we invoke Dijkstra's algorithm to start a traversal of $G'_{i}$ from $u$; for each node visited, we follow the outgoing edges of the node, ignoring any edge that violates the border condition or coverage condition. Once the traversal terminates, we can obtain the spanning paths of $B$ starting from $u$, as well as the pseudo-arterial edges on those edges. Similarly, with a traversal from $u$ that follows only the incoming edges of each node, we can compute the desired spanning paths of $B$ ending at $u$, along with the pseudo-arterial edges therein. By repeating this process on all border nodes of $B$, we can derive the set of all pseudo-arterial edges in $B$. With the same traversal algorithm, we can construct all shortcuts in $B$ using two traversals from each border node of $B$.

%
%
%
%

\header
{\bf Creation of Shortcuts.} After the level of each node is decided, AH adds shortcuts in the node hierarchy $\H$ to facilitate query processing. The construction of shortcuts requires as input a strict total order on the nodes in the same level of $\H$. We will elaborate our ordering approach in Section~\ref{sec:ah-optimize}, but in general, any strict total order can be used without affecting the space and time complexities of AH. For our discussion that follows, it suffices to know that less important nodes tend to precede more important nodes in our strict total order. For convenience, we define a rank for each node in $G$, such that a node $u$ ranks lower than another node $v$, if (i) $v$ is at a higher level than $u$, or (ii) $u$ and $v$ have the same level, but $u$ precedes $v$ in the strict total order.

AH constructs shortcuts in $\H$ in an incremental manner similar to the algorithm for deciding node levels. In particular, it first inspects $G$, and inserts into $\H$ a set of shortcuts that concern level-$0$ nodes. After that, it reduces $G$ to a smaller graph $\G_1$. Subsequently, it recursively reduces $\G_{i}$ into another graph $\G_{i+1}$ ($i \in [1, h-1]$), during which it constructs shortcuts that concern nodes at the $i$-th level of $\H$. In the following, we will elaborate the reduction from $\G_{i}$ to $\G_{i+1}$ ($i \in [0, h-1]$), assuming $\G_0 = G$. For convenience, we define the {\em level} of every edge in $\G_0$ as $-1$.

Given $\G_{i}$, AH first imposes the grid $R_{i+1}$ on $\G_{i}$. For each node $u \in \G_i$, AH inspects the ($5 {\times} 5$)-cell region $C$ centered at $u$, as well as the subgraph of $\G_i$ that consists of any level-($i{-}1$) edge overlapping with $C$. Then, AH computes two {\em shortest path trees (SPT)} of the subgraph, as defined in the following:
\begin{definition}[Shortest Path Trees (SPT)] \label{def:ah-spt}
Let $G$ be a graph, and $T$ be a directed spanning tree of $G$ rooted at a node $u$. $T$ is a {\bf forward SPT} of $G$, if $T$ contains the shortest path from $u$ to any node in $G$. On the other hand, if $T$ contains the shortest path from any node in $G$ to $u$, then $T$ is a {\bf backward SPT} of $G$.
\end{definition}

Let $T_f$ (resp.\ $T_b$) be the forward (resp.\ backward) SPT of the aforementioned subgraph that is rooted at $u$. Observe that $T_f$ (resp.\ $T_b$) can be computed by one traversal of the subgraph using Dijkstra's algorithm. Let $v$ be any node in $T_f$, such that $u$ ranks lower than $v$ but higher than any ancestor of $v$ in $T_f$. For any such $v$, AH generates a shortcut $\langle u, v\rangle$ with a length equal to the distance from $u$ to $v$ in $T_f$. In addition, AH associates $\langle u, v\rangle$ with a node $w$ on the path from $u$ to $v$, such that $w$ ranks higher than any node on the path except $u$ and $v$. This is to indicate that, when answering shortest path queries, AH can replace $\langle u, v\rangle$ with a two-hop path $\langle u, w, v\rangle$. (Our algorithm guarantees that such a two-hop path always exists.) Similarly, for any node $v$ in $T_b$, AH creates a shortcut $\langle v, u\rangle$, if $u$'s rank is lower than $v$'s but higher than those of $v$'s ancestors in $T_b$. Furthermore, the shortcut is associated with the node $w'$ that ranks the highest among $v$'s ancestors except $u$.
%
We refer to the shortcuts constructed above as {\em level-$i$ edges}\footnote{If multiple shortcuts are constructed from one node $u$ to another node $v$, AH retains only the shortest one.}. Intuitively, these shortcuts connect each level-$i$ node $u$ directly to its nearby higher-rank nodes. By following these shortcuts during query processing, AH can avoid visiting less important nodes, which helps improve efficiency.

Besides the level-$i$ edges, AH creates a shortcut from $u$ and to a node $v$ in $T_f$ if (i) $u$ and $v$ are both at level $i$ or above, and (ii) all ancestors of $v$ except $u$ are below level $i$. Likewise, if $T_b$ contains a node $v$ with a level at least $i$, such that $u$ is the only ancestor of $v$ at level $i$ or higher, then AH generates a shortcut from $v$ to $u$. These shortcuts are to ensure that $\G_i$ would remain connected when we reduce $\G_i$ by removing some nodes below level $i$, as will be clarified shortly.

When $i > 0$ (i.e., $\G_{i}$ is produced from a previous reduction step), AH also generates some extra shortcuts (referred to as {\em elevating edges}), in a manner slightly different from the construction of level-$i$ edges. First, AH inspects each node $u$ in $\G_{i}$ at a level lower than $i$, and it examines the ($5 {\times} 5$)-cell region $C$ in $R_{i}$ that is centered at $u$. Then, AH constructs a subgraph of $\G_{i}$ that comprises of all level-$i$ edges covered by $C$, {\em as well as all edges that connect $u$ with any node at level $i$ or above}. After that, AH computes the subgraph's forward and backward SPTs rooted at $u$. Let $P_f$ be any path in the forward SPT that connects $u$ to a node outside of $C$, and let $v$ be the first node on $P_f$ at level $i$ or above (our algorithm ensures that such $v$ always exits).
AH constructs a shortcut from $\langle u, v\rangle$, and associates it with the node that immediately follows $u$ on $P_f$, {\em if $u$ is below level $i-1$}. On the other hand, if $u$ is at level $i-1$, then the shortcut is associated with the first node on $P_f$ that ranks higher than $u$.
This shortcut is constructed to enable AH to efficiently traverse from $u$ to the $i$-th level of $\H$.
Similarly, if the backward SPT contains a path $P_b$ that links $u$ with a node located beyond $C$, AH creates a shortcut $\langle v, u\rangle$, where $v$ is the node closet to $u$ on $P_b$ among those at level $i$ or above. If $u$ is below level $i-1$, the shortcut is associated with the node that immediately precedes $u$ on $P_b$; otherwise, it is associated with the node that is closest to $u$ on $P_b$ among those with higher ranks than $u$.




Once all level-$i$ edges and elevating edges are created, they are inserted into both $\G_i$ and $\H$. After that, AH reduces $\G_i$ by retaining only (i) the border nodes in $R_{i+2}$ and (ii) nodes at level $i$ or above. The resulting graph is defined as $\G_{i+1}$ and is fed into the next reduction step.

\subsection{Query Processing}\label{sec:ah-query}

The query processing algorithm of AH is similar to that of FC. In particular, for distance query from a node $s$ to another node $t$, AH also answers the query with two traversals of the node hierarchy $\H$ starting from $s$ and $t$, respectively. As with the case of FC, each traversal of AH is performed with a constrained version of Dijkstra's algorithm. However, the constraints adopted by AH are slightly different: It adopts the proximity constraint (see Section~\ref{sec:fc-query}) and a {\em rank constraint} as follows:
\begin{itemize}
\item {\em Rank Constraint:} When the traversal from $s$ (resp.\ $t$) visits a node $u$, it ignores any neighbor of $u$ that ranks lower than $u$.
\end{itemize}
Intuitively, the rank constraint is a refined version of the level constraint, in that it takes into account not only the levels of nodes but also the strict total order defined on each level of $\H$. It leads to higher query efficiency as it helps AH bypass a larger number of relative unimportant nodes during query processing.

In addition, AH also exploits the elevating edges in $\H$ (see Section~\ref{sec:ah-pre}) to reduce query cost, based on the following lemma:
\begin{lemma} \label{lmm:ah-longpath}
For any two nodes $u,v \in G$, if no ($3 {\times} 3$)-cell region in $R_i$ ($i \in [1, h]$) can cover $u$ and $v$ simultaneously, then the shortest path from $u$ to $v$ must go through a node at level $i$ or above.
\end{lemma}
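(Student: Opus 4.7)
The plan is to reduce the claim to the already-stated Lemma~\ref{lmm:fc-sliding} and then translate the conclusion about arterial edges in $G$ into a conclusion about levels in $\H$ via the inductive construction of AH.

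First, I would observe that since $u$ and $v$ both lie on any shortest path $P$ from $u$ to $v$, the hypothesis that no $(3{\times}3)$-cell region in $R_i$ covers both $u$ and $v$ immediately implies that no such region covers all of $P$. Lemma~\ref{lmm:fc-sliding} then gives an arterial edge $e$ of some $(4{\times}4)$-cell region $B$ in $R_i$ lying on $P$. The remaining task is to show that the subpath of $P$ inside $B$ contains a node which AH assigns to level $i$ or higher in $\H$.

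To do this, I would prove by induction on $j\in[1,i]$ the following strengthened statement: for any shortest path $P$ in $G$ and any $(4{\times}4)$-cell region $B$ in $R_j$, if $P$ contains an arterial edge of $B$, then the portion of $P$ inside $B$ contains a node that AH marks as a level-$j$ core. The base case $j=1$ is immediate from Definition~\ref{def:def-ae} and the fact that AH explicitly marks the endpoints of every arterial edge in $R_1$ as level-$1$ cores. For the inductive step from $j-1$ to $j$, I would trace the subpath of $P$ inside $B$ through the recursive reduction $G\rightarrow G_1\rightarrow G'_1\rightarrow\cdots\rightarrow G'_{j-1}$: by the inductive hypothesis applied at every smaller scale, the ``internal'' nodes of this subpath that survive in $G'_{j-1}$ are all level-$(j-1)$ cores, and the stretches of $P$ between consecutive survivors contract precisely to shortcuts that AH inserts. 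The result is a spanning path of $B$ in $G'_{j-1}$ whose endpoints are border nodes (it enters and leaves $B$ through an edge crossing $\partial B$) and whose interior nodes are level-$(j-1)$ cores (border condition), and whose shortcuts all come from sub-regions of $B$ (coverage condition). AH therefore derives a pseudo-arterial edge from this spanning path, and the endpoints of that pseudo-arterial edge are marked as level-$j$ cores; since they lie on the spanning path, they lie on the subpath of $P$ inside $B$.

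The main obstacle will be cleanly verifying the border and coverage conditions in the inductive step. In particular, the bookkeeping needed to guarantee that every shortcut traversed by the contracted subpath was generated from a $(4{\times}4)$-region that nests inside the current $B$ relies on the nested structure of the grids $R_1,\ldots,R_h$ together with the way AH filters its own shortcuts; this is where the argument has to be careful about which shortcuts can possibly appear on a geographically ``short'' subpath inside $B$ versus those that jump across $B$'s bisector from outside. Once that is dispatched, setting $j=i$ and combining with the output of Lemma~\ref{lmm:fc-sliding} yields a level-$\ge i$ node on $P$, which is precisely what the lemma asserts.
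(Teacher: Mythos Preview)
Your overall strategy---induction on the level $j$, showing that the relevant subpath of $P$ appears as a spanning path in the reduced graph $G'_{j-1}$ and hence gets a level-$j$ core---matches the paper's. In the paper this is packaged as Statement~4 of a large auxiliary lemma (Lemma~\ref{lemma:spanning_path}), and Lemma~\ref{lmm:ah-longpath} is then a one-line corollary obtained by taking $B_s$ to be the single $(4{\times}4)$-cell region in $R_h$ covering all of $G$.

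The gap in your proposal is that the inductive hypothesis you state is too weak to close the loop. You write that ``the stretches of $P$ between consecutive survivors contract precisely to shortcuts that AH inserts,'' but AH only inserts a shortcut between two level-$(j{-}1)$ cores when their local shortest path sits inside some $(4{\times}4)$-cell region of $R_{j-1}$ and goes through no other level-$(j{-}1)$ core. Two consecutive level-$(j{-}1)$ cores on $P$ could a priori be far apart in $R_{j-1}$, in which case no single shortcut exists; you then need to produce a \emph{further} level-$(j{-}1)$ core strictly between them. Your hypothesis (``if $P$ contains an arterial edge of $B$ then $P$ has a level-$j$ core inside $B$'') does not supply this, because the stretch in question need not contain an arterial edge of any region---it only has far-apart endpoints. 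The paper handles exactly this by carrying a stronger statement through the induction (far-apart endpoints $\Rightarrow$ level-$j$ core on the local shortest path), and, crucially, by simultaneously proving that the constrained Dijkstra on $G'_{j-1}$ actually recovers the true local shortest paths inside every $(4{\times}4)$-cell region of $R_j$ (the paper's Claims~1.1--1.6). Without that companion correctness claim, you cannot conclude that the contracted subpath you build is one of the spanning paths AH itself finds in $\mathcal{P}_{j,B}$, and hence cannot conclude that AH marks a level-$j$ core on it. You correctly flag the coverage-condition bookkeeping as the obstacle, but the fix is not just bookkeeping: it requires strengthening the induction to carry both the ``far-apart $\Rightarrow$ core'' statement and the ``shortcuts suffice for local shortest paths'' statement together.
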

Let $R_j$ ($j \in [1, h]$) be the coarsest grid where no ($3 {\times} 3$)-cell region contains both $s$ and $t$. By Lemma~\ref{lmm:ah-longpath}, the shortest path from $s$ to $t$ should pass through at least one node with a level at least $j$. This indicates that AH's traversal from $s$ would meet its traversal from $t$ at level $j$ or above. Therefore, if $s$ is a border node in $R_j$ (in which case $s$ has elevating edges to level $j$), then when we start the traversal from $s$, we can follow the elevating edges of $s$ to move directly to level $j$, ignoring any edge that connects $s$ to a node at a level lower than $j$. After that, we can continue the traversal from level $j$ under the rank and proximity constraints.

More generally, for any level-$i$ ($i < j$) node $v$ visited in the traversal from $s$, if $v$ is a border node in $R_j$, then we move along the elevating edges of $v$ to level $j$ or above, omitting any other edges of $v$. On the other hand, if $v$ is a border node in $R_{j'}$ instead of $R_j$  ($j' < j$), then we follow the elevating edges $v$ to level $j'$ or higher, i.e., we traverse as close to level $j$ as possible. Meanwhile, if $v$ does not have any elevating edges or $v$ is at a level at least $j$, then we traverse the edges of $v$ that satisfies the rank and proximity constraints. The same strategy is used when AH traverses from $t$. This traversal strategy reduces query time, since it enables AH to avoid visiting the low levels of node hierarchy $\H$.

So far we have only discussed distance queries. For any shortest path query from $s$ to $t$, AH first treats it as a distance query and computes the shortest path $P'$ from $s$ to $t$ in $\H$. After that, AH recursively replaces each shortcut in $P'$ with its corresponding two-hop path, which converts $P'$ to the actual shortest path from $s$ to $t$ in $G$, as explained in Section~\ref{sec:ah-overview}.

\subsection{Node Ranking and Selection}\label{sec:ah-optimize}

As mentioned, the shortcut construction algorithm of AH assumes that there is a strict total order on the nodes in the same level. While any strict total order can be used without affecting the asymptotic bounds of AH, we have found a heuristic ordering approach that leads to high practical performance. Specifically, for nodes in the $0$-th level of the node hierarchy $\H$, we adopt a random order; for nodes in the $i$-th level ($i \in [1, h]$) of $\H$, we derive their ordering based on information from the preprocessing procedure of AH. To explain, recall that AH decides node levels by recursively applying a reduction procedure on the road network $G$. During the $i$-th reduction step ($i \in [1, h-1]$), AH examines a graph that contains level-($i{-}1$) cores; It identifies a set $S_i$ of pseudo-arterial edges in the graph, marks the endpoints of those edges as level-$i$ cores, and then assigns all unmarked level-($i{-}1$) cores to the ($i{-}1$)-th level of $\H$.

We observe that the edges in $S_i$ are connected to some extend, and there are some level-$i$ cores that serve as hub nodes for the connections (i.e., they are adjacent to a sizable number of edges in $S_i$). Intuitively, those hub nodes are more important than the rest of the level-$i$ cores. Motivated by this, we order the level-$i$ cores using a {\em vertex cover} approach: we inspect the graph formed by the edges in $S_i$, and we compute a vertex cover of the graph using the linear-time $O(\log n)$-approximation algorithm \cite{IntroAlgo}. The output of the algorithm is a sequence $\xi$ of nodes in the graph, such that the $i$-th node $v$ in $\xi$ is adjacent to the largest number of edges that are disjoint from the first $i-1$ nodes. Based on $\xi$, we order the level-$i$ cores as follows: The $i$-th node in $\xi$ is given the $i$-th highest rank, and the level-$i$ cores not in $\xi$ are given the lowest ranks arbitrarily.

Interestingly, we find that if a level-$i$ core does not appear in $\xi$, then we can downgrade it to a level-($i{-}1$) core without affecting correctness or asymptotic performance of AH. Such downgrading reduces the number of high-level nodes in the node hierarchy $\H$, which in turn improves query efficiency, since the high levels of $\H$ are frequently traversed during query processing. Our implementation of AH adopts this downgrading approach to improve query performance.


\subsection{Space and Time Complexities}\label{sec:ah-complexity}

To establish the space and time complexities of AH, we first introduce a lemma that quantifies the densities of nodes in each level of AH's node hierarchy $\H$:
\begin{lemma} \label{lmm:ah-node-num}
Any ($\alpha {\times} \alpha$)-cell region in $R_i$ contains $O(\alpha^2 \a^2)$ nodes whose level in $\H$ are no lower than $i$, where $\a$ is the arterial dimension of $G$.
\end{lemma}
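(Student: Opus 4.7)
I would prove the lemma by showing the stronger statement that every $(4{\times}4)$-cell region $B$ in $R_i$ contains at most $O(\a^2)$ pseudo-arterial edges, and hence at most $O(\a^2)$ nodes that can serve as its level-$i$ cores; the lemma then follows because (i) every node at level $\geq i$ in $\H$ is, by the recursive selection of cores in Section~\ref{sec:ah-pre}, a level-$i$ core of some $(4{\times}4)$-cell region of $R_i$, and (ii) an $(\alpha{\times}\alpha)$-cell region in $R_i$ overlaps $O(\alpha^2)$ such $(4{\times}4)$-cell subregions. Summing the $O(\a^2)$ per-region bound over the $O(\alpha^2)$ subregions yields the claimed $O(\alpha^2 \a^2)$.

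To bound the number of pseudo-arterial edges in $B$, I would exploit the correspondence between pseudo-arterial edges and arterial edges spelled out in Section~\ref{sec:ah-pre}: every pseudo-arterial edge $e^* = \langle x, y\rangle$ of $B$ lies on a spanning path of $B$ in $G'_{i-1}$ whose expansion in $G$ is a local shortest path of $B$ that crosses the bisector, and therefore, by Definition~\ref{def:def-ae}, contains one of at most $\a$ arterial edges $e = \langle u, v\rangle$ of $B$ (Assumption~\ref{assu:def-ae}). Mapping each pseudo-arterial edge of $B$ to the arterial edge contained in its expansion defines a many-to-one map whose image has size at most $\a$, so it suffices to bound every preimage by $O(\a)$. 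Fixing an arterial edge $e$, I would observe that the west endpoint $x$ of any pseudo-arterial edge in the preimage is the last vertex of $G'_{i-1}$ preceding the bisector on the unique local shortest path of $B$ from some west-strip border node $s$ to $u$; a symmetric statement holds on the east side.

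The main obstacle will be capping the number of such west endpoints by $O(\a)$. A direct inductive count treating $B$ as an $(8{\times}8)$-cell region in $R_{i-1}$ would give only $O(\a^2)$ level-$(i-1)$-core candidates on the west side of $B$, producing a looser $O(\a^3)$ per-region bound that is insufficient for the lemma. To tighten this, I would invoke Lemma~\ref{lmm:fc-sliding} at the resolution of $R_{i-1}$: any candidate $x$ not cocovered with $u$ by a $(3{\times}3)$-cell region of $R_{i-1}$ would force the shortest path from $x$ to $u$ in $G$ to pass through an arterial edge of some $(4{\times}4)$-cell region of $R_{i-1}$, which by the coverage condition imposed on the shortcuts of $G'_{i-1}$ pins $x$ to an endpoint of one of those $O(\a)$ arterial edges. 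A symmetric bound on the east side completes the $O(\a^2)$ pseudo-arterial edge count per $(4{\times}4)$-region, and the stated $O(\alpha^2 \a^2)$ bound follows.
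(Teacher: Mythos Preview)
Your high-level plan coincides with the paper's: both reduce Lemma~\ref{lmm:ah-node-num} to the statement that every $(4{\times}4)$-cell region $B$ in $R_i$ contributes at most $O(\lambda^2)$ level-$i$ cores (this is the paper's Lemma~\ref{lemma:density}), and both obtain the $O(\lambda^2)$ by mapping each pseudo-arterial edge of $B$ to the arterial edge of $B$ in $G$ that its expansion contains, noting the image has size at most $\lambda$, and then bounding every preimage by $O(\lambda)$. The paper also handles, as you do implicitly, the covering of an $(\alpha{\times}\alpha)$ block by $O(\alpha^2)$ many $(4{\times}4)$ sub-regions (with a small enlargement to catch cores arising from regions straddling the boundary).

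The gap is in your preimage bound. You invoke Lemma~\ref{lmm:fc-sliding} to say that if $x$ and $u$ are not cocovered by a $(3{\times}3)$ block of $R_{i-1}$ then the path from $x$ to $u$ contains an arterial edge of some $R_{i-1}$ region, and you assert that the coverage condition then ``pins $x$ to an endpoint'' of one of those $O(\lambda)$ arterial edges. This step does not go through for AH. Lemma~\ref{lmm:fc-sliding} produces an arterial edge of $G$, but in AH the level-$(i{-}1)$ cores are endpoints of \emph{pseudo}-arterial edges computed in $G'_{i-2}$, not endpoints of arterial edges of $G$; there is no mechanism by which the presence of an arterial edge on the $x$--$u$ subpath forces $x$ itself to coincide with an arterial-edge endpoint. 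The coverage condition only tells you that the shortcut $\langle x,y\rangle$ was generated from some sub-region of $B$, which locates the contracted path inside $B$ but does not restrict $x$ to a set of size $O(\lambda)$.

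What the paper does instead is to look outward from the arterial endpoint $a$ rather than inward from $x$. It fixes $a$, observes that the local shortest paths from $a$ to the various spanning-path endpoints $t_j$ form a tree in $G$, and proves a dedicated lemma (Lemma~\ref{lemma:fixnode_path_set}) showing that each such branch covers a spanning path in $\mathcal{P}_{i-1,B'}$ for one of the at most $25$ $(4{\times}4)$ sub-regions $B'$ of $B$ in $R_{i-1}$. By Statement~2 of Lemma~\ref{lemma:spanning_path}, every such covered path supplies a level-$(i{-}1)$ core on the branch; since $B'$ has at most $\lambda$ arterial edges and the tree structure forces the \emph{first} level-$(i{-}1)$ core on each branch from $a$ (which is exactly the pseudo-arterial endpoint $Y_j$) to be determined by which of these $25\lambda$ choices the branch makes, the $Y_j$ take at most $25\lambda$ values, and symmetrically for the $X_j$. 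Your argument would need an AH-specific analogue of this step---something like Lemma~\ref{lmm:ah-longpath} or Statement~4 of Lemma~\ref{lemma:spanning_path}, combined with the tree structure at $a$---in place of the FC-level Lemma~\ref{lmm:fc-sliding}.
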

Our proof of Lemma~\ref{lmm:ah-node-num} is similar to that of Lemma~\ref{lmm:fc-node-num}. We first show that any ($\alpha {\times} \alpha$)-cell region $B$ in $R_i$ contains the endpoints of $O(\alpha^2 \a)$ arterial edges in $G$. After that, we prove that there exists a one-to-many mapping from the arterial edges in $G$ to the nodes in $B$ with levels at least $i$, such that each edge is mapped to $O(\lambda)$ nodes. Based on this, we show that any ($\alpha {\times} \alpha$)-cell region in $R_i$ contains only $O(\alpha^2 \a^2)$ nodes at level $i$ or above.


\vspace{-0.3mm} \header
{\bf Space Overhead.} Given Lemma~\ref{lmm:ah-node-num}, we can prove that each node in $\H$ has $O(h\a^2)$ elevating edges and $O(\a^2)$ non-elevating edges. This is because, by the preprocessing algorithm of AH, there is an elevating edge from a node $u$ to a level-$i$ node $v$, only if $u$ and $v$ are contained in the same ($4 {\times} 4$)-cell region in $R_i$. By Lemma~\ref{lmm:ah-node-num}, there exist $O(\a^2)$ such level-$i$ nodes. Since $\H$ contains only $h+1$ levels, the total number of elevating edges adjacent to $u$ is $O(h\a^2)$. Similarly, we can prove that each node in $\H$ has $O(\a^2)$ non-elevating edges. Therefore, the space overhead of AH is $O(hn\a^2)$, which reduces to $O(hn)$ when $\a$ is a constant.

\begin{table*}
\centering
\tblcapup \vspace{-1mm}
\begin{small}
\caption{Asymptotic performance of the state of the art.} \tblcapdown \vspace{-1mm}
\label{tbl:related-complexity}
\setlength{\extrarowheight}{0.1mm}
\begin{tabular}{|c|c|c|c|c|c|} \hline
{\bf Reference} & {\bf  Space } &
{\bf  Preprocessing} & {\bf Distance Query } & {\bf Shortest Path Query} & {\bf Remark}\\ \hline
\multirow{2}{*}{\cite{ms12}} &  \gape[t]{$O(n)$} & $O(n \log n)$ & $O(n^{0.5+\epsilon})$ & $O(k + n^{0.5+\epsilon})$ & \multirow{2}{5.4cm}{$S$ is a user-defined parameter in the range of $[n\log\log n, n^2]$. $D = l_{max}/l_{min}$, where $l_{max}$ (resp.\ $l_{min}$) is the largest (resp.\ smallest) road network distance between two nodes. $k$ is the number of edges in the shortest path between the source and destination of the query. $h$ is as defined in Section~\ref{sec:fc-pre}.}\\ \cline{2-5}
 & $O(S)$ & \gape[t]{$\tilde{O}(S)$} & $\tilde{O}(n/\sqrt{S})$ & $\tilde O(k + n/\sqrt{S})$  & \\ \cline{1-5}
\multirow{2}{*}{\cite{afg10}} &  \gape[t]{$O(n \log n \log D)$} & $O(n^2 \log n)$ & $O(\log^2 n \log^2 D)$ & $O(k + \log^2 n \log^2 D)$ & \\ \cline{2-5}
 & \gape[t]{$O(n \log n \log D)$} & $O(n^2 \log n)$ & $O(\log n \log D)$ & N/A & \\ \cline{1-5}
\cite{ssa08} &  $O(n\sqrt{n})$ & $O(n^2 \log n)$ & \gape[t]{$O(k\log n)$} & $O(k\log n)$ & \\ \cline{1-5}
this paper &  $O(hn)$ & $O(hn^2)$ & \gape[t]{$O(h \log h)$} & $O(k + h\log h)$ &  \\ \hline
\end{tabular}
\end{small}
\tbldown \vspace{-1mm}
\end{table*}

\vspace{-0.3mm} \header
{\bf Query Time.} AH answers any distance query with two traversals of $\H$, starting from the source $s$ and destination $t$ of the query, respectively. Due to the proximity constraint, in the $i$-th level of $\H$ ($i \in [0, h]$), each traversal of AH only visits the nodes in a ($5 {\times} 5$)-cell region in $R_{i+1}$. By Lemma~\ref{lmm:ah-node-num}, such a region only contains $O(\a^2)$ nodes at level $i$. Hence, the total number of nodes traversed by AH is $O(h \a^2)$. Furthermore, for each node $v$ visited during a traversal, AH either follows the elevating edges of $v$ to a certain level of $\H$, or moves along the non-elevating edges of $v$ that satisfy the rank and proximity constraints. As previously discussed, $v$ has $O(\a^2)$ elevating edges to each level of $\H$, and has $O(\a^2)$ non-elevating edges. Therefore, the total number of edges visited by AH is $O(h \a^4)$. Since each traversal is performed using Dijkstra's algorithm, its overall time complexity is $O(h\a^2 \log(h\a^2) + h\a^4)$. Consequently, when $\a$ is a constant, the time complexity of AH for a distance query is $O(h \log h)$.

To answer a shortest path query from $s$ to $t$, AH first processes its corresponding distance query to retrieve the shortest path $P'$ from $s$ to $t$ in $\H$, and then it transforms $P'$ into the actual shortest path $P$ from $s$ to $t$ in $G$. The transformation from $P'$ to $P$ takes $O(k)$ time, where $k$ is the number of edges in $P$. Therefore, AH requires $O(k + h \log h)$ time to answer a shortest path query.

\header
{\bf Preprocessing Cost.} The preprocessing algorithm of AH consists of three steps: (i) assigning nodes to each level of $\H$, (ii) deriving the strict total order on nodes at the same level, and (iii) creating shortcuts in $\H$. When assigning nodes to the $i$-th level of $\H$ ($i \in [0, h-1]$), AH inspects each non-empty ($4 {\times} 4$)-cell region in $R_{i+1}$, and construct a subgraph that consists of the level-$i$ cores and border nodes in the region. For each node in the subgraph, AH needs to apply Dijkstra's algorithm to traverse the subgraph a constant number of times. Given Lemma~\ref{lmm:ah-node-num} and the fact that each level-$i$ core or border node (i) has $O(\lambda^2)$ edges and (ii) is contained in a constant number of ($4 {\times} 4$)-cell region in $R_{i+1}$, it can be proved that AH requires $O(n^2 \lambda^2)$ time to assign nodes to level $i$ of $\H$. Meanwhile, AH takes only $O(n)$ time to derive the strict total order at level $i$ of $\H$, since the derivation is based on a linear time algorithm for vertex cover.

To construct shortcuts at the $i$-th level of $\H$, AH needs to inspect a graph $G^*_i$ reduced from $G$. For each node $u$ in $G^*_i$, AH examines the a ($5 {\times} 5$)-cell region in $R_{i+1}$ that is centered at $u$, and it creates shortcuts for $u$ by traversing the nodes in the region at level $i$ or above. Based on Lemma~\ref{lmm:ah-node-num}, it can be proved that the total cost of generating shortcuts for $u$ is $O(\a^2)$. As such, the time required to create shortcuts at level $i$ of $\H$ is $O(n \a^2)$.

Summing up the above analysis, we have the following theorem:
\begin{theorem} \label{thrm:ah-complexity}
Given a road network with a constant arterial dimension, AH takes $O(h n^2)$ time to construct an index that requires $O(h n)$ space. With the index, AH answers any distance query in $(h \log h)$ time and any shortest path query in $O(k + h \log h)$ time, where $k$ is the number of edges in the shortest path.
\end{theorem}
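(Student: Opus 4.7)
The plan is to derive all four bounds from the node-density estimate of Lemma~\ref{lmm:ah-node-num}, which plays the same structural role for AH that Lemma~\ref{lmm:fc-node-num} played for FC. First I would dispose of the space bound by classifying the edges incident to each node $u$ in $\H$ as either elevating or non-elevating. The construction in Section~\ref{sec:ah-pre} creates a non-elevating shortcut $\langle u,v\rangle$ only when $u$ and $v$ lie in a common $(5\times 5)$-cell region of some $R_{i+1}$, and Lemma~\ref{lmm:ah-node-num} caps the number of such $v$ at $O(\a^2)$. For elevating edges, at most $O(\a^2)$ destinations exist at each of the $h$ levels, giving $O(h\a^2)$ per node. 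Summing over all $n$ nodes yields $O(hn\a^2)=O(hn)$ when $\a$ is constant.

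Next I would bound the query times. For a distance query, the two concurrent traversals from $s$ and $t$ are subject to the proximity and rank constraints, which jointly force each traversal to touch only a single $(5\times 5)$-cell region per level of $\H$. The density lemma then caps the visited nodes at $O(\a^2)$ per level, or $O(h\a^2)$ altogether. Each visited node relaxes $O(\a^2)$ non-elevating edges, and elevating edges are followed only in the upward direction along a single path, so the total number of edge relaxations is $O(h\a^4)$. Dijkstra's algorithm therefore runs in $O(h\a^2\log(h\a^2)+h\a^4)$, which collapses to $O(h\log h)$ for constant $\a$. For a shortest path query, I would run the distance query first to obtain the corresponding path $P'$ in $\H$, and then reconstruct the actual path $P$ by recursively replacing each shortcut with its stored two-hop expansion; since every original edge of $P$ is produced exactly once by the recursion, this adds only $O(k)$ overhead and yields the claimed $O(k+h\log h)$ bound.

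For preprocessing I would bound the three phases separately. The level-assignment phase examines every non-empty $(4\times 4)$-cell region of each $R_{i+1}$ and runs a constrained Dijkstra from every border node or level-$i$ core in the region; using Lemma~\ref{lmm:ah-node-num} to bound the node counts and summing the squared counts gives $O(n^2\a^2)$ per level, or $O(hn^2)$ in total when $\a$ is constant. The vertex-cover-based ordering is linear per level via the standard approximation algorithm \cite{IntroAlgo}, contributing only $O(hn)$. Finally, the shortcut-creation phase computes forward and backward SPTs from each node of $\G_i$, restricted to a $(5\times 5)$-cell region that contains $O(\a^2)$ nodes and edges; this costs $O(n\a^2)$ per level and $O(hn\a^2)$ overall. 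The level-assignment phase therefore dominates, yielding the claimed $O(hn^2)$.

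The hard part will be Lemma~\ref{lmm:ah-node-num} itself. Unlike Lemma~\ref{lmm:fc-node-num}, whose proof reduces directly to counting arterial edges of the original road network, Lemma~\ref{lmm:ah-node-num} must cope with the fact that AH's levels are defined on the sequence of reduced graphs $G'_1, G'_2, \ldots$, and that pseudo-arterial edges can be inherited through multiple rounds of reduction. I would construct a one-to-many mapping that charges each node at level $i$ or above in an $(\alpha\times\alpha)$-cell region of $R_i$ to an original arterial edge of $G$ inside the same region, and then argue that every such arterial edge supports only $O(\a)$ high-level nodes. Once this combinatorial statement is secured, the four bounds in Theorem~\ref{thrm:ah-complexity} follow by routine accounting.
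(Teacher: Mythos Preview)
Your proposal is correct and follows essentially the same route as the paper: both derive all four bounds from Lemma~\ref{lmm:ah-node-num} via the same elevating/non-elevating edge classification for space, the same proximity-constraint-plus-density accounting for query time, and the same three-phase decomposition (level assignment, vertex-cover ordering, shortcut creation) for preprocessing. The only item you leave implicit is query \emph{correctness}, which the paper bundles into its proof of Theorem~\ref{thrm:ah-complexity} via a separate unimodal-rank-path lemma; the complexity accounting itself matches.
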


\section{Related Work}\label{sec:related}

Numerous techniques (e.g., \cite{ssa08,ssa09,ss10,gss08,bfm06,tsp11,rt10,gh05,gkw06,kmw10,hkm06,dss09,fr06,afg10,bfs07,jp02,jhr98,gkr04,ms12}) have been proposed for processing shortest path and distance queries on road networks. Many of these techniques focus on practical performance, and they are mostly heuristic-based. For example, ALT \cite{gh05} pre-computes the road network distances from each node to a fixed set of nodes (referred to as {\em landmarks}), and then utilizes those pre-computed distances to reduce the search space of each query. Hiti \cite{jp02} partitions the road network into vertex-disjoint subgraphs, and then pre-computes the shortest paths that connect different subgraphs to facilitate query processing. We refer the reader to \cite{wxd12} for a survey of the existing heuristic-based techniques.

In addition, there also exists a large number of worst-case efficient algorithms for shortest path and distance queries (see \cite{fr06,gkr04,kmw10,ms12,ssa08,ssa09} and the references therein). Most of these algorithms assume that the road network is a planar graph with non-negative weights, while some recent work \cite{afg10,ssa08,ssa09} adopts more subtle assumptions on the road network to derive tighter bounds on space and time complexities. Table~\ref{tbl:related-complexity} lists the performance bounds of several most recent algorithms. Compared with the state of the art, our method offers superior query efficiency while incurring moderate costs of space and pre-computation.

The work most related to ours is by Bast et al.\ \cite{bfm06}, Abraham et al.\ \cite{afg10}, and Geisberger et al.\ \cite{gss08}. Bast et al.\ \cite{bfm06} observe that, in practice, there often exist a small set $S$ of nodes in the road network (referred to as {\em transit nodes}), such that any shortest path connecting two distant locations must pass through at least one node in $S$. Based on this observation, Bast et al.\ propose a heuristic solution for answering shortest path and distance queries. However, the proposed solution is shown to be flawed in that it may return incorrect query results \cite{wxd12}. Our notion of arterial dimension is motivated by Bast et al.'s observation, but our definition of arterial edges is considerably different from Bast et al.'s formulation of transit nodes.

Abraham et al.\ \cite{afg10} introduce a theoretical abstraction of Bast et al.'s observation, based on which they propose several worst-case efficient algorithms for shortest path and distance queries. The proposed algorithms adopt an assumption that is similar in spirit to our Assumption~\ref{assu:def-ae}, but is more elegant in a theoretical sense. Nevertheless, the assumption adopted by Abraham et al.\ has not been tested on any real road networks, while our Assumption~\ref{assu:def-ae} is backed by empirical evidence from real datasets, as shown in Section~\ref{sec:def}. Furthermore, Abraham et al.'s algorithms require pre-computing the shortest path between any pair of nodes in the road network, which renders them inapplicable even for moderate-size datasets.

Geisberger et al.\ \cite{gss08} propose a road network index called the {\em Contraction Hierarchies}, which (i) heuristically imposes a total order on the road network nodes and (ii) constructs shortcuts from low-rank nodes to high-rank nodes to enable efficient query processing. Our AH method is inspired by CH, and it outperforms CH in terms of both asymptotic and practical performance, as will be shown in Section~\ref{sec:exp}.

\section{Experiments}\label{sec:exp}

This section experimentally compares our AH method with three techniques: (i) Dijkstra's algorithm \cite{d59}, (ii) {\em Spatially Induced Linkage Cognizance (SILC)} \cite{ssa08}, one of the most advanced worst-case efficient indices for shortest path and distance queries, and (iii) {\em Contraction Hierarchies (CH)} \cite{gss08}, a heuristic approach that offers the highest overall efficiency in shortest path and distance queries while incurring minimal costs of space and pre-computation, as shown in a recent experimental study \cite{wxd12} of the state of the art. We implement AH and Dijkstra's algorithm using C++, and we obtain the C++ implementations of SILC and CH from \cite{SILCcode,CHcode}. All experiments are conducted on a $64$-bit windows machine with an Intel Xeon 2.8GHz CPU and 32GB RAM.

\begin{figure*}[t]
\begin{small}
\begin{tabular}{cccc}
\multicolumn{4}{c}{\hspace{-4mm} \includegraphics[height=3.6mm]{./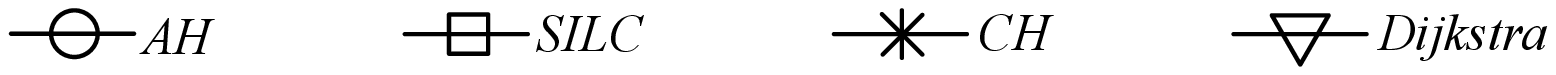}} \\
\hspace{-9.5mm}\includegraphics[height=37mm]{./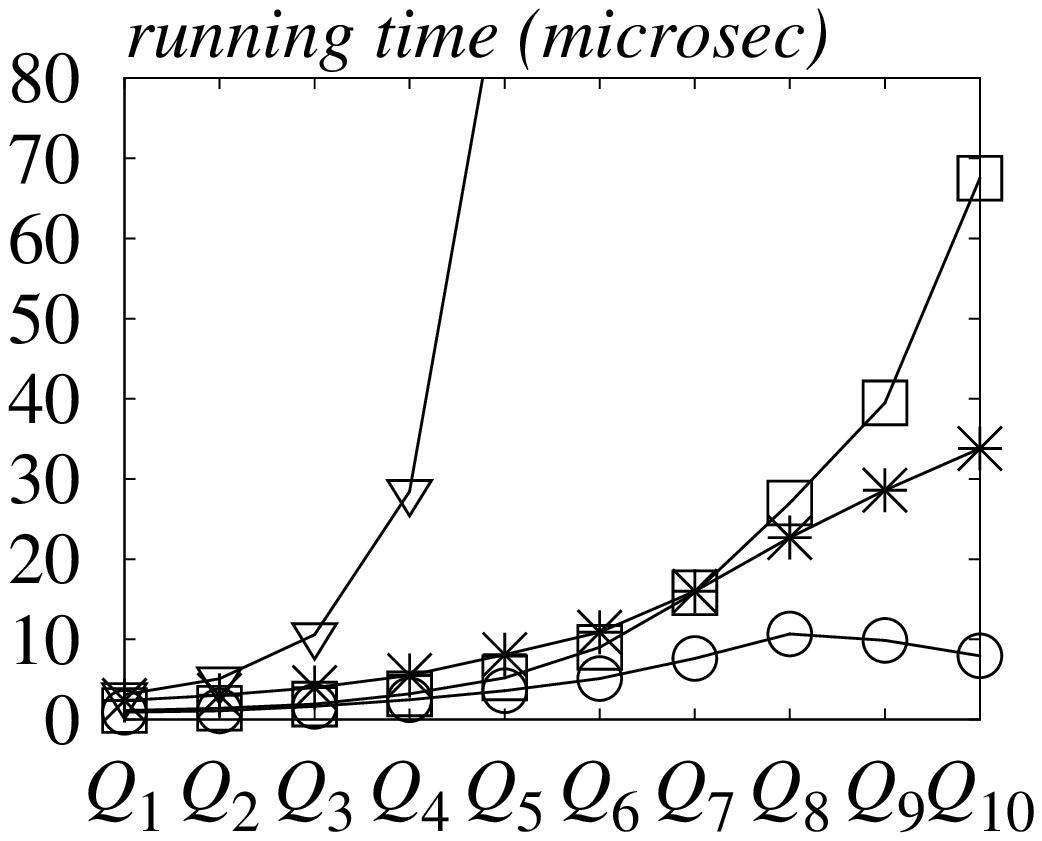}
&
\hspace{-12.5mm}\includegraphics[height=37mm]{./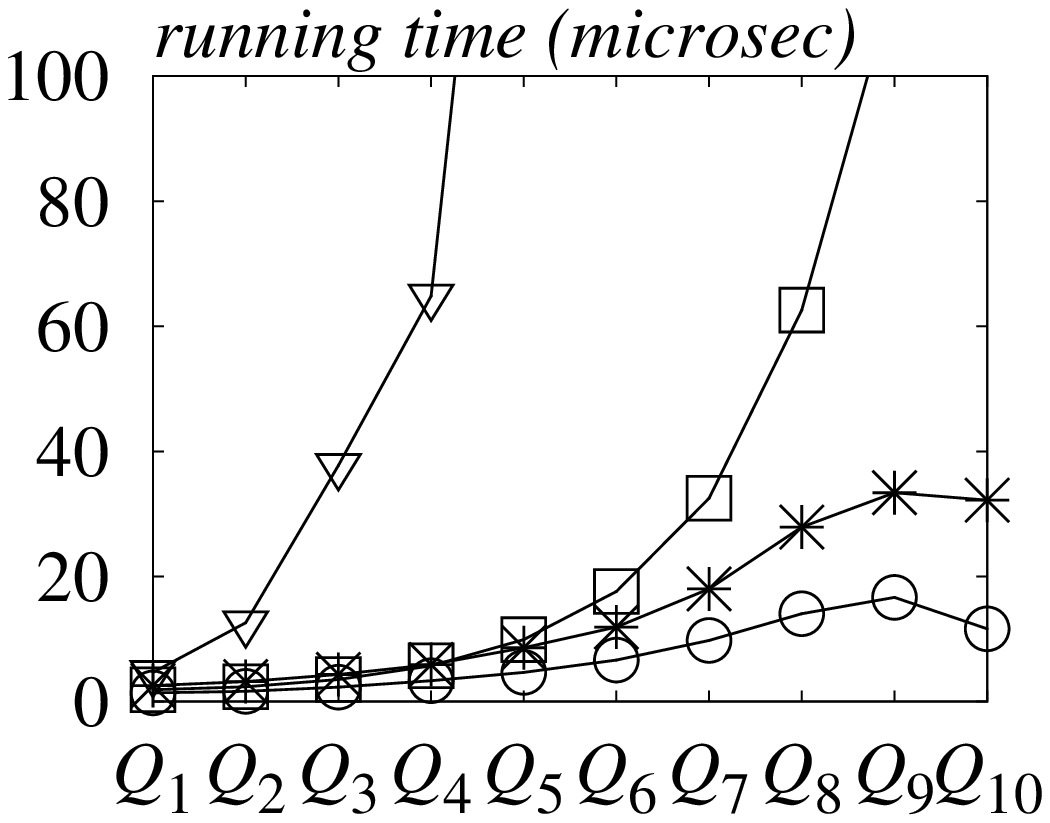}
&
\hspace{-12.5mm}\includegraphics[height=37mm]{./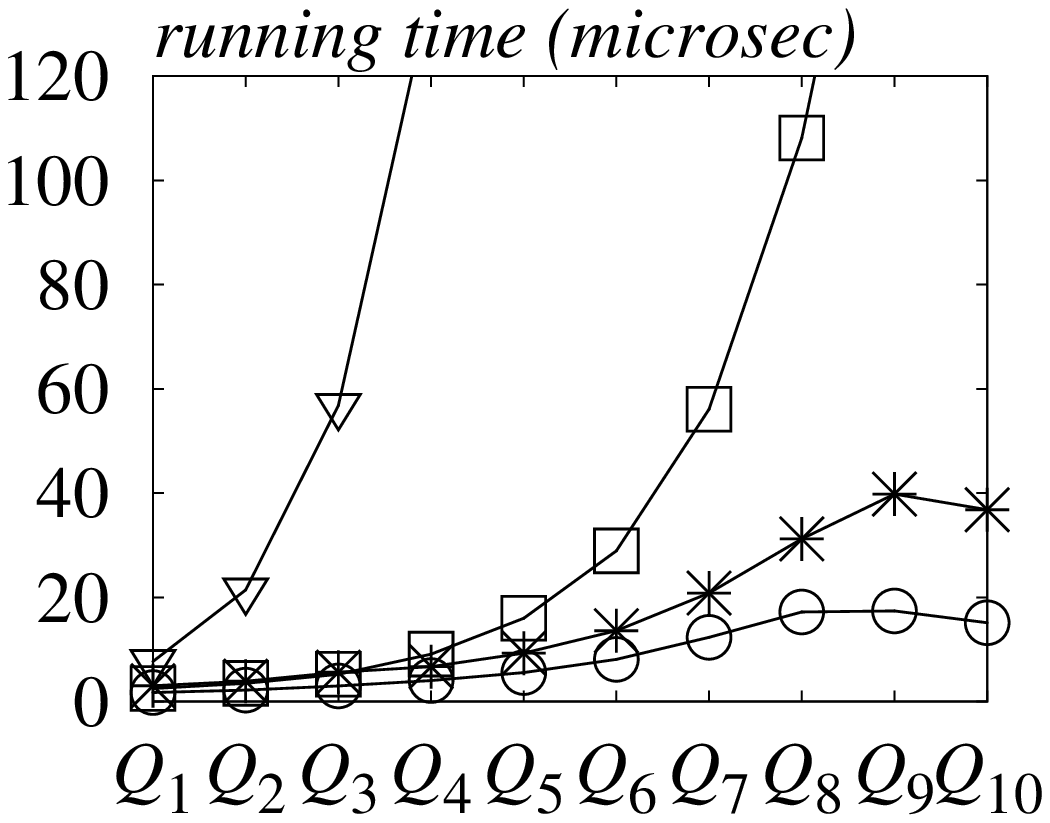}
&
\hspace{-12.5mm}\includegraphics[height=37mm]{./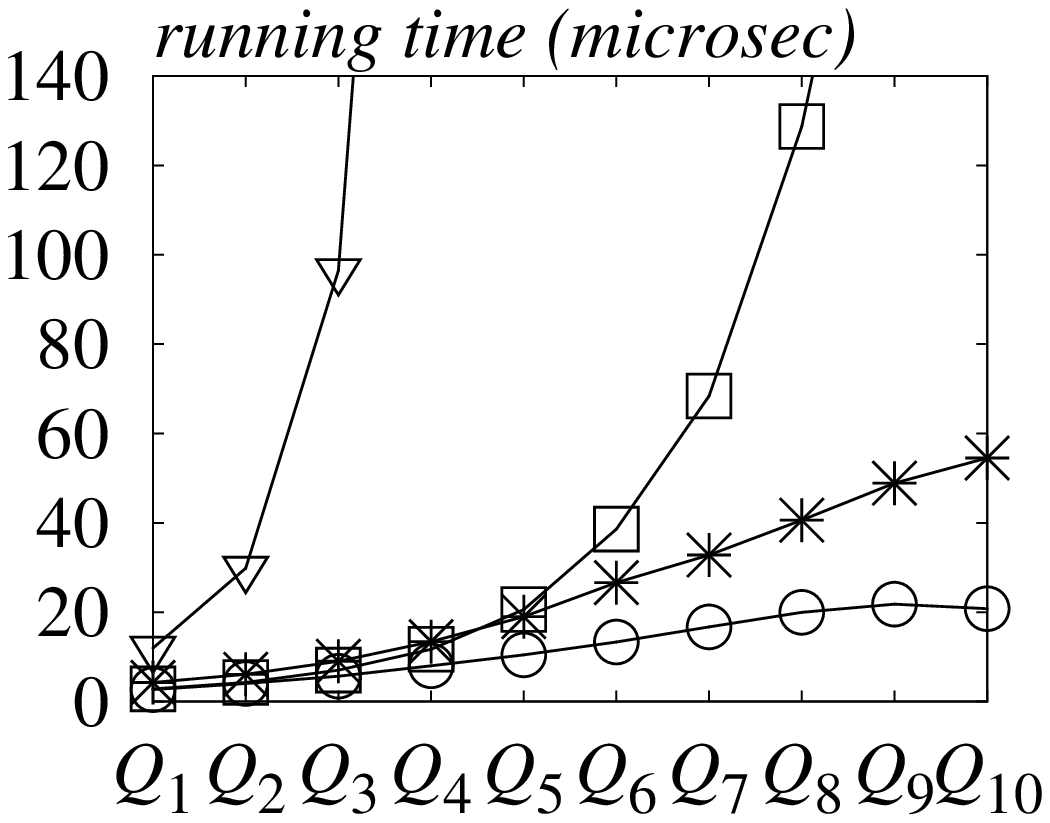} \\

\hspace{-6mm}(a) DE ($n = 48,812$) & \hspace{-6mm}(b) NH ($n = 115,055$) &
\hspace{-6mm}(c) ME ($n = 187,315$) & \hspace{-6mm} (d) CO ($n = 435,666$)\\ \vspace{-1.5mm} \\

\hspace{-9.5mm}\includegraphics[height=37mm]{./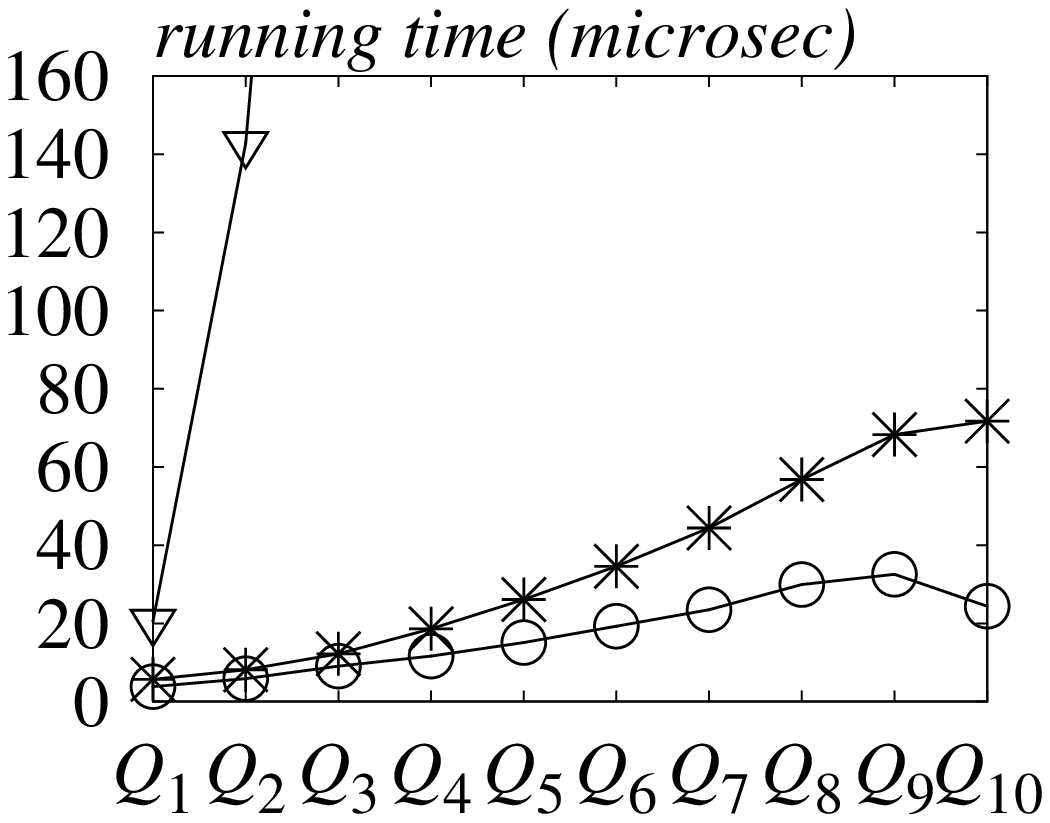}
&
\hspace{-12.5mm}\includegraphics[height=37mm]{./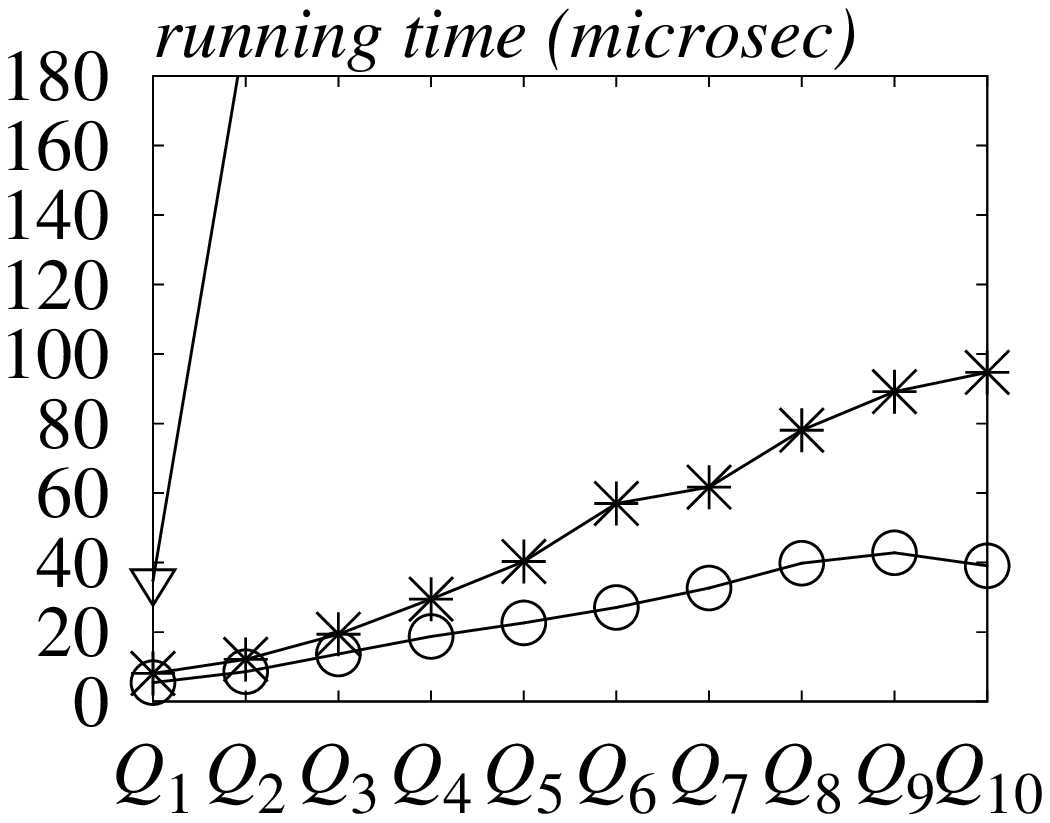}
&
\hspace{-12.5mm}\includegraphics[height=37mm]{./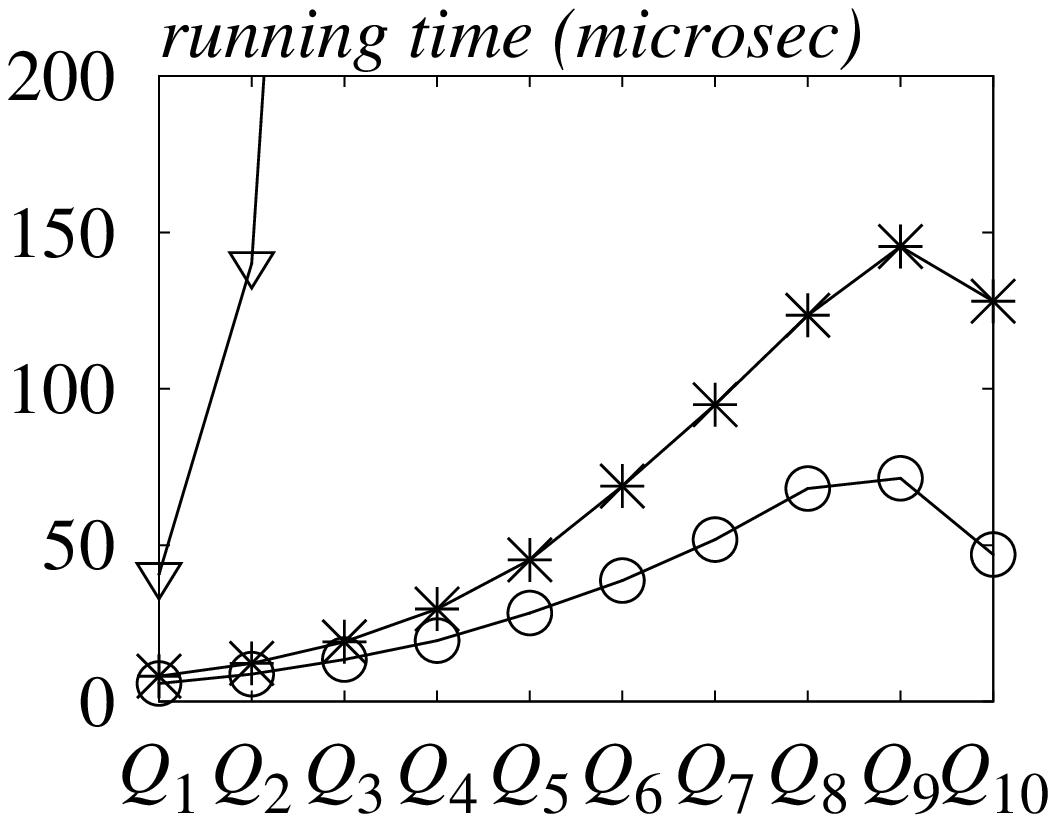}
&
\hspace{-12.5mm}\includegraphics[height=37mm]{./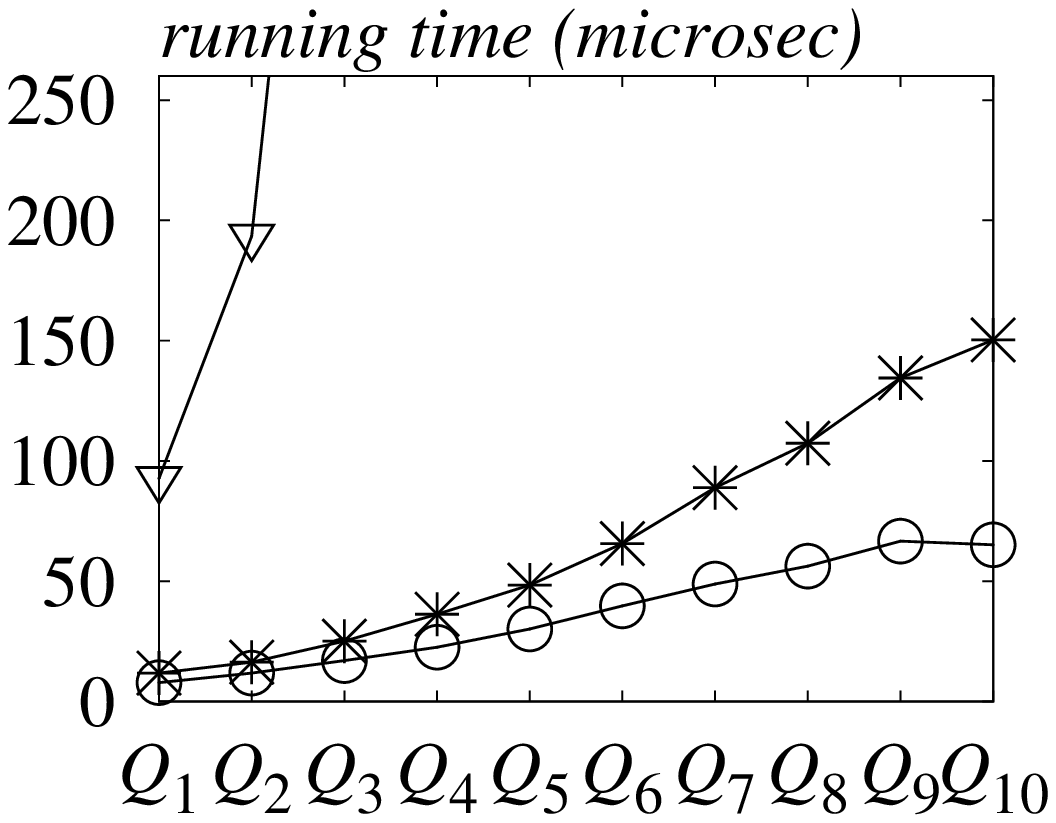} \\
\hspace{-6mm}(e) FL ($n = 1,070,376$) & \hspace{-6mm}(f) CA ($n = 1,890,815$) &
\hspace{-6mm}(g) E-US ($n = 3,598,623$) & \hspace{-6mm} (h) W-US ($n = 6,262,104$) \\ \vspace{-1.5mm} \\

\hspace{-9.5mm}
&
\hspace{-12.5mm}\includegraphics[height=37mm]{./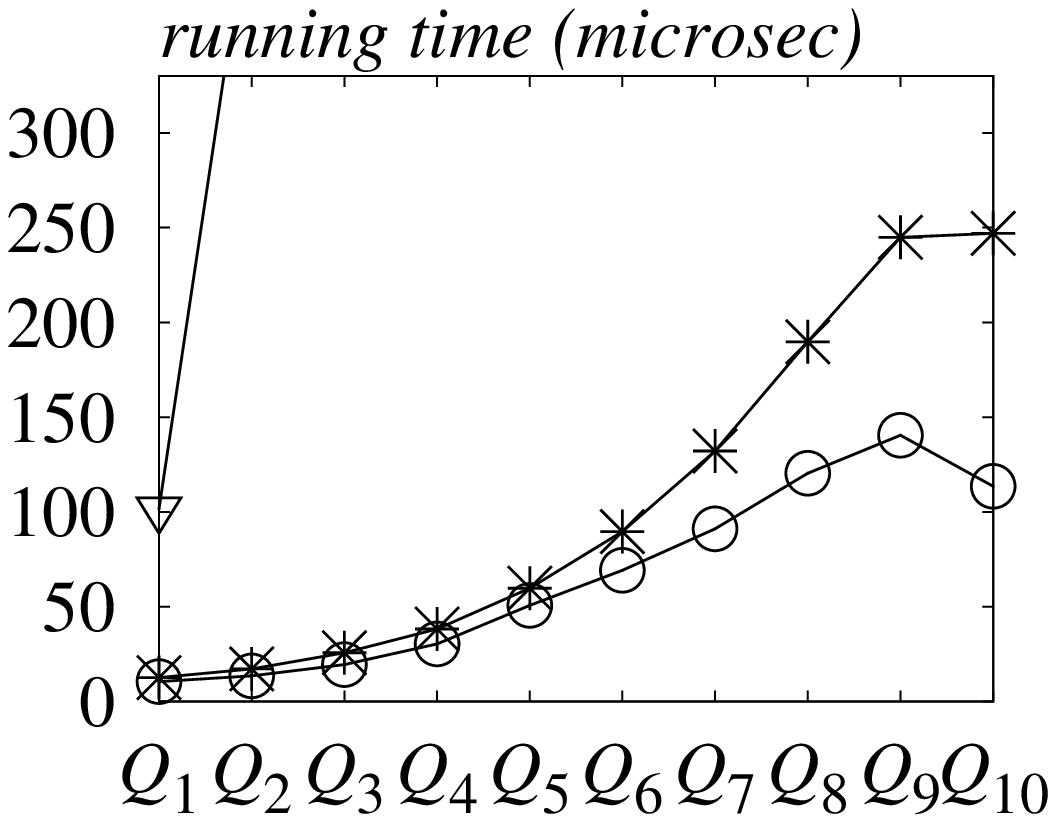}
&
\hspace{-12.5mm}\includegraphics[height=37mm]{./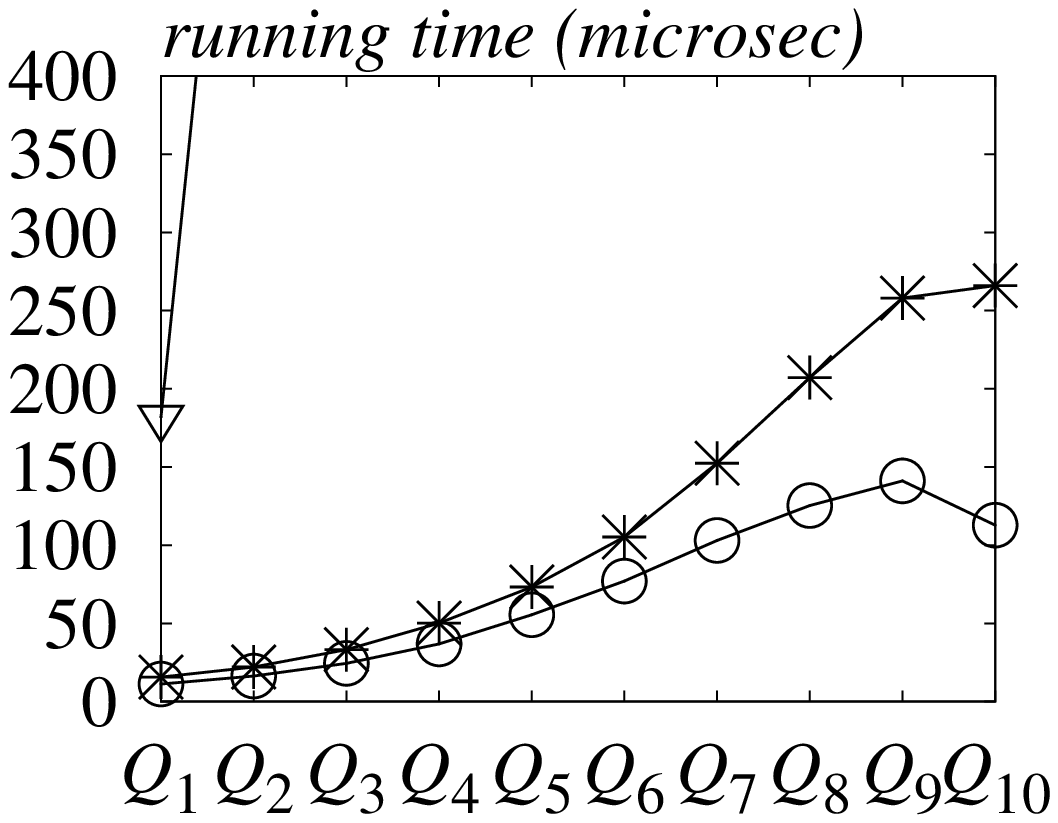}
&
\hspace{-12.5mm} \\
\hspace{-6mm} & \hspace{-6mm}(i) C-US ($n = 14,081,816$) &
\hspace{-6mm}(j) US ($n = 23,947,347$) & \hspace{-6mm}
\end{tabular}
\end{small}
\figcapup \caption{Efficiency of distance queries vs.\ query set.} 
\label{fig:exp-dist-vary-q}
\end{figure*}

\subsection{Datasets and Queries}\label{sec:exp-settings}

We use ten publicly available datasets \cite{dimacs}, each of which corresponds to a part of the road network in the US. Table~\ref{tbl:exp-data} shows the number of nodes and edges in the data. For each edge in the datasets, its weight quantifies the time required to traverse the road segment that is represented by the edge.

Following previous work \cite{wxd12}, we generate ten sets of queries $Q_1, Q_2, \ldots, Q_{10}$ on each dataset as follows. We first estimate the maximum network distance $l_{max}$ between two nodes in the road network. After that, we insert $10000$ pairs of nodes $(s, t)$ into $Q_i$ ($i \in [1, 10]$) as queries, such that the distance between $s$ and $t$ is in $[2^{i-11}\!\cdot\!l_{max},\,2^{i-10}\!\cdot\!l_{max})$. In other words, the network distance between any pair of nodes in $Q_i$ is larger than that in $Q_{i-1}$.

\subsection{Efficiency for Distance Queries}\label{sec:exp-distance}

\begin{table}
\centering
\begin{small}
\tblcapup
\caption{Dataset Characteristics} \tblcapdown
\label{tbl:exp-data}
\begin{tabular}{|@{\hspace{1mm}}c@{\hspace{1mm}}|@{\hspace{1mm}}c@{\hspace{1mm}}|@{\hspace{1mm}}c@{\hspace{1mm}}|@{\hspace{1mm}}c@{\hspace{1mm}}|} \hline
{\bf Name} & {\bf Corresponding Region} & {\bf Number of Nodes} &
{\bf Number of Edges} \\ \hline
DE & Delaware & $48{,}812$ & $120{,}489$ \\ \hline
NH & New Hampshire & $115{,}055$ & $264{,}218$ \\ \hline
ME & Maine & $187{,}315$ & $422{,}998$ \\ \hline
CO & Colorado & $435{,}666$ & $1{,}057{,}066$ \\ \hline
FL & Florida & $1{,}070{,}376$ & $2{,}712{,}798$ \\ \hline
CA & California and Nevada & $1{,}890{,}815$ & $4{,}657{,}742$ \\ \hline
E-US & Eastern US & $3{,}598{,}623$ & $8{,}778{,}114$ \\ \hline
W-US & Western US & $6{,}262{,}104$ & $15{,}248{,}146$ \\ \hline
C-US & Central US & $14{,}081{,}816$ & $34{,}292{,}496$ \\ \hline
US & United States & $23{,}947{,}347$ & $58{,}333{,}344$ \\ \hline
\end{tabular}
\end{small}
\tbldown
\end{table}

Our first set of experiments focus on distance queries. Figure~\ref{fig:exp-dist-vary-q}a shows the average running time of each technique when answering the distance queries in $Q_i$ ($i \in [1, 10]$) on the DE road network (which contains $48{,}812$ nodes). Observe that AH consistently outperforms all competitors including CH, the state-of-the-art heuristic approach. In particular, on query sets $Q_8$, $Q_9$, and $Q_{10}$ (where each query concerns two distant locations), AH's running time is lower than that of CH and SILC by more than $50\%$. CH performs slightly worse than SILC on $Q_1, Q_2, \ldots, Q_6$, but it is evidently superior to SILC on $Q_8$, $Q_9$, and $Q_{10}$.
Dijkstra's algorithm incurs the highest computation overhead on all query sets.

Figure~\ref{fig:exp-dist-vary-q}b shows the query processing time of each method on NH, which is about $2$ times the size of DE. Again, AH is consistently more efficient than the other three techniques, especially on query sets $Q_8$, $Q_9$, and $Q_{10}$. CH suppresses SILC in most query sets, which contrasts the case on DE where CH only dominates SILC on $Q_8$, $Q_9$, and $Q_{10}$. This indicates that SILC does not scale as well as CH. Dijkstra's algorithm is still the least efficient one among the four techniques. Similar results are shown in Figure \ref{fig:exp-dist-vary-q}c and \ref{fig:exp-dist-vary-q}d.

Figures \ref{fig:exp-dist-vary-q}e - \ref{fig:exp-dist-vary-q}j show the running time of AH, CH, and Dijkstra's algorithm on the largest six datasets. (SILC is omitted since its preprocessing and space overheads on those these datasets are prohibitive, as will be shown in Section~\ref{sec:exp-pre}) The relative performance of AH, CH, and Dijkstra's algorithm remain the same as in Figures \ref{fig:exp-dist-vary-q}a - \ref{fig:exp-dist-vary-q}d, with AH (resp.\ Dijkstra's algorithm) being the most (resp.\ least) efficient method by far.

\subsection{Efficiency for Shortest Queries}\label{sec:exp-path}

\begin{figure*}[t]
\begin{small}
\begin{tabular}{cccc}
\multicolumn{4}{c}{\hspace{-4mm} \includegraphics[height=3.6mm]{./figures/Exp-Legends-2.eps}} \\
\hspace{-9.5mm}\includegraphics[height=37mm]{./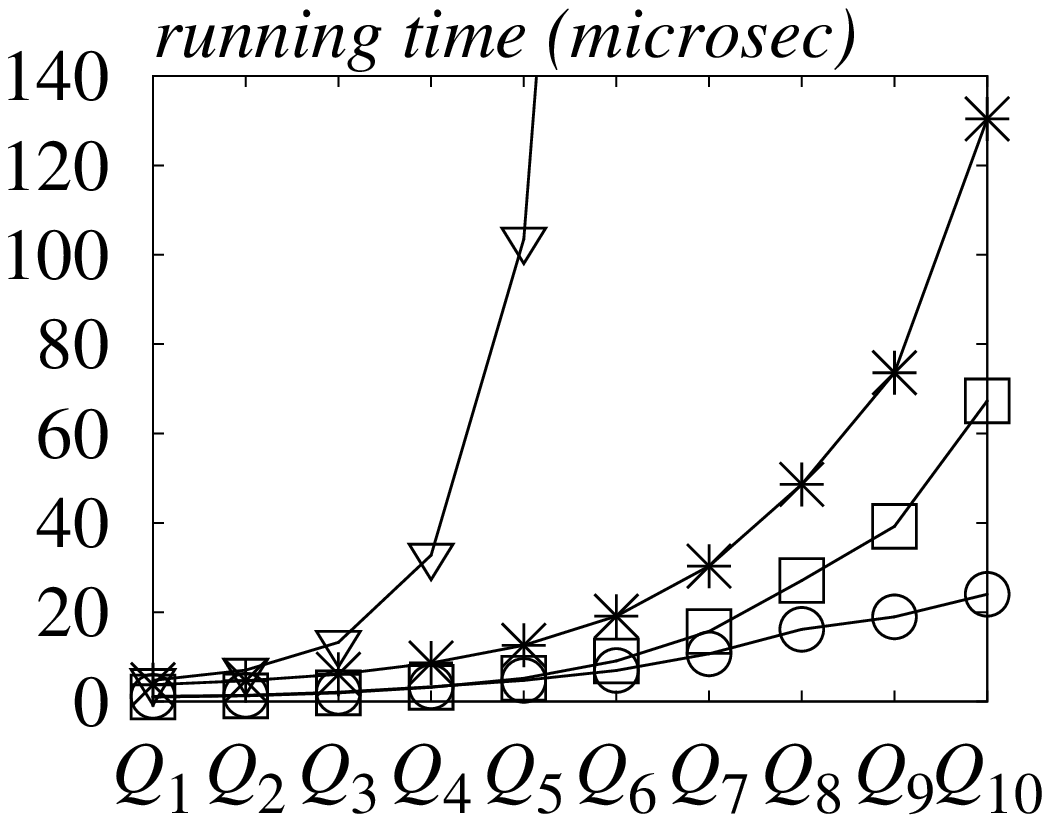}
&
\hspace{-12.5mm}\includegraphics[height=37mm]{./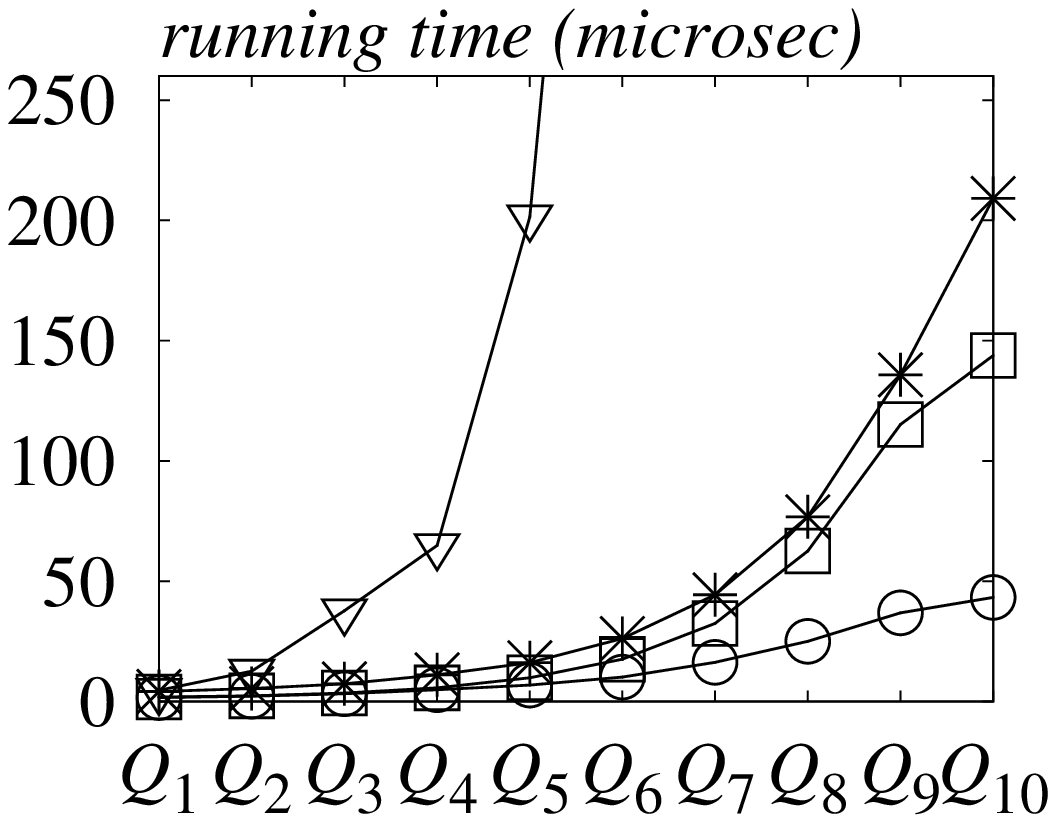}
&
\hspace{-12.5mm}\includegraphics[height=37mm]{./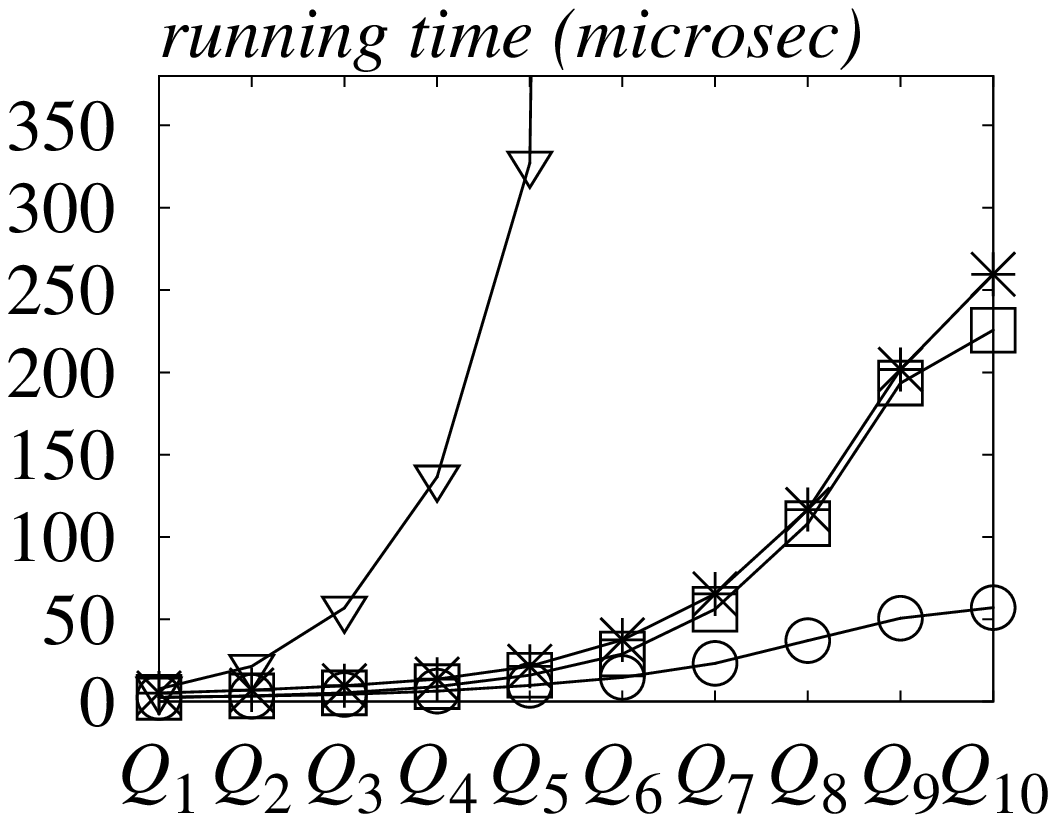}
&
\hspace{-12.5mm}\includegraphics[height=37mm]{./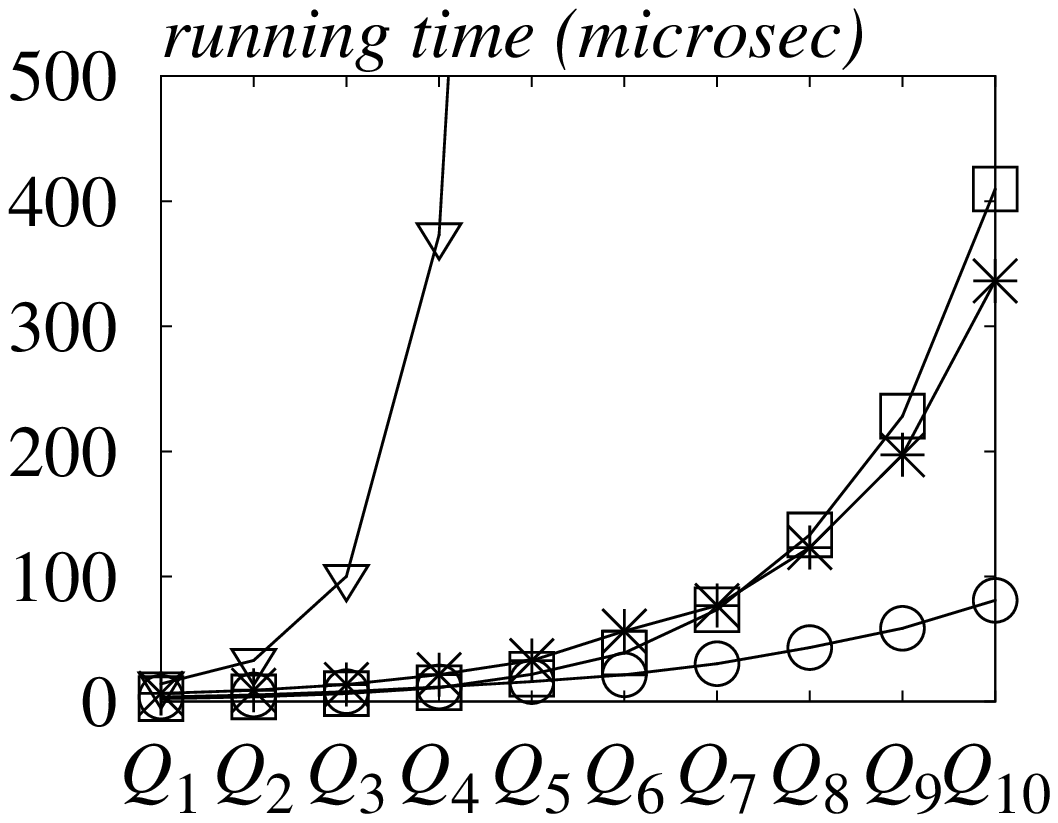} \\

\hspace{-6mm}(a) DE ($n = 48,812$) & \hspace{-6mm}(b) NH ($n = 115,055$) &
\hspace{-6mm}(c) ME ($n = 187,315$) & \hspace{-6mm} (d) CO ($n = 435,666$)\\ \vspace{-1.5mm} \\

\hspace{-9.5mm}\includegraphics[height=37mm]{./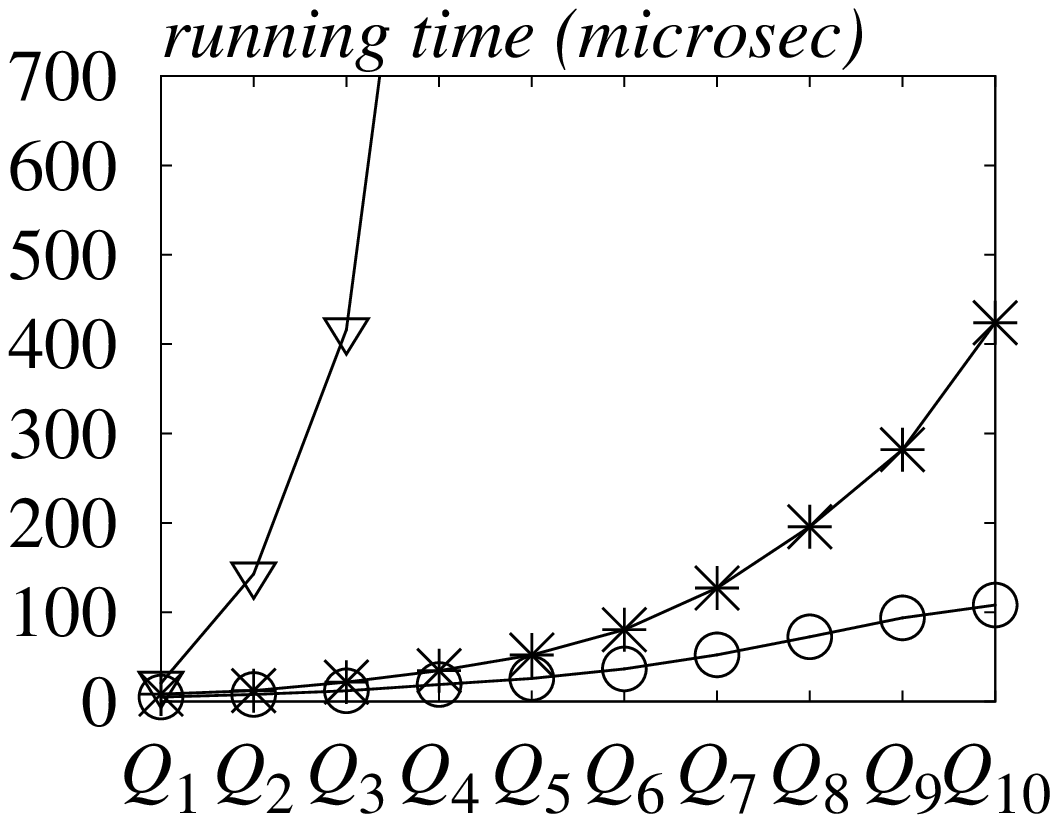}
&
\hspace{-12.5mm}\includegraphics[height=37mm]{./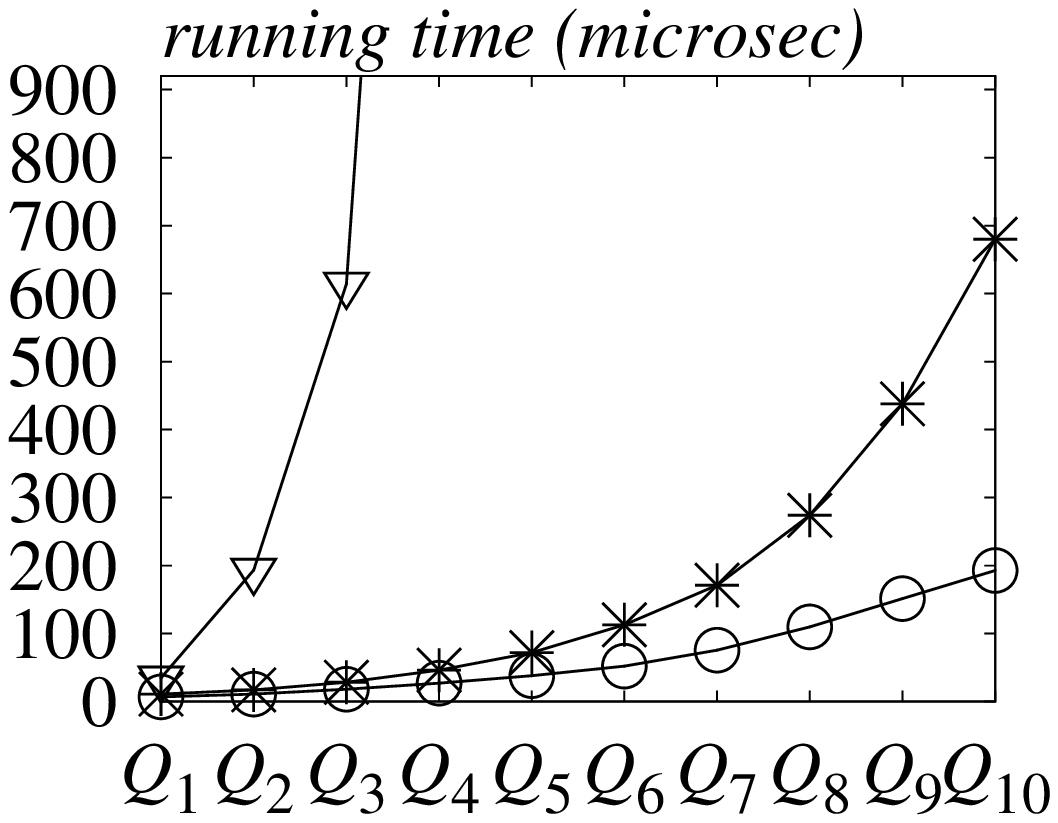}
&
\hspace{-12.5mm}\includegraphics[height=37mm]{./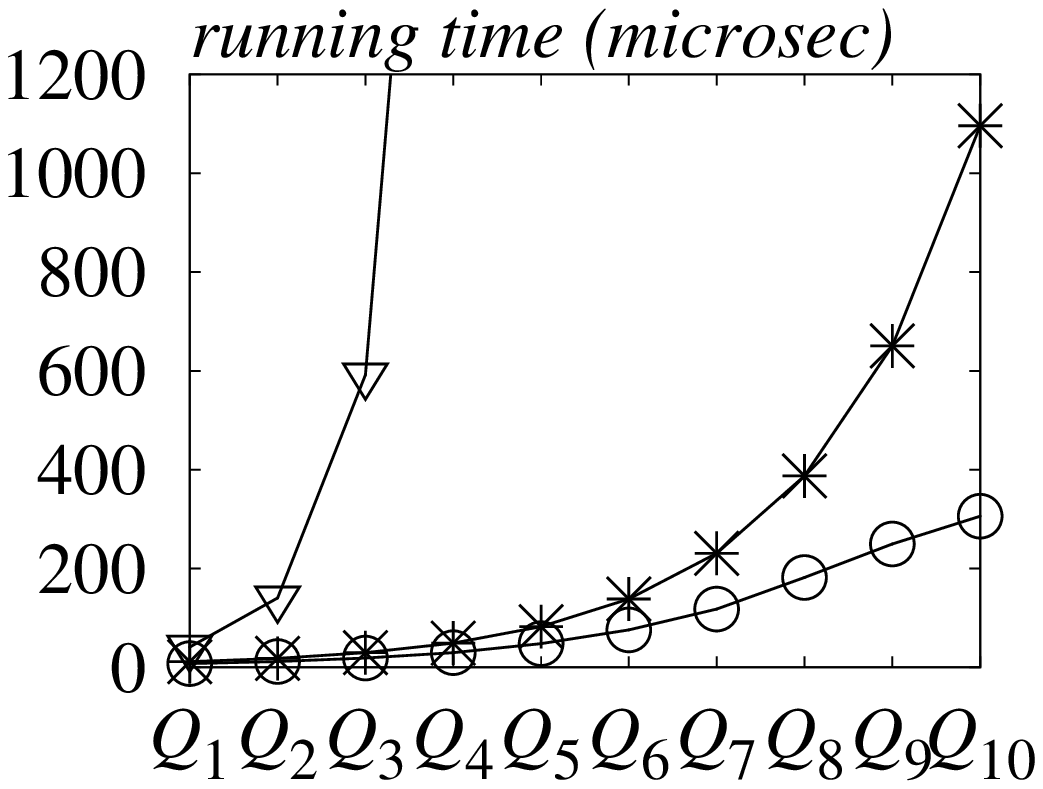}
&
\hspace{-12.5mm}\includegraphics[height=37mm]{./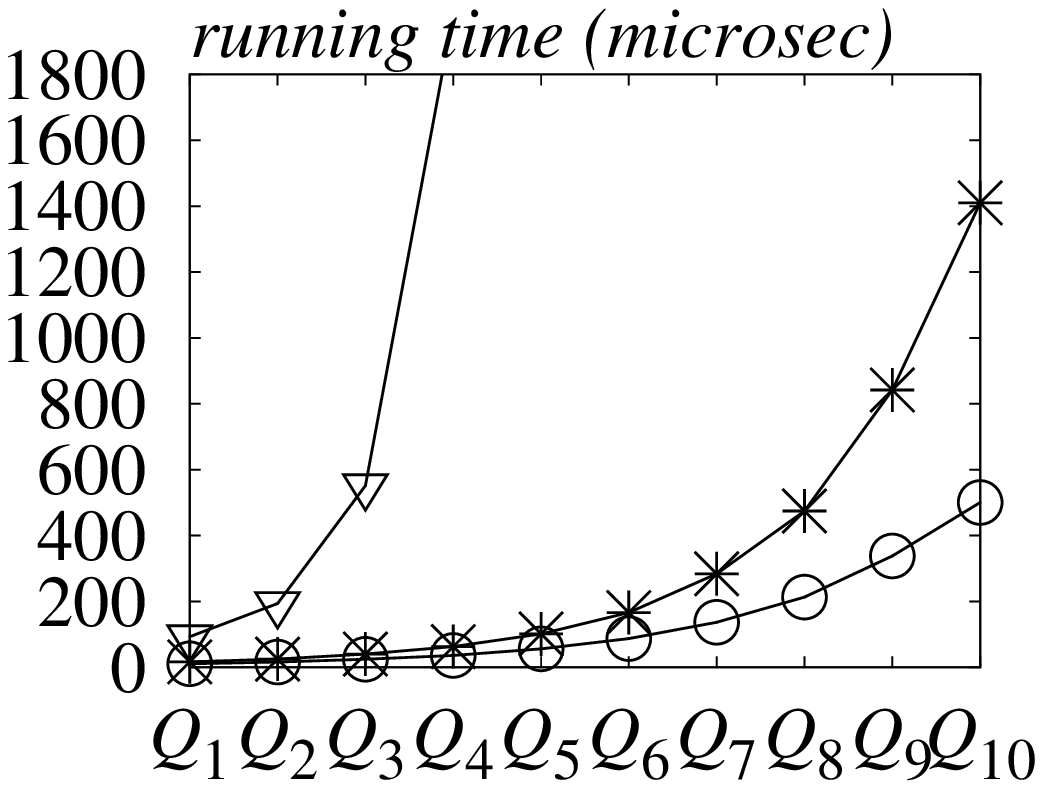} \\
\hspace{-6mm}(e) FL ($n = 1,070,376$) & \hspace{-6mm}(f) CA ($n = 1,890,815$) &
\hspace{-6mm}(g) E-US ($n = 3,598,623$) & \hspace{-6mm} (h) W-US ($n = 6,262,104$) \\ \vspace{-1.5mm} \\

\hspace{-9.5mm}
&
\hspace{-12.5mm}\includegraphics[height=37mm]{./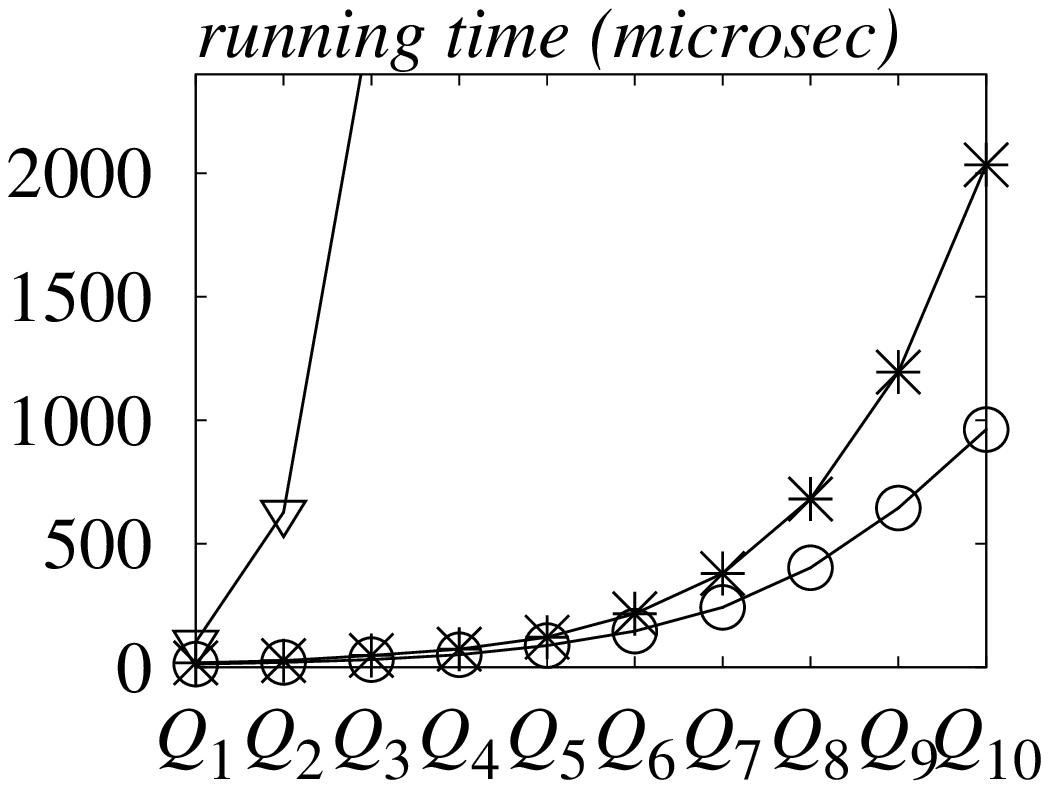}
&
\hspace{-12.5mm}\includegraphics[height=37mm]{./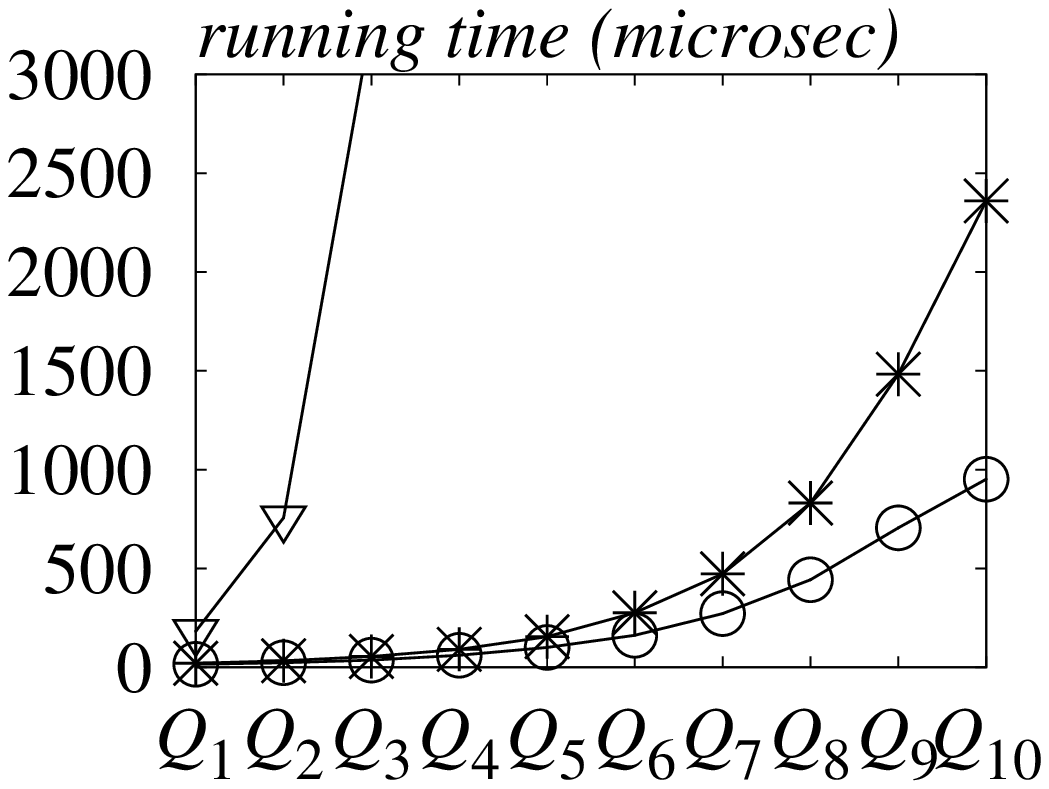}
&
\hspace{-12.5mm} \\
\hspace{-6mm} & \hspace{-6mm}(i) C-US ($n = 14,081,816$) &
\hspace{-6mm}(j) US ($n = 23,947,347$) & \hspace{-6mm}
\end{tabular}
\end{small}
\figcapup \caption{Efficiency of shortest path queries vs.\ query set.} 
\label{fig:exp-path-vary-q}
\end{figure*}

Figure~\ref{fig:exp-dist-vary-q} shows the average computation time of each technique when answering the shortest path queries in $Q_i$ ($i \in [1, 10]$) on all ten datasets. Regardlsss of the dataset, AH significantly outperforms the other three techniques. SILC is superior to CH on DE, but the performance of the two methods becomes comparable on the larger datasets. Dijkstra's algorithm is the least efficient one in all cases.

The running time of AH is higher for shortest path queries than distance queries. This is because, when answering a shortest path query from a source $s$ to a destination $t$, AH first (i) computes the distance from $s$ to $t$, and then (ii) derives the shortest path based on the result of the distance query. As a consequence, any shortest path query incurs a strictly higher overhead than a distance query with the same source and destination. Similarly, CH also incurs a higher cost for shortest path queries than distance queries.

In contrast, the running time of SILC (resp.\ Dijkstra's algorithm) is identical in Figures \ref{fig:exp-dist-vary-q} and \ref{fig:exp-path-vary-q}. The reason is that, SILC (resp.\ Dijkstra's algorithm) answers any distance query by first deriving the shortest path $P$ from the source to the destination, and then returning the length of $P$. Computing the length of $P$ incurs only negligible overhead, which explains why the costs of distance queries are the same as that of the shortest path queries.

\subsection{Space and Preprocessing Costs}\label{sec:exp-pre}

In the last sets of experiments, we evaluate the space and pre-computation overheads of AH, SILC, and CH. (We omit Dijkstra's algorithm as it does not require building an index on the road network.) Figure~\ref{fig:exp-pre}a illustrates the index space required by AH, SILC, and CH on each dataset. Although SILC is worst-case efficient, its space overhead is extremely high, and it increases super-linearly with the number of nodes $n$ in the road network. In particular, for all datasets with more than $500,000$ nodes, the index of SILC is more than $32$GB in size, i.e., it cannot fit in the main memory of our machine. For this reason, we omit SILC from the experimental on those datasets. Meanwhile, the space consumption of AH is moderate, and it increases linearly with $n$. This is consistent with our analysis in Section~\ref{sec:ah-complexity} that AH incurs a linear space complexity. Lastly, CH is the most space-economic method: it requires no more than $2$GB of space even for the largest dataset.

Figure~\ref{fig:exp-pre}b shows the time required by AH, SILC, and CH to construct indices on our datasets. Observe that SILC has a pre-computation cost super-linear to $n$, and it requires more than one week to preprocess even the relatively small dataset CO (which contains $435,666$ nodes). In contrast, the preprocessing time of AH exhibits a linear increase with $n$, even though AH's index construction algorithm has an $O(hn^2)$ time complexity. Furthermore, the pre-computation cost of AH is fairly small: it only requires around three hours to preprocess the US road network with $23$ million nodes. On the other hand, the pre-computation time of CH is minimum and is below $40$ minutes for all datasets.

\section{Conclusion} \label{sec:conclusion}

This paper presents Arterial Hierarchy (AH), a worst-case efficient index structure for shortest path and distance queries on road networks. Under a practical assumption about the road network, AH offers superior query time complexities in both shortest path and distance queries, and its space and preprocessing time complexities are comparable to the best existing worst-case efficient methods. With extensive experiments on real datasets, we show that AH also provides excellent query efficiency in practice, and it even outperforms CH (i.e., the state-of-the-art heuristic method) in terms of query time. Furthermore, the space consumption and preprocessing cost of AH are fairly small: It takes only around three hours to preprocess a continent-scale road network with $23$ million nodes, and the resulting index structure is no more than $32$GB in size. For future work, we plan to extend AH for the scenarios when (i) the weight of each road network edge may change with time (e.g., due to traffic conditions) and (ii) the memory footprint of the index structure is a significant concern (as is the case for mobile devices).

\begin{figure}[h]
\begin{small}
\begin{tabular}{cc}
\multicolumn{2}{c}{\hspace{-4mm} \includegraphics[height=3.3mm]{./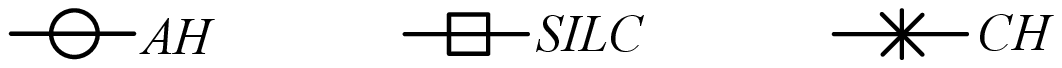}} \\
\hspace{-9mm} \includegraphics[height=39mm]{./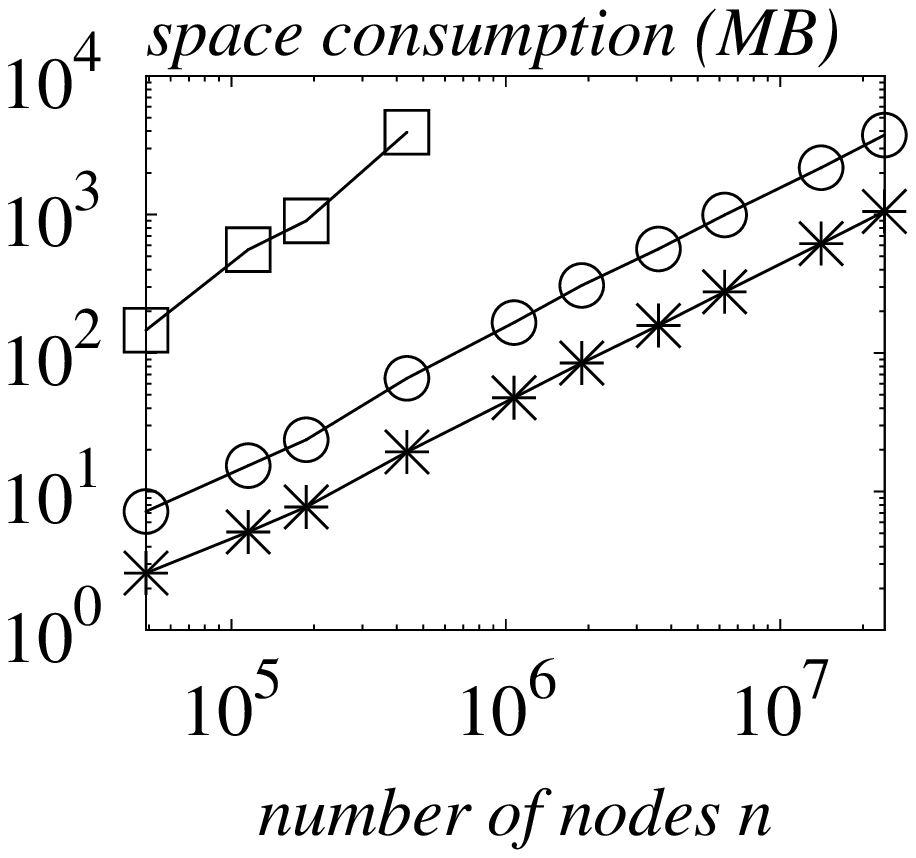} &
\hspace{-14mm} \includegraphics[height=39mm]{./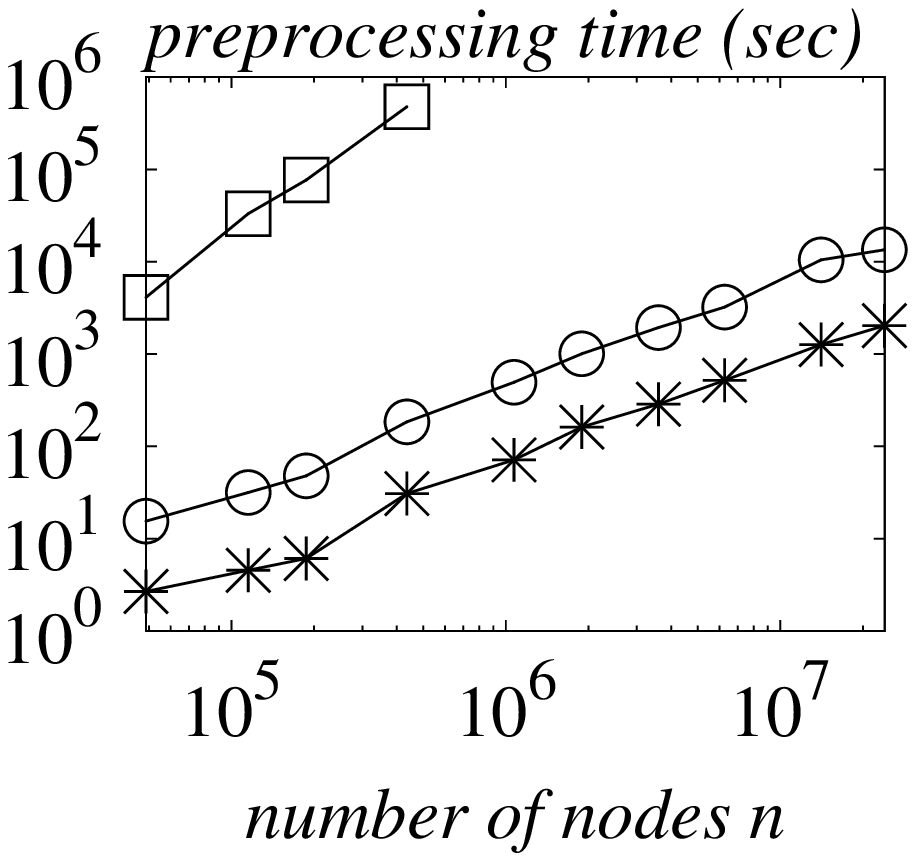} \\
\hspace{-6mm}(a) Space Overhead vs. $n$ & \hspace{-12mm}(b) Preprocessing Time vs. $n$
\end{tabular}
\end{small}
\figcapup  \caption{Space Overhead and Preprocessing Time vs.\ $\boldsymbol{n}$} \figcapdown
\label{fig:exp-pre}
\end{figure}

\bibliographystyle{abbrv}
\bibliography{ref}

\appendix

\section{Unique Shortest Paths via Weight Perturbation} \label{sec:perturb}

Let $h$ be as defined in Section~\ref{sec:fc-pre}. The solutions in the paper rely on the following assumption:
\begin{assumption} \label{assu:perturb}
For any ($4{\times}4$)-cell region $B$ in the square grid $R_i$ ($i \in [0, h]$), there do not exist two local shortest paths in $B$ that share the same endpoints and have the same length.
\end{assumption}
In the section, we show that Assumption~\ref{assu:perturb} can be enforced by adding a small perturbation to the weight of each edge in the road network $G$. Specifically, we associate each edge $e$ in $G$ with an integer $\rho(e)$ that is randomly selected in the range $[0, \tau-1]$, where $\tau$ is a parameter to be specified shortly. We refer to $\rho(e)$ as the {\em nuance} of $e$, and we define the nuance of a path $P$ as the sum of the nuance of the edges on the path, denoted as $\rho(P)$. For any two path $P_1$ and $P_2$ such that $l(P_1) = l(P_2)$, we consider $P_1$ shorter than $P_2$ if $\rho(P_1) < \rho(P_2)$. We will establish the following theorem.
\begin{theorem} \label{thrm:perturb}
Let $\Delta$ be the largest degree of any node in $G$. If $\tau \ge 32hn^3{\Delta\choose 2}$, then Assumption~\ref{assu:perturb} holds with a probability at least $1 - \frac{1}{n}$.
\end{theorem}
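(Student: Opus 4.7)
The plan is a union bound combining a single-pair probability estimate with a polynomial enumeration of candidate ``tie events''. First, for any two distinct paths $P_1 \ne P_2$ in $G$ sharing the same pair of endpoints, I will establish that $\Pr_{\rho}[\rho(P_1) = \rho(P_2)] \le 1/\tau$. Since $P_1 \ne P_2$ as edge sets, the symmetric difference contains at least one edge $e^{\star}$, and $\rho(P_1) - \rho(P_2) = \pm\rho(e^{\star}) + c$, where $c$ depends only on the other edge nuances. By the principle of deferred decisions, conditioning on every nuance besides $\rho(e^{\star})$ reduces the event to a single-point constraint on $\rho(e^{\star})$, which occurs with probability at most $1/\tau$ because $\rho(e^{\star})$ is uniform on the $\tau$ integers $\{0, 1, \dots, \tau-1\}$.

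Next I will enumerate the candidate bad events. Assumption~\ref{assu:perturb} fails exactly when there exist $i \in [0, h]$, a $(4{\times}4)$-cell region $B$ in $R_i$, and two distinct local shortest paths $P_1, P_2$ in $B$ with common endpoints, equal $l$-length, and equal $\rho$-nuance. I will bound the number of such quadruples $(i, B, P_1, P_2)$ as follows. A pair of nodes $(u, v)$ can lie in a common $(4{\times}4)$-cell region of $R_i$ only when the cells of $R_i$ are large enough relative to the separation of $u$ and $v$, which limits the relevant values of $i$ to $O(h)$ and, for each such $i$, the number of regions $B$ containing both $u$ and $v$ to $O(1)$. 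For each resulting tuple $(i, B, u, v)$, I will charge every candidate pair of tied local shortest paths to the unordered pair $\{e_1, e_2\}$ of distinct outgoing edges used at $u$, and argue that only a bounded number of pairs share a given first-edge data. The resulting count is at most $32\, h\, n^{2} \binom{\Delta}{2}$ quadruples.

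Plugging this count and the per-pair estimate into the union bound gives
\[
\Pr\bigl[\text{Assumption~\ref{assu:perturb} fails}\bigr] \;\le\; \frac{32\, h\, n^{2} \binom{\Delta}{2}}{\tau} \;\le\; \frac{1}{n}
\]
whenever $\tau \ge 32\, h\, n^{3} \binom{\Delta}{2}$, as claimed. The main obstacle will be the tight ``constant pairs per first-edge data'' step: two local shortest paths from $u$ to $v$ starting with the same two edges can a priori continue along many distinct routes that remain tied. To handle this, I plan to descend inductively across the grid levels and use the local-shortest structure to rigidify the path suffixes given their prefixes; if the induction delivers only a polynomial-in-$n$ bound per first-edge data, I can still conclude the theorem by absorbing the slack into the constant and a slightly higher power of $n$ in $\tau$. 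A subsidiary technical point is verifying that the $O(h)$ bound on the grids relevant to a given pair $(u, v)$ is robust to the definition of $R_i$; this follows by a direct comparison of the cell size $2^{-(h+2-i)}$ of $R_i$ to the $L_\infty$ distance between $u$ and $v$.
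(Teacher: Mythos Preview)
Your single-pair bound $\Pr[\rho(P_1)=\rho(P_2)]\le 1/\tau$ is fine, and your enumeration of $(i,B)$ for a fixed endpoint pair is harmless. The gap is the path-pair count. You propose to charge each tied pair of local shortest paths from $u$ to $v$ to the unordered pair of first edges at $u$, and you need at most a constant number of pairs per such charge to land at $32\,h\,n^2\binom{\Delta}{2}$. That is exactly the step that fails: for a fixed $(u,v,i,B)$ and a fixed pair of first edges, there is no a priori bound on the number of distinct local paths that could simultaneously be shortest; the number of $u\!\to\! v$ paths in $B$ can be exponential, and nothing in your argument rules out many of them tying. A naive union bound over path pairs therefore blows up, and your proposed ``inductive descent across grid levels'' does not address this, since the difficulty is entirely internal to a single region $B$. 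Your fallback of absorbing slack into a larger power of $n$ in $\tau$ is also off the table: the theorem fixes $\tau\ge 32hn^3\binom{\Delta}{2}$.

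The paper avoids counting path pairs altogether via an isolation-lemma style argument. Fix a source $s$ and a region $B$, and order the reachable nodes $v_0=s,v_1,\ldots$ by local shortest-path distance from $s$. Let $\zeta_i$ be the event that $v_i$ is the \emph{first} node with a non-unique local shortest path from $s$. Conditioned on uniqueness for $v_0,\ldots,v_{i-1}$, every candidate shortest path to $v_i$ is determined by its last edge $\langle u_j,v_i\rangle$ (the prefix to $u_j$ is the unique shortest path to $u_j$), so there are at most $\Delta$ candidates and hence at most $\binom{\Delta}{2}$ pairs; applying your single-pair bound gives $\Pr[\zeta_i]\le \binom{\Delta}{2}/\tau$. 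Summing over $i$ yields $\Pr[\exists v:\text{non-unique}]\le n\binom{\Delta}{2}/\tau$ per direction, then a union bound over the $n$ sources, the at most $16$ regions per grid containing $s$, the $h$ grids, and the two directions gives the stated $32hn^2\binom{\Delta}{2}/\tau\le 1/n$. The key idea you are missing is this Dijkstra-style ``reveal in order of distance'' reduction from arbitrary path pairs to last-edge pairs; once you have it, the rest of your outline goes through.
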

In other words, by setting $\tau$ to a sufficiently large value, we can ensure that Assumption~\ref{assu:perturb} holds with an overwhelming probability.

Our proof of Theorem~\ref{thrm:perturb} is based on a few lemmas as follows.

\begin{lemma} \label{lemma:path_equal_pr}
Let $P$ and $P'$ be two paths in $G$. Then, $\rho(P) = \rho(P')$ occurs with at most $1/\tau$ probability.
\end{lemma}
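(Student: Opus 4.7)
The plan is to exhibit an edge whose nuance is ``free'' in the equation $\rho(P)=\rho(P')$, and then use the uniformity of $\rho$ on that edge to bound the collision probability by a standard principle-of-deferred-decisions argument.

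First I would dispose of the trivial case $P=P'$ (as edge multisets), noting that the lemma is only of interest when the two paths are distinct; otherwise there is nothing to prove since the two paths have identical nuances with probability $1$, and the lemma statement is implicitly about distinct paths $P\neq P'$. Assuming $P\neq P'$, I would argue that the symmetric difference of their edge multisets is nonempty, so there exists an edge $e^\star$ that occurs a different number of times in $P$ than in $P'$. Write
\[
\rho(P)-\rho(P') \;=\; c\cdot\rho(e^\star) \;+\; \Phi,
\]
where $c\neq 0$ is the difference of multiplicities of $e^\star$ in $P$ and $P'$, and $\Phi$ is a quantity depending only on the nuances of the edges other than $e^\star$.

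Next I would apply the principle of deferred decisions: condition on the values of $\{\rho(e)\mid e\neq e^\star\}$. Under this conditioning, $\Phi$ becomes a fixed integer and the event $\rho(P)=\rho(P')$ is equivalent to the linear equation $c\cdot\rho(e^\star)=-\Phi$ in the single remaining random variable $\rho(e^\star)$. Since $c\neq 0$ and $\rho(e^\star)$ is uniformly distributed over the $\tau$ integers in $[0,\tau-1]$, this equation is satisfied by at most one value of $\rho(e^\star)$, hence
\[
\Pr\bigl[\rho(P)=\rho(P')\;\bigm|\;\{\rho(e)\}_{e\neq e^\star}\bigr]\;\le\;\tfrac{1}{\tau}.
\]
Averaging over the conditioning (i.e., taking total expectation) yields the unconditional bound $\Pr[\rho(P)=\rho(P')]\le 1/\tau$, as required.

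I expect no significant obstacle here: the argument is a direct instance of the Schwartz--Zippel / deferred-decisions technique for isolation via random perturbation. The only subtlety is the case $c\neq 1$ (when an edge appears with different multiplicity in the two paths); but since $c$ is a nonzero integer and $\rho(e^\star)$ ranges over a set of size $\tau$, the solution set of $c\cdot x \equiv -\Phi$ still has size at most one within $[0,\tau-1]$ when we interpret the equation over the integers (not modulo $\tau$), so the same $1/\tau$ bound holds. This lemma will then serve as the union-bound building block for Theorem~\ref{thrm:perturb}, where it is summed over all relevant pairs of local paths in all $(4{\times}4)$-cell regions across all grids $R_i$.
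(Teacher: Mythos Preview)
Your proof is correct and follows essentially the same deferred-decisions argument as the paper: isolate one edge's nuance and use its uniformity over $[0,\tau-1]$ to bound the collision probability by $1/\tau$. Your version is in fact slightly more careful than the paper's, which simply singles out $e_1$ without verifying that $\rho(e_1)$ is independent of the remaining sum; your explicit choice of $e^\star$ from the symmetric difference and tracking of the nonzero coefficient $c$ closes that gap.
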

\begin{proof}
Assume that $P=\langle e_1, \ldots, e_i \rangle$ and $P'=\langle e'_1, \ldots, e'_j\rangle$. Then,
\begin{eqnarray*}
\lefteqn{Pr\left\{\rho(P) = \rho(P')\right\}} \\
& = & Pr\left\{\sum_{1 \le k \le i}\rho(e_k) = \sum_{1 \le k \le j}\rho(e'_k)\right\} \\
& = & Pr\left\{\rho(e_1) = \sum_{1 \le k \le j}\rho(e'_k) - \sum_{2 \le k \le i}\rho(e_k) \right\} \\
& = & \sum_{0\le x \le \tau-1} \Bigg(Pr\left\{\rho(e_1) = x\right\} \\
&  & \qquad \cdot \left.Pr\left\{\sum_{1 \le k \le j}\rho(e'_j) - \sum_{2 \le k \le i}\rho(e_i) = x \right\}\right) \\
& = & \sum_{0\le x \le \tau-1} \left(\frac{1}{\tau} \cdot Pr\left\{\sum_{1 \le k \le j}\rho(e'_j) - \sum_{2 \le k \le i}\rho(e_i) = x \right\}\right) \\
& = & \frac{1}{\tau} \cdot \sum_{0\le x \le \tau-1} Pr\left\{\sum_{1 \le k \le j}\rho(e'_j) - \sum_{2 \le k \le i}\rho(e_i) = x \right\} \\
& = & \frac{1}{\tau}
\end{eqnarray*}
\end{proof}
Let $\Delta$ be the maximum degree of any node in $G$. Based on Lemma~\ref{lemma:path_equal_pr}, we have the following result:
\begin{lemma} \label{lemma:fixnode_equal_pr}
Let $B$ be a ($4{\times}4$)-cell region in $R_i$ ($i \in [0, h]$). For a node $s$ in $B$, let $\zeta$ (resp.\ $\zeta'$) be the event that there exists a another node $v$, such that the local shortest path from $s$ to $v$ (resp.\ from $v$ to $s$) in $B$ is not unique. Then, $Pr\{\zeta \vee \zeta'\} \le {\Delta \choose 2}\cdot 2n / \tau$.
\end{lemma}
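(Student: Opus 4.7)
The plan is a multi-layer union bound in which Lemma~\ref{lemma:path_equal_pr} supplies the atomic $1/\tau$ estimate for every pair of paths. First I would split $Pr\{\zeta\vee\zeta'\} \le Pr\{\zeta\}+Pr\{\zeta'\}$ and observe that $\zeta'$ is the image of $\zeta$ under edge-direction reversal, so it suffices to prove $Pr\{\zeta\}\le n\binom{\Delta}{2}/\tau$ and apply the same argument to $\zeta'$.

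Next I would argue that any realization of $\zeta$ can be attributed to a ``root-cause'' vertex $w$. If there exist two distinct local shortest paths $P_1\ne P_2$ from $s$ to some $v$ with $l(P_1)=l(P_2)$ and $\rho(P_1)=\rho(P_2)$, and the last edges of $P_1,P_2$ into $v$ coincide, I strip this common last edge and recurse on the shared predecessor. Since at $s$ itself the only path is empty and trivially unique, this recursion must terminate at some vertex $w$ where the two stripped paths use distinct in-edges $e_1\ne e_2$ of $w$. Hence $\zeta$ is contained in the union, over all $w$ in $B$ and all unordered pairs $\{e_1,e_2\}$ of distinct in-edges at $w$, of the event $E_{w,e_1,e_2}$ that there exist local shortest paths from $s$ to $w$ ending in $e_1$ and $e_2$ respectively with equal length and equal nuance; the total number of such $(w,\{e_1,e_2\})$ pairs is at most $n\binom{\Delta}{2}$.

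For each individual $E_{w,e_1,e_2}$, I would fix canonical local shortest paths $P_1^\dagger, P_2^\dagger$ ending in $e_1$ and $e_2$ (selected by a deterministic tie-breaker on unperturbed lengths and vertex sequences) and apply Lemma~\ref{lemma:path_equal_pr} to bound $Pr\{\rho(P_1^\dagger)=\rho(P_2^\dagger)\}\le 1/\tau$. The residual subcase, in which $E_{w,e_1,e_2}$ is witnessed by a non-canonical pair $(P_1,P_2)\ne(P_1^\dagger,P_2^\dagger)$, would be absorbed by noting that it forces the local shortest path from $s$ to the tail of $e_1$ or $e_2$ to be itself non-unique, which is already counted as a root-cause event at a strictly earlier vertex. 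Summing $1/\tau$ over at most $n$ vertices and $\binom{\Delta}{2}$ in-edge pairs then yields $Pr\{\zeta\}\le n\binom{\Delta}{2}/\tau$, and doubling for $\zeta'$ gives the claim.

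The main obstacle will be the absorption argument in the third paragraph: one has to choose $P_1^\dagger,P_2^\dagger$ canonically enough that non-canonical realizations of $E_{w,e_1,e_2}$ provably propagate to a root-cause at an earlier vertex without double-counting or circular dependence. I expect to handle this by induction on the number of edges in the colliding pair, grounding the recursion at $s$. Once that technicality is settled, the remaining calculation is a routine chaining of union bounds with Lemma~\ref{lemma:path_equal_pr}.
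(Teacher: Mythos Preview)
Your outline follows the same skeleton as the paper's proof: split $\zeta$ and $\zeta'$ by symmetry, attribute each failure to a vertex $w$ together with a pair of distinct in-edges, and invoke Lemma~\ref{lemma:path_equal_pr} for the $1/\tau$ factor. The difference is in how the critical vertex is isolated, and this is exactly where your absorption step runs into trouble.

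The paper does not try to fix canonical representative paths at all. Instead it orders the reachable vertices $v_0=s,v_1,\dots,v_k$ by their unperturbed local distance $d_s(\cdot)$ and lets $\zeta_i$ be the event that $v_i$ is the \emph{first} vertex in this order at which the local shortest path from $s$ fails to be unique. The events $\zeta_i$ are disjoint and cover $\zeta$. On $\zeta_i$, every predecessor $u_j$ of $v_i$ lies among $v_0,\dots,v_{i-1}$ and therefore already has a unique local shortest path; hence there is exactly one candidate path to $v_i$ per in-edge, so at most $\Delta$ candidates, and a union bound over the $\binom{\Delta}{2}$ pairs together with Lemma~\ref{lemma:path_equal_pr} gives $Pr\{\zeta_i\}\le\binom{\Delta}{2}/\tau$ directly. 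No canonical paths and no absorption are needed.

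Your absorption step, as you have written it, does not close. You fix $P_1^\dagger,P_2^\dagger$ by a deterministic tiebreaker that ignores $\rho$, and then claim that if the actual witnessing pair $(P_1,P_2)$ differs from $(P_1^\dagger,P_2^\dagger)$ then the local shortest path to the tail of $e_1$ or $e_2$ must already be non-unique. But suppose there are two min-$l$ paths $A\ne B$ from $s$ to the tail $u_1$ of $e_1$ with $\rho(A)<\rho(B)$, and your deterministic rule happens to select $B$, so $P_1^\dagger=B+e_1$. Then the local shortest path to $u_1$ is \emph{unique} (it is $A$), yet the true witness satisfies $P_1=A+e_1\ne P_1^\dagger$. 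There is no earlier non-uniqueness to absorb into, and $\rho(P_1)=\rho(P_2)$ need not imply $\rho(P_1^\dagger)=\rho(P_2^\dagger)$. The induction on edge count cannot rescue this, because what you actually need is uniqueness at every vertex closer to $s$, which is precisely what the paper's distance ordering and the definition of $\zeta_i$ hand you for free.
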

\begin{proof}
We will prove that $Pr\{\zeta\} \le {\Delta \choose 2}\cdot n / \tau$. By symmetry, it can also be shown that $Pr\{\zeta'\} \le {\Delta \choose 2}\cdot n / \tau$, leading to $\left.Pr\{\zeta \vee \zeta'\} \le {\Delta \choose 2}\cdot 2n / \tau\right.$.

Let $d_s(v)$ be length of the the local shortest path distance from $s$ to $v$ in $B$. Let $\langle v_0, v_1, \ldots, v_{k}\rangle$ be a permutation of all the nodes that can be reached from $s$ via local paths in $B$, such that $d_s(v_i)\leq d_s(v_j)$ for any $0\leq i<j\leq k$. That is, $v_0,v_1,\ldots,v_{k}$ are sorted in a non-decreasing order of their distances from $s$. Note that $v_0 = s$.
Let $\zeta_{i}$ ($i \in [1, k]$) be the event that (i) the local shortest path from $s$ to any $v_j$ ($j \in [0, i-1]$) in $B$ is unique, but (ii) the local shortest path from $s$ to $v_i$ is not unique. We have $Pr\{\zeta_1\} = 0$; otherwise, there must exist another node $u$ such that $d_s(u) \le d_s(v_1)$, contradicts the definition of $v_i$ ($i \in [0, k]$). In addition,
\begin{eqnarray*}
Pr\left\{\zeta\right\} &=& Pr\bigg\{\bigcup_{1\leq i\leq k}\zeta_i\bigg\}\\
&\le& Pr\{\zeta_1\} + Pr\bigg\{\bigcup_{2\leq i\leq k}\zeta_i\bigg\}\\
&=& Pr\bigg\{\bigcup_{2\leq i\leq k}\zeta^i\bigg\}.
\end{eqnarray*}

Now let us consider $Pr\{\zeta^i\}$ for $i \in [2, n-1]$. Let $\mathcal{P}_{s,v_i}=\{P_1,\ldots,P_q\}$ be the set of local shortest paths from $s$ to $v_i$ in $B$. For each $P_j$ ($j \in [1, q]$), let $\langle u_j,v_i \rangle$ be the last edge on $P$. Then, $u_j$ should be in $\{v_0,v_1,\ldots, v_{i-1}\}$. By the definition of $\zeta_i$, the local shortest path from $s$ to $u_j$ in $B$ is unique. Furthermore, $q \le \Delta$, since the degree of $v_i$ is at most $\Delta$. By Lemma~\ref{lemma:path_equal_pr}, we have
\begin{eqnarray*} \nonumber
Pr(\zeta^i) &\leq & \sum_{1\leq j<k\leq q} Pr\left\{\rho(P_j)=\rho(P_k)\right\} \\
&\leq &\sum_{1\leq j<k\leq q} \frac{1}{m} \quad = \; \frac{1}{m}{q \choose 2} \quad \leq \;  \frac{1}{m}{\Delta \choose 2}.
\end{eqnarray*}
Therefore,
\begin{eqnarray*}
Pr\left\{\bigcup_{2\leq i\leq n-1}\zeta_i\right\}
& \leq & \sum_{2\leq i\leq n-1}Pr\left\{\zeta_i\right\} \quad \leq \quad \frac{n}{\tau}{\Delta \choose 2},
\end{eqnarray*}
which completes the proof.
\end{proof}

Given Lemma~\ref{lemma:fixnode_equal_pr}, we prove Theorem~\ref{thrm:perturb} as follows:
\begin{proof}[Of Theorem~\ref{thrm:perturb}]
Let $s$ be an arbitrary node in $G$. For any $R_i$ ($i \in [0, h]$), there exist at most $16$ ($4{\times}4$)-cell regions in $R_i$ that contains $s$. By Lemma~\ref{lemma:fixnode_equal_pr}, for each $B$ of those $16$ regions, there is at most ${\Delta \choose 2}\cdot n / \tau$ probability that $B$ contains non-unique local shortest paths between $s$ and and another node. Taking in account all possible choices of $B$ in all $R_i$ and all possible choices of $s$, Assumption~\ref{assu:perturb} fails with a probability at most
\begin{eqnarray*}
{\Delta \choose 2}\cdot 32 n / \tau \cdot h \cdot n &=& {\Delta \choose 2}\cdot 32 n^2 h / \tau.
\end{eqnarray*}
By setting $\tau \ge 32 n^3 h {\Delta \choose 2} $, we can guarantee that the above probability is at most $1/n$. Therefore, Theorem~\ref{thrm:perturb} is proved.
\end{proof}

\begin{figure*}[t]
\centering
\begin{minipage}[t]{3 in}
\centering
\includegraphics[width = 2in]{./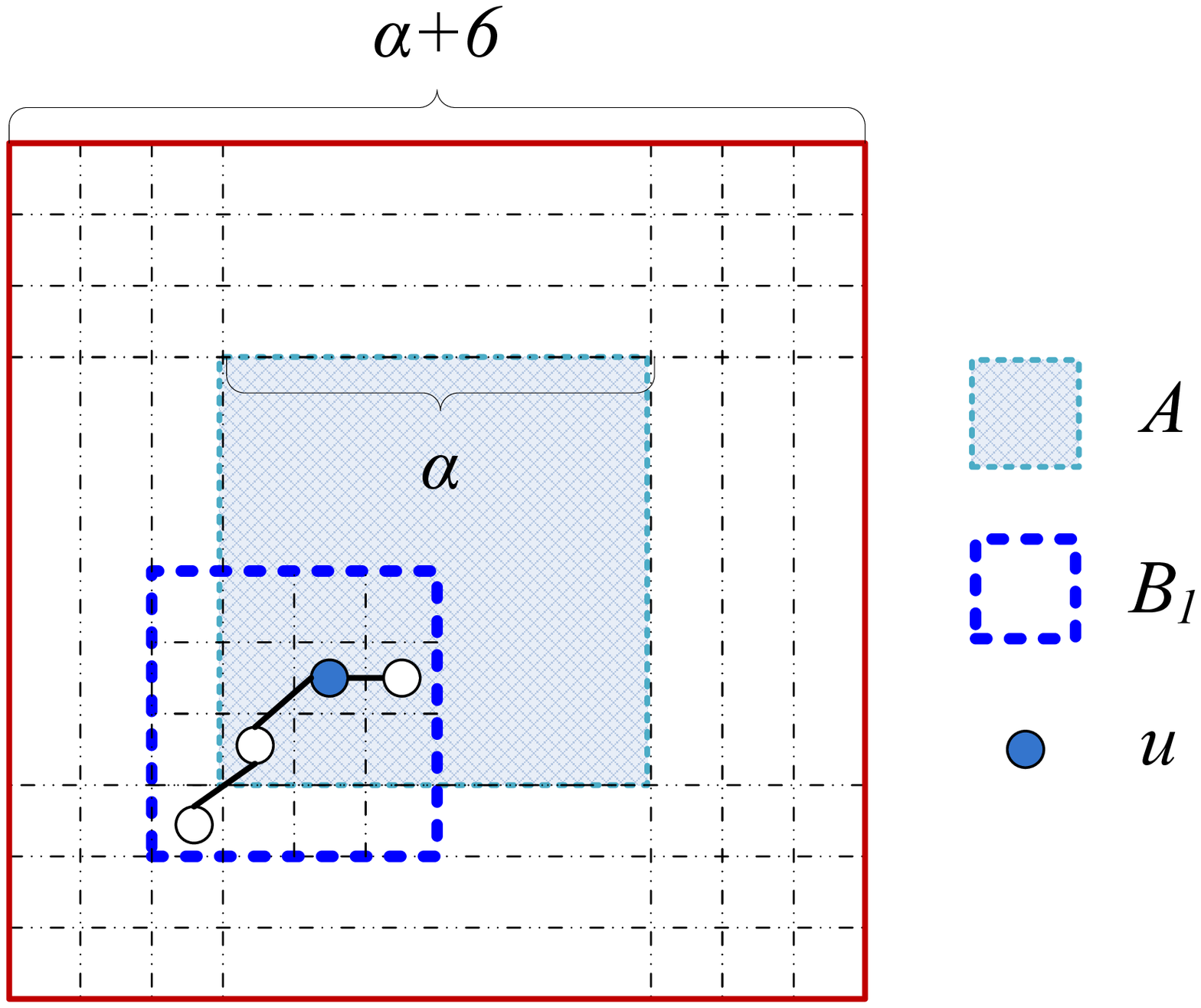}
\figcapup \caption{$B_1$ is completely contained in the ($(\alpha{+}6){\times} (\alpha{+}6)$)-cell region.} 
\label{fig:density_1}
\end{minipage}
\hspace{10mm}
\begin{minipage}[t]{3 in}
\centering
\includegraphics[width = 2in]{./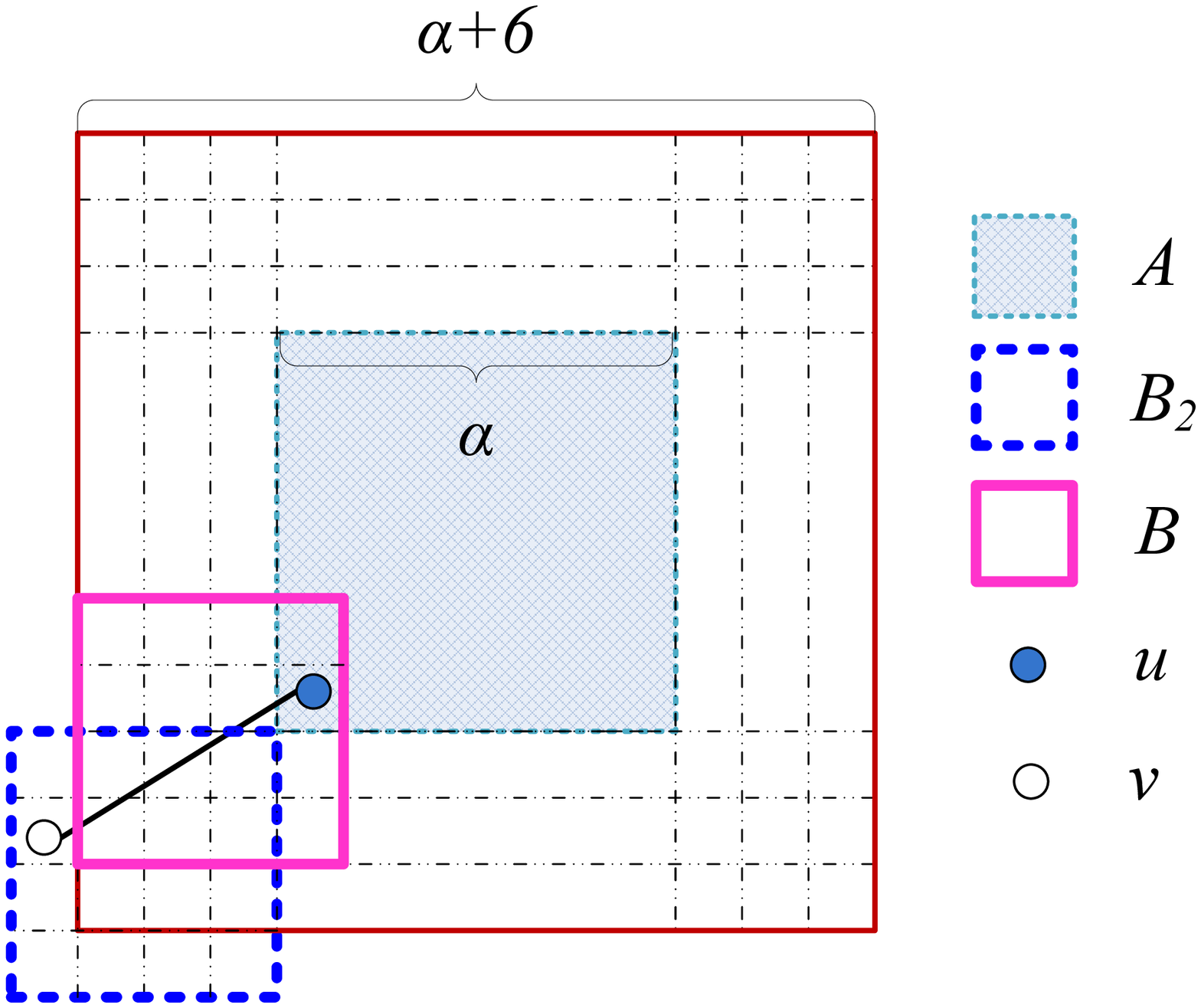}
\figcapup \caption{$B_2$ is not completely contained in the ($(\alpha{+}6){\times} (\alpha{+}6)$)-cell region.} 
\label{fig:density_2}
\end{minipage}
\end{figure*}

\header
{\bf Remark.}
The above perturbation method requires generating random numbers in the integer range of $[0, \tau-1]$, which causes practical concerns since $\tau-1$ can be too large to be represented with a normal integer. We address this issue by using multiple random integers in a relatively narrow range to represent $\tau$. In particular, to generate the nuance for an edge, we can use $k$ random integers in the range of $[0,\tau'-1]$, where $\tau'=\tau^\frac{1}{k}$. Accordingly, the nuance on each edge would be a $k$-dimensional vector. It can be verified that, under such edge perturbation, the results in this section still hold.

\section{The SlidingWindow Algorithm} \label{sec:slidingwindow}

This section presents an algorithm called {\em SlidingWindow}, which will be used in proving the key lemmas in the following sections. Let $P$ be a shortest path in $G$, such that no ($3 {\times} 3$)-cell region in $R_i$ ($i \in [1, h]$) can cover all nodes in $P$ simultaneously. Given $P$ and $R_i$, the {\em SlidingWindow} algorithm identifies a ($4 {\times} 4$)-cell region $B$ in $R_i$, such that $B$ has a spanning path $P'$ that is a sub-path of $P$. Figure~\ref{code:slidingwindow} shows the pseudo-code of {\em SlidingWindow}.

\begin{figure}
\begin{small}
\begin{tabbing}
\line(1,0){240}\\
\hspace{4mm} \= \hspace{4mm} \= \hspace{4mm} \= \hspace{4mm} \=
\hspace{4mm} \= \hspace{4mm} \= \hspace{4mm} \= \hspace{4mm} \=
\hspace{4mm} \kill
{\bf Algorithm {\em SlidingWindow}} ( $P=\langle v_1,v_2,\ldots,v_k\rangle$, $R_i$ ) \\[1mm]
1.\> Let $\mathcal{B}$ be a set that contains any region $B$ consisting of a consecutive \\
\> block of $x\times y$ cells in $R_i$ ($x, y>0$).\\
2.\> Initialize $\theta=0$.\\
3.\> For $i=1$ to $k$\\
4.\>\> $\theta = j$. \\
5.\>\> Let $B_j$ be the smallest region in $\mathcal{B}$ that covers $v_1,v_2,\ldots,v_j$ \\
\>\> simultaneously. \\
6.\>\> if $B_j$ is at least $4$ cells in width or height, then break. \\
7.\> If $B_\theta$ is at least $4$ cells in width \\
8.\>\> Let $v_\alpha$ and $v_\beta$ be the nodes in $\{v_1, v_2, \ldots, v_\theta\}$ with the smallest \\
\>\> and largest x-coordinates, respectively.\\
9.\>\> Let $B$ be any ($4\times 4$)-cell region in $R_i$ such that (i) $B$ covers $v_1$\\
\>\> $v_2, \ldots, v_{\theta-1}$ simultaneously, and (ii) $v_\alpha$ is in the west strip of $B$.\\
10.\> Else \\
11.\>\> Let $v_\alpha$ and $v_\beta$ be the nodes in $\{v_1, v_2, \ldots, v_\theta\}$ with the smallest \\
\>\> and largest y-coordinates, respectively.\\
12.\>\> Let $B$ be any ($4\times 4$)-cell region in $R_i$ such that (i) $B$ covers $v_1$\\
\>\> $v_2, \ldots, v_{\theta-1}$ simultaneously, and (ii) $v_\alpha$ is in the south strip of $B$.\\
13.\> Let $a=\min\{\alpha,\beta\}$, and $b=\max\{\alpha,\beta\}$. \\
14.\> Return $B$ and $P'=\langle v_a, v_{a+1}, \ldots, v_b \rangle$.\\
\line(1,0){240}
\end{tabbing}
\end{small}
\algocapup
\caption{The {\em SlidingWindow} Algorithm} \label{code:slidingwindow}
\algocapdown
\end{figure}

Given the grid $R_i$ and a path $P=\langle v_1,v_2,\ldots,v_k\rangle$, the algorithm scans the nodes in $P$ one by one (Line 3-6 in Figure~\ref{fig:slidingwindow}). Each time it scans a node $v_j$ in $P$, it computes a minimal rectangular region $B_\theta$ (in $R_i$) that contains all nodes $v_1, v_2, \ldots, v_j$ that have been visited (Line 5). If $B_\theta$ is at least $4$ cells in width or height, then the path $\langle v_1, v_2, \ldots, v_j \rangle$ must contain a sub-path $P'$ that is a spanning path of some ($4\times4$)-cell region $B$. To derive such a region $B$, the algorithm inspects the nodes in $\{v_1, v_2, \ldots, v_j\}$, and then identifies two nodes $v_{\alpha}$ and $v_{\beta}$ as the endpoints of the sub-path $P'$ (Lines 7-8). Based on $v_{\alpha}$ and $v_{\beta}$, the algorithm identifies $B$ (Line 12), and then returns $B$ and $P'$ as the result (Lines 13-14).

For example, let us consider the path $P=\langle v_1, v_2, \ldots, v_6 \rangle$ shown in Figure~\ref{fig:slidingwindow}. Given $P$ and $R_i$, the {\em SlidingWindow} algorithm examines the nodes in $P$ one by one, and monitors the minimal rectangular region $B_\theta$ (in $R_i$) that covers all nodes visited. Figure~\ref{fig:slidingwindow} illustrates the region $B_\theta$ right after $v_5$ is visited. As $B_\theta$ is $5$ cells in width, the algorithm stops the examination procedure, and identifies the nodes with the smallest (resp.\ largest) x-coordinate in $\{v_1, v_2, \ldots, v_5\}$, i.e., $v_2$ (resp.\ $v_5$). After that, the algorithm derives the ($4{\times}4$)-cell region $B$ in Figure~\ref{fig:slidingwindow}, such that (i) $B$ covers $v_1, v_2, v_3, v_4$ simultaneously, and (ii) $v_2$ is in the west strip of $B$. Finally, the algorithm returns $B$ and the path $P' = \langle v_2, v_3, v_4, v_5\rangle$.

\begin{figure}[t]
\centering

\includegraphics[width = 2.3in]{./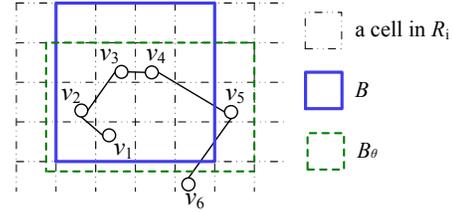}
\vspace{-2mm}
\figcapup \caption{Illustration of the {\em SlidingWindow} algorithm.} \figcapdown \label{fig:slidingwindow}
\end{figure}

\section{Proofs of Lemmas 1 and 2} \label{sec:FC-pre}

\setcounter{lemma}{0}
\begin{lemma}
Any ($\alpha {\times} \alpha$)-cell region in $R_i$ contains $O(\alpha^2 \a)$ level-$i$ nodes in $H$, where $\a$ is the arterial dimension of $G$.
\end{lemma}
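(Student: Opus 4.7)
The plan is to reduce the counting of level-$i$ nodes in a given $(\alpha{\times}\alpha)$-cell region $C$ of $R_i$ to a counting of endpoints of arterial edges, using Assumption~\ref{assu:def-ae}. By the definition of node levels, every level-$i$ node $v$ is adjacent to some level-$i$ edge, and each level-$i$ edge is by construction an arterial edge of some $(4{\times}4)$-cell region in $R_i$. Thus every level-$i$ node $v \in C$ arises as an endpoint of an arterial edge of some $(4{\times}4)$-cell region $B \subseteq R_i$, and my strategy is to bound both the number of relevant $B$'s and the number of endpoints each $B$ contributes.

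First I would show a locality claim: if $v$ is an endpoint of an arterial edge $e$ of a $(4{\times}4)$-cell region $B$, then $v$ lies inside $B$. This follows directly from Definition~\ref{def:def-ae}: $e$ lies on a local shortest path of $B$ and intersects an interior bisector of $B$, so both endpoints of $e$ are incident to cells strictly inside $B$. This locality claim, which I expect to be the only mildly subtle step, forces $B \cap C \neq \emptyset$ whenever $v \in C$.

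Given the locality claim, I would next count the $(4{\times}4)$-cell regions $B$ in $R_i$ that can overlap $C$. A region $B$ is determined by its top-left cell, and $B$ overlaps $C$ iff this cell lies in a bounded axis-aligned window of dimension $(\alpha+3)\times(\alpha+3)$ around $C$'s corner. Hence at most $(\alpha+3)^2 = O(\alpha^2)$ such regions exist. By Assumption~\ref{assu:def-ae} each has at most $\lambda$ arterial edges, contributing at most $2\lambda$ endpoints.

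Finally I would conclude by a union bound: the set of level-$i$ nodes in $C$ is contained in the union of endpoint sets of arterial edges of the $O(\alpha^2)$ candidate regions $B$, giving a total of $O(\alpha^2) \cdot O(\lambda) = O(\alpha^2 \lambda)$ level-$i$ nodes, as claimed. No counting argument beyond pigeonhole plus Assumption~\ref{assu:def-ae} is required, and the only technical care is in verifying the locality claim via a careful unpacking of the definitions of ``local path'', ``spanning path'', and ``arterial edge'' in Section~\ref{sec:def}.
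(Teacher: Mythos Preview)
Your overall strategy matches the paper's---count $(4{\times}4)$-cell regions near $C$ and apply the $\lambda$ bound---but your locality claim is false as stated, and this is a genuine gap. You assert that every endpoint of an arterial edge of a $(4{\times}4)$-cell region $B$ lies inside $B$, reasoning that $e$ crosses an interior bisector $l_b$. But Definition~\ref{def:def-ae} only requires the spanning path $P$ to be a \emph{local} path, i.e., at most one edge of $P$ may cross the boundary of $B$; nothing prevents that boundary-crossing edge from being the same edge that crosses $l_b$. In that situation one endpoint of the arterial edge lies outside $B$, and if the edge is long enough the witnessing region $B$ can even be disjoint from your region $C$. So from ``$v\in C$ is a level-$i$ node'' you cannot conclude ``the $(4{\times}4)$ region $B$ whose arterial edge touches $v$ overlaps $C$.''

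The paper's proof confronts exactly this obstruction. It enlarges $C$ to an $((\alpha{+}6){\times}(\alpha{+}6))$-cell window so that every $(4{\times}4)$ region overlapping $C$ sits inside the window, giving the $(\alpha+3)^2$ count you also obtain. It then handles a second case separately: a level-$i$ node $u\in C$ that is the endpoint of an arterial edge for some $B_2$ \emph{disjoint} from $C$. The paper shows that the spanning path $P$ of $B_2$ is simultaneously a spanning path of another $(4{\times}4)$ region $B$ that \emph{does} contain $u$ (obtained by sliding a $4{\times}4$ window so that $u$ lies in the appropriate strip), so the same edge is an arterial edge of $B$ and $u$ is already accounted for by the first group. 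Your argument is missing precisely this re-centering step; once you supply it, your proof coincides with the paper's.
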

\begin{proof}
Consider the ($\alpha {\times} \alpha$)-cell region $A$ in Figure~\ref{fig:density_1}, as well as the ($(\alpha{+}6)\times (\alpha{+}6)$)-cell region that is centered at $A$. The level-$i$ nodes that fall in $A$ can be categorized into two overlapping groups. The first group contains the endpoints of the arterial edges for a region $B_1$ completely covered by the ($(\alpha{+}6)\times (\alpha{+}6)$)-cell region, and the second group consists of the endpoints of the arterial edges for a region ($4{\times}4$)-cell region $B_2$ that is disjoint from $A$.

The number of nodes in the first group is at most $2\cdot (\alpha+3)^2 \cdot \a$ nodes. This is because (i) the number of ($4{\times}4$)-cell regions contained in the ($(\alpha{+}6)\times (\alpha{+}6)$)-cell area is $(\alpha+3)^2$, and (ii) each ($4{\times}4$)-cell region has $\lambda$ arterial edges. Meanwhile, all nodes in the second group also appear in the first group. To explain, observe that for any node $u$ in $A$ and any ($4{\times}4$)-cell region $B_2$ that is disjoint from $A$, if $u$ is the endpoint of an arterial edge for $B_2$, then the edge must (i) connects $u$ to a node $v$ in $B_2$ and (ii) lies on a spanning path $P$ of $B_2$. It can be verified that there should exist a ($4{\times}4$)-cell region $B$ in $A$, such that $B$ contains $u$, and $P$ is a spanning path of $B$, as exemplified in Figure~\ref{fig:density_2}. In that case, the edge between $u$ and $v$ would also be an arterial edge for $B$. In other words, the node $u$ is also contained in the first group mentioned before. As a consequence, the total number of level-$i$ nodes in $A$ is $2\cdot (\alpha+3)^2 \cdot \a$, which proves the lemma.
\end{proof}

\begin{lemma} \label{lemma:fc-sliding}
Let $P$ be a shortest path in $G$, such that no ($3 {\times} 3$)-cell region in $R_i$ ($i \in [1, h]$) can cover all nodes in $P$ simultaneously. Then, $P$ must contain an arterial edge of some ($4 {\times} 4$)-cell region in $R_i$.
\end{lemma}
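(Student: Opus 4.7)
My plan is to reduce the lemma to the guarantees of the \emph{SlidingWindow} algorithm of Appendix~B, invoked on inputs $P$ and $R_i$. The first task is to show that the algorithm's scanning loop always terminates with a valid index $\theta \le k$. Let $B_j$ denote the minimum $R_i$-aligned rectangle of cells covering $v_1, \ldots, v_j$. The hypothesis that no $(3{\times}3)$-cell region of $R_i$ covers all of $P$ means that $B_k$ has width or height at least $4$, so there exists a smallest index $\theta$ at which $B_\theta$ first reaches width or height at least $4$. Without loss of generality the ``break'' happens on width (the height case is symmetric via the horizontal bisector), so $B_{\theta-1}$ has width and height both at most $3$ while $B_\theta$ has width exactly $4$.

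Let $(B, P')$ be the output of the algorithm, where $P' = \langle v_a, v_{a+1}, \ldots, v_b \rangle$ with $\{a,b\} = \{\alpha, \beta\}$. By Lines~8--9 of the algorithm, $v_\alpha$ lies in the west strip of $B$; and since $v_\beta$ has the largest $x$-coordinate among $v_1, \ldots, v_\theta$ while $B_\theta$ spans exactly $4$ columns starting from that of $v_\alpha$, the node $v_\beta$ lies in the east strip of $B$. The vertical placement of $B$ is chosen to cover $v_1, \ldots, v_{\theta-1}$ (whose $y$-range spans at most $3$ cells), and with at least one row of vertical slack on either side it also covers $v_\theta$. Consequently every node of $P'$ lies in $B$. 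I then verify that $P'$ satisfies each clause of Definition~1: $P'$ is a sub-path of the shortest path $P$, so $l(P')$ equals the network distance between $v_\alpha$ and $v_\beta$, which makes $P'$ a shortest local path in $B$; every edge of $P'$ has both endpoints in $B$, so $P'$ is a local path of $B$; and $v_\alpha, v_\beta$ straddle the vertical bisector of $B$ while occupying columns $1$ and $4$ respectively, so neither lies in a cell adjacent to the bisector.

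Having established that $P'$ is a spanning path of $B$, Definition~1 yields that the edge of $P'$ intersecting $B$'s vertical bisector is an arterial edge of $B$; since $P' \subseteq P$, this arterial edge is also an edge of $P$, proving the lemma. The subtlest step I expect is justifying that the $(4 \times 4)$-cell region $B$ chosen in Line~9 indeed contains the full subpath $P'$, including the critical node $v_\theta$: this relies on the fact that immediately before step $\theta$ the bounding box $B_{\theta-1}$ fits inside a $3 \times 3$ window, so the fourth column added at step $\theta$ provides exactly the slack needed in the triggering dimension while still leaving at least one row of vertical slack to accommodate the transverse coordinate of $v_\theta$.
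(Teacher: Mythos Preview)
Your reduction to the \emph{SlidingWindow} algorithm is exactly the paper's approach, and the paper's own proof of this lemma is just a two-sentence invocation of that algorithm. Your attempt to spell out the details is welcome, but it contains a genuine gap.

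The problematic claim is that ``$B_\theta$ has width exactly $4$''. This is false in general: the width of $B_j$ can jump from at most $3$ to $5$ or more in a single step, because an edge $\langle v_{\theta-1}, v_\theta\rangle$ of $G$ can span several cells of $R_i$. The paper's own running example in Appendix~B has $B_\theta$ of width $5$ after $v_5$ is scanned. Once this claim fails, so does your assertion that ``every node of $P'$ lies in $B$'': the node $v_\theta$ may lie outside $B$ horizontally (beyond column~$4$), and independently your ``one row of vertical slack'' argument does not control $v_\theta$'s transverse coordinate either, since nothing bounds how far $v_\theta$ can be from $v_{\theta-1}$ in the $y$-direction.

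The repair, which the paper carries out in its Lemma~\ref{lemma:sliding_window}, is to split into two cases. If $v_\theta\in B$, your argument goes through. If $v_\theta\notin B$, then $P'=\langle v_\alpha,\ldots,v_\theta\rangle$ still has all of $v_\alpha,\ldots,v_{\theta-1}$ inside $B$ (by the choice of $B$ in Line~9), so only the final edge $\langle v_{\theta-1},v_\theta\rangle$ crosses the boundary of $B$; hence $P'$ is still a \emph{local} path of $B$. Moreover, since $v_\theta$ realises the extremal $x$-coordinate and $B_\theta$ has width at least $4$ with $v_\alpha$ in column~$1$, the node $v_\theta$ lies in column~$4$ or beyond, so it is on the far side of $B$'s vertical bisector and not in a cell adjacent to it. Thus $P'$ is a spanning path of $B$ in either case, and the arterial-edge conclusion follows from Definition~\ref{def:def-ae}.
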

\begin{proof}
By the \emph{SlidingWindow} algorithm, we can always find a ($4 {\times} 4$)-cell region (denoted as $B$) in $R_i$, such that a sub-path of $P$ (denoted as $P'$) is a spanning path of $B$. Whenever such a region $B$ exists, $P$ must contain an arterial edge for $B$, and hence, the lemma is proved.
\end{proof}

\setcounter{lemma}{6}

\section{Proof of Lemma 3} \label{sec:proof-3}

In this section, we first revisit the preprocessing algorithm of AH, based on which we present the proof of Lemma~\ref{lmm:ah-longpath}.

\subsection{Preprocessing Algorithm Revisited} \label{sec:ah-revision}
Given a road network $G$, AH selects \emph{level $i$ cores} (the nodes that are at least at level $i$) in an incremental manner. At the $i$-th iteration, AH performs two steps: (i) it computes the spanning paths so as to select the level-$i$ cores, and (ii) it adds shortcuts concerning the boarder nodes in $R_{i+1}$ and the level-$i$ cores to obtain a reduced graph $G'_i$ for the next iteration.

More specifically, in the first step, each original edge $\langle u,v\rangle\in G$ is considered an edge generated from a region $B$, if $u$ is in $B$. Then, as for each iteration, AH selects level-($i{+}1$) cores in the following manner. First, AH imposes the grid $R_{i+1}$ on $G'_i$ (note that $G_0'=G$). Then, for each region $B$ in $R_{i+1}$, AH inspects each sub-graph of $G'_i$ that overlaps with $B$, denoted as $G'_{i,B}$. For each boarder node $s$ of $B$, AH invokes Dijkstra's algorithm to start a traversal on $G'_{i,B}$ from $s$; for each node $u$ visited, AH follows its outgoing edges $\langle u,v \rangle$ such that (i) $v$ is a level-$i$ core, and (ii) $\langle u,v \rangle$ satisfies the coverage condition. This results in a spanning tree $T_s$. Subsequently, for each node $u$ on $T_s$, AH inspects its outgoing edges $\langle u,t \rangle$ in $G'_{i,B}$, such that (i) $\langle u,t \rangle$ satisfies the coverage condition, and (ii) $t$ is a boarder node in $B$, or $t$ is not in $B$ and $t$ is a level-$i$ core. As such, a path from $s$ to $t$ is obtained. Similarly, AH invokes Dijkstra's algorithm to start a traversal from $s$ again, but with the difference that for each node visited, AH follows its incoming edges. We use $\mathcal{P}_{i+1,B}$ to denote the paths thus obtained. We will prove that each path $P\in \mathcal{P}_{i+1,B}$ is a spanning path of $B$ in Lemma~\ref{lemma:spanning_path}, based on Lemma~\ref{lemma:sliding_window} below.

After the level-($i{+}1$) cores are selected, AH adds shortcuts concerning the boarder nodes of $R_{i+2}$ and the level-($i{+}1$) cores to form $G_{i+1}'$ in this manner: let $B$ be a ($4\times 4$)-cell region in $R_{i+1}$, and $u$ a boarder node of $R_{i+2}$ or a level-($i{+}1$) core in $B$. Then, similar to the process of finding spanning paths, AH invokes a similar constrained version of Dijkstra's algorithm to start a traversal on $G'_{i,B}$ from $u$. This results in a spanning tree $T_u$. Subsequently, AH examines each branch on $T_u$: let $v$ be the first level-($i{+}1$) core on the branch. Then, AH adds $\langle u, v\rangle$ as a shortcut. Similarly, AH also invokes a constrained version of Dijkstra's algorithm to  start a traversal from $s$ again, but with the difference that for each node visited, AH follows its incoming edges. After that, AH adds shortcuts from level-($i{+}1$) cores to $u$.

For convenience, we define a few terms that will be frequently used in our proofs. Let $B$ be a region in $R_i$ and $B'$ be a region in $R_j$ ($i<j$). If $B$ is completely contained in $B'$, we say that $B$ is a {\em sub-region} of $B'$ and $B'$ is a {\em super-region} of $B$. Let $P$ be a path from a node $s$ to another node $t$. We say $P$ is {\em contained} in a region $B$, if all the nodes on $P$ are in $B$. Under the grid $R_i$, we say two nodes $s$ and $t$ are {\em far-apart} if they are not covered in the same $3\times 3$ cell region.

\subsection{Supporting Lemmas and Proofs} \label{sec:proof-3-proof}

Our proof of Lemma~\ref{lmm:ah-longpath} is based on Lemma~\ref{lemma:sliding_window} and Lemma~\ref{lemma:spanning_path}.

\begin{lemma} \label{lemma:sliding_window}
Let $B_s$ be a region in $R_j$, $s$ and $t$ be two nodes in $B_s$, and $P$ be the local shortest path from $s$ to $t$ in $B_s$. If no ($4\times 4$)-cell region in $R_i$ ($i<j$) can cover all nodes on $P$, then:
\begin{enumerate}
\item [1.] There is a ($4\times 4$)-cell sub-region of $B_s$ in $R_i$ (denoted as $B$), that makes a sub-path of $P$ (denoted as $P'$) a spanning path of $B$.
\item [2.] Either (i) $P'$ is contained in $B$, or (ii) $P'$ is not fully contained in $B$, and the last two nodes on $P'$ are not in two adjacent cells in $R_i$.
\end{enumerate}
\end{lemma}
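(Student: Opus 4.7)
\noindent The plan is to invoke the \emph{SlidingWindow} algorithm of Appendix~B on the path $P = \langle v_1, v_2, \ldots, v_k\rangle$ together with the grid $R_i$, and then read the two claims off the structure of its output. First, I would argue that the break at line~6 of \emph{SlidingWindow} fires at some iteration $\theta \le k$: by hypothesis no $(4\times 4)$-cell region in $R_i$ covers all of $P$, so the minimal bounding rectangle $B_j$ of the visited prefix must reach at least four cells in width or height at some $j = \theta \le k$. Let $B$ and $P' = \langle v_a, v_{a+1}, \ldots, v_b \rangle$ denote the resulting region and subpath.

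For Claim~1, I would use the construction in lines~7--12. In the width branch (the height branch being symmetric), $v_\alpha$ is the node in $\{v_1,\ldots,v_\theta\}$ with the smallest $x$-coordinate and is placed in the west strip of $B$, while $v_\beta$ (the largest $x$-coordinate) sits at least three cells to the east of $v_\alpha$, hence either in $B$'s east strip or outside $B$. In either case, $v_a$ and $v_b$ lie on opposite sides of $B$'s vertical bisector, and neither sits in a cell adjacent to the bisector. Since $P$ is a shortest path in $G$, its subpath $P'$ is also a shortest path in $G$ between its endpoints, so no local path in $B$ sharing these endpoints can be shorter; this identifies $P'$ as a local shortest path of $B$ and hence a spanning path of $B$ in the sense of Definition~1. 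The containment $B \subseteq B_s$ will then follow from the facts that every $v_1,\ldots,v_{\theta-1}$ lies in $B_s$, that the bounding rectangle $B_{\theta-1}$ has width and height at most $3$ in $R_i$, and that because $j > i$ the region $B_s$ measured in $R_i$-cells is comfortably larger than $(4\times 4)$ cells, leaving enough slack to position $B$ around the prefix.

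For Claim~2, I would split on whether the final prefix node $v_\theta$ lies inside $B$. If $v_\theta \in B$, then $v_1,\ldots,v_\theta$ all lie in $B$ and so does $P' \subseteq \{v_a,\ldots,v_b\} \subseteq \{v_1,\ldots,v_\theta\}$, which is case~(i). Otherwise, assume the width branch fired and (without loss of generality) $v_\theta$ is the new eastmost extremum; then $B$ spans the four columns $[x(v_\alpha),\, x(v_\alpha)+3]$, while $v_{\theta-1}$ lies in $B_{\theta-1}$ with column in $[x(v_\alpha),\, x(v_\alpha)+2]$. For $v_\theta$ to escape $B$ to the east it must satisfy $x(v_\theta) \ge x(v_\alpha)+4$, giving $x(v_\theta) - x(v_{\theta-1}) \ge 2$. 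Cells separated by two or more columns share neither a side nor a corner in $R_i$, which is exactly the non-adjacency required in case~(ii).

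The main obstacle I anticipate is the careful and symmetric case analysis behind Claim~2: the off-by-one geometry of strips, cell coordinates, and the width-versus-height dichotomy (including the case where $v_\theta$ is a new extremum in both dimensions simultaneously) must be handled consistently. A secondary subtlety is the containment $B \subseteq B_s$, which rests on $B_s$ being a region of the coarser grid $R_j$ with $j > i$ and therefore wide enough in $R_i$-cells to host the $(4\times 4)$-cell region $B$ that \emph{SlidingWindow} constructs around the prefix of $P$.
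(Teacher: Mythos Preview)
Your proposal is correct and follows essentially the same approach as the paper: invoke \emph{SlidingWindow}, then read off both claims from the structure of $B_\theta$ at the break point, splitting Claim~2 on whether $v_\theta$ lands inside $B$ (equivalently, on whether $B_\theta$ has width exactly~4 or strictly more). Two small remarks: first, the hypothesis gives $P$ only as the \emph{local} shortest path in $B_s$, not a shortest path in $G$, so your justification that $P'$ is a local shortest path in $B$ should be phrased accordingly; second, for the containment $B\subseteq B_s$ the paper pins down not just the horizontal position of $B$ (via $v_\alpha$ in the west strip) but also the vertical one, by placing the node of minimal $y$-coordinate among $v_a,\ldots,v_b$ in the south strip of $B$---this is the concrete version of the ``slack'' you allude to.
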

\begin{proof}
First, we can obtain $B$ and $P'$ using the \emph{SlidingWindow} algorithm in Figure~\ref{code:slidingwindow}, with $P$ and $R_i$ as the input. Second, we show that using the \emph{SlidingWindow} algorithm, we can find $B$ which is a sub-region of $B_s$.  Without loss of generality, suppose that $B_\theta$ is at least $4$ cells in width (Line 7 in Figure~\ref{fig:slidingwindow}), and $P'=\langle v_a, v_{a+1}, \ldots, v_b\rangle$ is a horizontal spanning path of $B$, i.e., $v_a$ is in the west strip of $B$ and $v_b$ is in the east strip (or to the east strip) of $B$.  Suppose that $v_c$ has the smallest y-coordinate among the nodes $v_a, v_{a+1}, \ldots, v_b$. Then we can derive a ($4\times 4$)-cell region $B$, where $v_c$ is in the south strip of $B$. It could be verified that $B$ is a sub-region of $B_s$ because: (i)  the side length of a cell in $B$ is smaller than the side length of a cell in $B_s$, (ii) the largest difference of the y-coordinates among the nodes $v_a, v_{a+1}, \ldots, v_{b-1}$ is less than $4$, and (iii) all the nodes on $P'$ is contained in $B_s$. Hence, Statement 1 is proved.

Next, we show that the $P'$ and $B$ obtained satisfy the two conditions stated in Statement $2$.  Without loss of generality, suppose that after the iteration (Line 3-6 in Figure~\ref{fig:slidingwindow}) terminates, $B_\theta$ is at least four cells in width (Line 7 in Figure~\ref{fig:slidingwindow}). We consider two cases: (i) $B_\theta$'s width is exactly $4$, and (ii) $B_\theta$'s width is larger than $4$. In case (i), it can be verified that $P'$ is contained in $B$. In case (ii), apparently, the width of $B_\theta$ is less than $4$ before $v_\theta$ is visited, and is larger than $4$ after $v_\theta$ is visited. Therefore, $v_{\theta-1}$ and $v_\theta$ cannot be in two adjacent cells.

Hence, the lemma is proved.
\end{proof}

\begin{lemma} \label{lemma:spanning_path}
Let $B$ be a ($4\times 4$)-cell region in $R_i$. Then, the following statements are true:
\begin{enumerate}
\item[1.] Each $P\in \mathcal{P}_{i,B}$ is a spanning path of $B$.

\item[2.] For any $P\in \mathcal{P}_{i,B}$, a path from $s$ to $t$, either (a) $P$ contains only one edge, and $s,t$ are level-$i$ cores, or (b) $P$ contains more than one edge, then a node $w$ on $P$ with $w\not=s,t$ is a level-$i$ core, and $w$ is in $B$.

\item[3.] Let $B_s$ be a region in $R_j$ ($j>i$), $s,t$ be two nodes in $B_s$, and $P$ the local shortest path from $s$ to $t$ in $B_s$. If $s,t$ are far-apart in $R_i$, then there exists a ($4\times 4$)-cell region $B$ in $R_i$, where $B$ is a sub-region of $B_s$, such that $P$ covers a path in $\mathcal{P}_{i,B}$.

\item[4.] Let $B_s$ be a region in $R_j$ ($j>i$). Let $s,t$ be two nodes in $B_s$ and $P$ the shortest path from $s$ to $t$ in $B_s$. If $s,t$ are far-apart in $R_i$, then $P$ is covered by a level-$i$ core. Further, if $P$ contains multiple edges, then there is a level-$i$ core $u$ on $P$ where $u\not=s,t$.
\end{enumerate}
\end{lemma}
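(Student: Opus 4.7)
The plan is to establish the four statements roughly in order, with Statements~1 and~2 characterising the paths actually emitted by AH's constrained Dijkstra in Section~\ref{sec:ah-revision}, and Statements~3 and~4 then using those structural facts to analyse an arbitrary local shortest path between far-apart endpoints. For Statement~1, I would unpack the traversal rules directly. By the termination condition, every $P \in \mathcal{P}_{i,B}$ begins at a border node $s$ of $B$ and ends at another border node $t$; by Definition~\ref{def:ah-border} neither $s$ nor $t$ lies in the central $2\times 2$ cells of $B$, and the west/east (or north/south) strip they lie in forces them onto opposite sides of some bisector of $B$, so both conditions of Definition~\ref{def:def-ae} are met. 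The local-shortest-path property in $B$ follows from Dijkstra's optimality together with the coverage and border filters. Statement~2 follows from the edge-relaxation rule of the constrained traversal: an interior node $w$ of $P$ is relaxed only along an edge whose head was already selected as a core in an earlier iteration, so every such $w$ is a level-$i$ core in the indexing of the lemma, and it lies in $B$ because the coverage condition confines the traversal to $B$. The single-edge case is handled by observing that in the reduction yielding $G'_{i-1}$ a direct border-to-border edge is retained only when both endpoints are themselves cores.

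The main technical content sits in Statement~3, for which I would invoke the \emph{SlidingWindow} procedure of Appendix~\ref{sec:slidingwindow}. Feeding $P$ and the grid $R_i$ into \emph{SlidingWindow} and applying Lemma~\ref{lemma:sliding_window} returns a $(4{\times}4)$-cell sub-region $B \subseteq B_s$ together with a sub-path $P'$ of $P$ that is a spanning path of $B$. To conclude $P' \in \mathcal{P}_{i,B}$ I would verify that $P'$ matches exactly what AH's constrained Dijkstra from the appropriate border endpoint of $P'$ would discover: its endpoints are border nodes of $B$ by the second part of Lemma~\ref{lemma:sliding_window}; every shortcut appearing on $P'$ was generated in an earlier iteration from a region fully contained in $B$ (coverage condition), which follows from the fact that it originated in a grid strictly finer than $R_i$; and every interior node of $P'$ is already a level-$(i{-}1)$ core, established by a strong-induction appeal to Statement~4 at index $i-1$ applied to the relevant sub-region.

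Statement~4 then falls out by combining Statements~2 and~3: the sub-path $P' \in \mathcal{P}_{i,B}$ furnished by Statement~3 contains a level-$i$ core by Statement~2, and this core lies on $P$ because $P'$ is a sub-path of $P$. The multi-edge refinement uses Statement~2(b) to place the core strictly between the endpoints of $P'$, which are themselves strictly interior to $P$ whenever $s$ and $t$ are far-apart in $R_i$. The main obstacle I foresee is the mutual induction visible in the argument for Statement~3: the claim that interior nodes of $P'$ are level-$(i{-}1)$ cores depends on the lemma itself at smaller indices and on a careful accounting of which sub-regions the shortcuts of $G'_{i-1}$ originated from. Setting up the base case at $i=1$, where $G'_0 = G$ contains only original edges so interior nodes of $P'$ are trivially original nodes in $B$, and threading the joint induction cleanly through all four statements is the delicate step; the geometric content is then largely supplied by Lemma~\ref{lemma:sliding_window} and by AH's construction rules.
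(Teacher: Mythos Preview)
Your overall architecture---joint induction on $i$, base case $i=1$ with $G'_0=G$, \emph{SlidingWindow} supplying the sub-region for Statement~3, and Statements~2 and~3 together yielding Statement~4---matches the paper's. But two of your steps do not go through as written.

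For Statement~2 your mechanism is wrong. The constrained traversal that produces $\mathcal{P}_{i,B}$ relaxes only through level-$(i{-}1)$ cores, so the edge-relaxation rule tells you that interior nodes of $P$ are level-$(i{-}1)$ cores, not level-$i$ cores. At the moment $\mathcal{P}_{i,B}$ is computed, no level-$i$ cores exist yet; they are selected immediately afterward, as the endpoints of the pseudo-arterial edge on each $P\in\mathcal{P}_{i,B}$. Statement~2 holds because that edge crosses the bisector of $B$: when $P$ has more than one edge, at least one endpoint of the pseudo-arterial edge is strictly interior to $P$ and is then marked a level-$i$ core; when $P$ is a single edge, $P$ itself is the pseudo-arterial edge and both endpoints get marked. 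Your single-edge explanation (``retained only when both endpoints are themselves cores'') has the causality reversed.

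For Statement~1 you assert that the local-shortest-path property ``follows from Dijkstra's optimality together with the coverage and border filters,'' but this is exactly the step that needs the induction. The constrained Dijkstra on $G'_{i-1}$ visits only level-$(i{-}1)$ cores and only edges satisfying the coverage condition; a priori it could miss the true local shortest path in $B$ if that path runs through a non-core node or uses an edge violating coverage. The paper inserts six auxiliary claims (roughly: the restricted search between any pair of the relevant node types---cores, border nodes, arbitrary nodes---does recover the true local shortest path in $B$) and threads them through the same induction, using Statement~4 at index $i-1$ to guarantee that the needed cores appear. You recognise this dependency in your Statement~3 paragraph but not here. You also overlook that $\mathcal{P}_{i,B}$ contains paths whose terminal node $t$ lies \emph{outside} $B$ (reached via a final original edge to a level-$(i{-}1)$ core beyond $B$); your endpoint analysis (``ends at another border node $t$'') does not cover this type, and the distinction resurfaces in the case split for Statement~3.
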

\begin{proof}
This lemma could be proved by mathematical induction.

As for Statement 1, for simplicity, we only consider the paths from the west strip to the east strip of $B$. The lemma could be proved true for the paths in the other directions in a similar way. Within a ($4\times 4$)-cell region $B$, the algorithm finds out two types of paths. Suppose that $P$ is a path from $s$ to $t$ found by the algorithm. Then, either (a) $P$ ends at a boarder node $t$ of $B$, and $t$ lies in the east strip of $B$, or (b) $t$ is to the east of $B$, while $t$, and the predecessor of $t$ on $P$, are both level-($i{-}1$) cores. We prove Statement 1 respectively concerning these two types of paths. For ease exposition, we make a slight difference from the algorithm described in Section~\ref{sec:ah-revision}: after the level-$i$ cores are selected, we add shortcuts concern all the nodes in $G$ and the level-$i$ cores to obtain $G'_i$. At the end of the proof, we will show that if we only add shortcuts concerning the boarder nodes of $R_{i+1}$ and the level-$i$ cores, this proof also works. Furthermore, based on Statement 1, we prove Statements 2, 3, and 4 also hold.

To facilitate our proof, we make the following six claims. Claims 1.1 to 1.3 together prove Statement 1 true. Claims 1.4 to 1.6 are used for induction.
\begin{enumerate}
\item[1.1.] Let $u,v$ be two level-($i{-}1$) cores in $B$. Then the local shortest path from $u$ to $v$ contained in $B$ could be found by invoking a constrained version of Dijkstra's algorithm to start a traversal on $G'_{i-1}$ within $B$ from $u$: for each node $w$ visited, the traversal only relaxes the edges $\langle w,x\rangle$, where $x$ is a level-($i{-}1$) core and $\langle w,x \rangle$ satisfies the coverage condition.

\item[1.2.] Let $u$ be a node in $B$ and $v$ a level-($i{-}1$) core in $B$. Then the shortest path from $u$ to $v$ contained in $B$ could be found by a Dijkstra traversal as described in Claim 1.1.

\item[1.3.] Let $u$ be a level-($i{-}1$) core in $B$ and $v$ a node in $B$. Then the shortest path from $u$ to $v$ contained in $B$ could be found in two steps: (i) performs a Dijkstra traversal as described in Claim 1.1, and (ii) after all the level-($i{-}1$) cores in $B$ reachable from $u$ are visited, a spanning tree from $u$ is created; inspects the edges $\langle w,v\rangle$ where $w$ is on the spanning tree and $\langle w,v\rangle$ satisfies the coverage condition to obtain the shortest path from $u$ to $v$.

\item[1.4.] Let $u,v$ be two level-$i$ cores in $B$. Then the shortest path from $u$ to $v$ contained in $B$ could be found by by invoking a constrained version of Dijkstra's algorithm to start a traversal within $B$ from $u$: for each node $w$ visited, the traversal only relaxes the edges $\langle w,x\rangle$ where $x$ is a level-$i$ core and $\langle w,x\rangle$ is generated from $B$.

\item[1.5.] Let $u$ be a node in $B$ and $v$ be a level-$i$ core in $B$. Then the shortest path from $u$ to $v$ contained in $B$ could be found by a Dijkstra traversal as described in Claim 1.4.

\item[1.6.] Let $u$ be a level-$i$ core in $B$ and $v$ a node in $B$. Then the shortest path from $u$ to $v$ contained in $B$ could be found in two steps: (i) performs a Dijkstra traversal as described in Claim 1.4, and (ii) similar to that of Claim 1.3, inspects the edges $\langle w,v\rangle$ where $w$ is on the spanning tree obtained in (i), and $\langle w,v\rangle$ is generated from $B$ to get the shortest path from $u$ to $v$.
\end{enumerate}

Our proof is organized as shown in Figure~\ref{fig:proof-structure}. At the beginning, we show all statements and claims are true when $i=1$. Subsequently, assuming all statements and claims are true when $i=k$, we prove that they also hold when $i=k+1$ in an order denoted by the circle numbers shown in Figure~\ref{fig:proof-structure}.

\begin{figure}[t]
\centering
\includegraphics[width = 3.3in]{./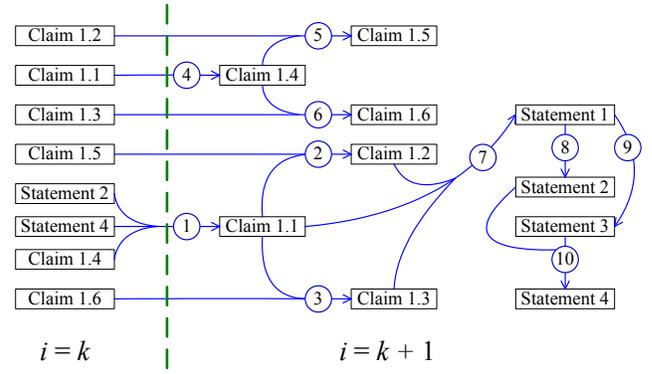}
\vspace{-2mm}
\figcapup \caption{Proof structure.} \figcapdown \label{fig:proof-structure}
\vspace{-2mm}
\end{figure}

It could be verified that Claims 1.1 to 1.3 are true when $i=1$ given all the nodes in $G$ are level-$0$ cores, and $G'_0$ is composed of the edges in $G$. Claim 1.4 is true when $i=1$. Suppose that $P$ is a local shortest path from $u$ to $v$, and $t_1,\ldots,t_j$ are in turns the level-$1$ cores on $P$. Then the algorithm would add a shortcut $\langle u,t_1\rangle$ (or $\langle u,t_1\rangle$ is an original edge in $G$) whose length equals to the distance in $G'_{0}$ from $u$ to $t_1$, and $\langle u,t_1\rangle$ is generated from $B$. So are the edges $\langle t_1,t_2\rangle,\ldots,\langle t_j,v\rangle$. Hence, Claim 1.4 holds when $i=1$. Claim 1.5 is similar to Claim 1.4, except that the source node $u$ is not necessarily a level-$1$ core. Since the algorithm also adds an edge from $u$ to a level-$1$ core, similar to the case of Claim 1.4, Claim 1.5 also holds when $i=1$. As for Claim 1.6, Suppose $P$ is a local shortest path from $u$ to $v$, and $t_1,\ldots,t_j$ are in turns the level-$1$ cores on $P$. Then $\langle t_j,v \rangle$, is a shortcut (or original edge) generated from $B$, and the weight of $\langle t_j,v\rangle$ equals that of the shortest path from $t_j$ to $v$ contained in $B$. Besides, by Claim 1.4, $u$ could equally reach $t_j$ by the edges generated from $B$. Hence, Claim 1.6 holds when $i=1$. Given Claims 1.1 to 1.3, Statement 1 is true when $i=1$.

  Statement 2 could be proved by the algorithm when $i=1$. If $P$ contains one edge, both the endpoints of $P$ would be selected as a level-$1$ core. If $P$ contains more than one edge, then an internal node of $P$ (denoted as $w$), $w\not=s,t$ is selected as a level-$1$ core. By Statement 1, $P$ is a spanning path of $B$, and $w$ is not an endpoint of $P$. Hence, $w$ is in $B$.

  As for Statement 3, when $i=1$, first we show that there exists a ($4\times 4$)-cell region $B$ in $R_1$, such that a sub-path of $P$ (including $P$ itself) is a spanning path of $B$. If $P$ is contained in a ($4\times 4$)-cell region $B$ in $R_1$, since $s,t$ are far-apart in $R_1$, $P$ is a spanning path of $B$. Otherwise, if $P$ is not contained in any ($4\times 4$)-cell region in $R_1$, then, by Lemma~\ref{lemma:sliding_window}, there exists a region $B$ in $R_1$, which makes a sub-path of $P$ a spanning path of $B$. Second, note that at the first iteration, given $B$ all its spanning paths could be found because the paths are found on the original graph $G$. Hence, there should exist a region $B$ in $R_1$, such that $P$ covers a path in $P_{1,B}$.

  Statement 4 could be proved by Statement 2 and 3. By Statement 3, there is a ($4\times 4$)-cell region $B$ in $R_i$, such that a sub-path $P'$ of $P$ covers a path in $P_{i,B}$. If $P'$ is exactly $P$, since $P$ contains more than one edge, then by Statement 2, there should be a node $u$ on $P$ other than $s,t$, such that $u$ is a level-$i$ core. If $P'$ is not $P$, nevertheless where the positions of the two endpoints of $P'$ on $P$ are, by Statement 2, there should exist a node $u$ ($u\not=s,t$) on $P$ and $u$ is a level-$i$ core. Hence, in either case, Statement 4 is true.

    The above all show that the lemma is true when $i=1$. We now turn to the induction phase. Suppose the statements and claims above are true when $i=k$, we show them true when $i=k+1$.

    Step \circled{1} in Figure~\ref{fig:proof-structure}: Claim 1.1 is true when $i=k+1$ given Claim 1.4, Statement 2 and 4 true when $i=k$. We prove the claim in two cases: (a) $P$ is also contained in a sub-region $B'$ of $B$, where $B'$ is a ($4\times 4$)-cell region in $R_k$, and (b) $P$ is not contained in any ($4\times 4$)-cell region in $R_k$. As for case (a), given Claim 1.4 true when $i=k$,  Claim 1.1 is true because a level-$k{+}1$ core is also a level-$k$ core. As for case (b), if $P$ contains only one original edge $\langle u,v\rangle$, then $\langle u,v\rangle$ is an edge that satisfies the coverage condition, which follows that Claim 1.1 is true. Subsequently we consider the case when $P$ contains multiple edges. If $u,v$ are far-apart in $R_k$, then by Statement 4 when $i=k$, there should be a node $w$ ($w\not=u,v$) on $P$ and $w$ is a level-$k$ core. If $u,v$ are not far-apart in $R_k$, given the hypothesis that $P$ is not contained in any ($4\times 4$)-cell region in $R_k$, there should exist two nodes $u'$ and $v'$ such that $u'$ and $v'$ are far-apart in $R_k$. Again, by Statement 4, there also exists a node $w$ on $P$ ($w\not=u,v$) and $w$ is a level-$k$ core. Since $u,w,v$ are all level-$k$ cores, similarly we can consider the sub-path from $u$ to $w$ (and the sub-path from $w$ to $v$ as well) in the two cases above. Because $P$ contains a finite number of nodes, $P$ could not be infinitely decomposed. i.e. we could always find a sub-path $P'$ of $P$ between two level-$k$ cores, such that, either (i) $P'$ contains only one edge; or (ii)  $P'$ is contained in a ($4\times 4$)-cell region in $R_k$, and by Claim 1.4 true when $i=k$, $P'$ could be discovered by a constrained version of Dijkstra traversal which relaxes the edges concerning level-$k$ cores. Therefore, Claim 1.1 is true when $i=k+1$.

Step \circled{2} in Figure~\ref{fig:proof-structure}: Claim 1.2 is true when $i=k+1$ given Claim 1.1 true when $i=k+1$, and Claim 1.5 true when $i=k$.  The proof of Claim 1.2 is similar to that of Claim 1.1, except that the source $u$ is a level-$0$ core instead of a level-$k$ core. We can consider $P$ in two cases: whether it is contained in a ($4\times 4$)-cell region in $R_k$ or not. We focus on the sub-path from $u$. In this way we can always find a level-$k$ core $w$, so that, either the sub-path from $u$ to $w$ is contained in a ($4\times 4$)-cell sub-region $B'$ in $R_k$, or $\langle u,w\rangle$ is an original edge. In the former case, given Claim 1.5 true when $i=k$, Claim 1.2 is true when $i=k+1$. In the latter case, $\langle u,w\rangle$ is an edge that satisfies the coverage condition, hence, Claim 1.2 is also true.

Step \circled{3} in Figure~\ref{fig:proof-structure}: Claim 1.3 could be similarly proved like Claim 1.2 given Claim 1.6 true when $i=k$, and Claim 1.1 true when $i=k+1$.

Step \circled{4} in Figure~\ref{fig:proof-structure}: Claim 1.4 could be proved true when $i=k+1$ given Claim 1.1 true when $i=k+1$. Consider a shortest path $P$ between two level-($k{+}1$) cores $u,v$. contained in a ($4\times 4$)-cell region $B$ in $R_{k+1}$. By Claim 1.1 when $i=k+1$, $P$ should be equally found by invoking a Dijkstra algorithm to start a traverse which only vista the level-$k$ cores. Along $P$, let $t_1, t_2, \ldots, t_j, v$ be the level-($k{+}1$) cores (note that a level-($k{+}1$) core is also a level-$k$ core). Then AH would add $\langle u,t_1 \rangle$ (if not existed in $G$) as a shortcut generated from $B$, and its weight equals the weight of the shortest path from $u$ to $t_1$ in $B$. So are the shortcuts $\langle t_1,t_2\rangle, \ldots, \langle t_j,v\rangle$. Hence, Claim 1.4 is true when $i=k+1$.

Step \circled{5} (resp. \circled{6}) in Figure~\ref{fig:proof-structure}: Claim 1.5 (resp. Claim 1.6) could also be proved given Claim 1.1 true when $i=k+1$, and Claim 1.2 (resp. Claim 1.3) true when $i=k$.

Step \circled{7} in Figure~\ref{fig:proof-structure}: Statement 1 is true when $i=k+1$ given Claims 1.1 to 1.3 and Statement 4 true when $i=k$. First, type (a) path could be correctly found. Let $s$ (resp. $t$) be a boarder node in the west (resp. east) strip of $B$. Then, for any pair of such $s$ and $t$, the shortest path $P$ from $s$ to $t$ contained in $B$ is a spanning path of $B$. Besides, followed by Statement 4, $P$ is covered by a level-$k$ core since $s$ and $t$ are far-apart in $R_{k+1}$ (which follows that $s$ and $t$ are far-apart in $R_k$). Subsequently, given Claims 1.1 to 1.3, $P$ could be correctly found. Second, type (b) path could also be correctly found. Let $P$ be a local shortest path of $B$ from $s$ to $t$ where $t$ is beyond $B$. Let $\langle u,t\rangle$ be the last edge on $P$, and $u$ a level-$k$ core. Then, given Claim 1.1 and Claim 1.2, the shortest path from $s$ to $u$ contained in $B$ could be correctly found. On the other hand, $\langle u,t\rangle$ satisfies the coverage condition. As a result, $P$ could also be correctly found. The above all shows that: (i) for a spanning path $P$ of type (a) or type (b), $P$ could be found by the algorithm supported by Claims 1.1 to 1.3. And (ii) every $P\in \mathcal{P}_{k+1,B}$ is a spanning path of $B$. Hence, Statement 1 is true when $i=k+1$.

Step \circled{8} in Figure~\ref{fig:proof-structure}: Statement 2 is true when $k+1$. The proof is similar to the case when $i=1$. By the algorithm, each $P\in \mathcal{P}_{k+1,B}$, satisfies either condition stated in Statement 2.

Step \circled{9} in Figure~\ref{fig:proof-structure}: Statement 3 is true given Statement 1 true when $i=k+1$. If $P$ is not contained in any ($4\times 4$)-cell region in $R_{k+1}$, by Lemma~\ref{lemma:sliding_window}, there should exist a ($4\times 4$)-cell region $B$ in $R_{k+1}$, where $B$ is a sub-region of $B_s$, such that a sub-path of $P$ (denoted as $P'$) is a spanning path of $B$. If $P$ is contained in a ($4\times 4$)-cell region in $R_{k+1}$, we put $P'=P$. In what follows, we show that $P'$ covers a path in $\mathcal{P}_{k+1,B}$. We consider $P'$ in two cases: (a) $P'$ is contained in $B$, and (b) $P'$ is not contained in $B$. Suppose that $P'$ is from $u$ to $v$. Without loss of generality, suppose that $u$ is in the west strip of $B$ and $v$ is in the west strip. In case (a), we show that there should exist two nodes $u'$ and $v'$ on $P'$, such that the sub-path of $P'$ from $u'$ to $v'$ is a type (a) spanning path in $\mathcal{P}_{k+1,B}$: starting with $u$, we scan each node on $P'$ one by one. Stop until the first time a node $v'$ in the east strip of $B$ is met. Such $v'$ exists because $P'$ ends at a node $v$ in the east strip ($v'$ might equal to $v$). Similarly, there should exist a node $u'$ such that $u'$ is in the west strip, and the successors of $u'$ on $P'$ is to the east of the west strip ($u'$ might equal to $u$ too). On the other hand, the sub-path of $P'$ from $u'$ to $v'$ is a type (a) spanning path that could be found by the algorithm, hence, $P$ covers a path in $\mathcal{P}_{k+1,B}$. In case (b), we show that $P'$ covers a type (b) spanning path in $\mathcal{P}_{k+1,B}$. First, similar to case (a), there should be a node $u'$ on $P'$, such that $u'$ is in the west strip, and the successors of $u'$ on $P'$ is in to the east of the west strip. On the other hand, let $v'$ be the predecessor of $v$ on $P'$. By Lemma~\ref{lemma:sliding_window}, since $P'$ is not contained in $B$, the $v'$ and $v$ are not in two adjacent cells in $R_{k+1}$, which follows that $v'$ and $v$ are far-apart in $R_k$. Hence, the sub-path of $P'$ from $u'$ to $v$ is a type (b) spanning path in $\mathcal{P}_{k+1,B}$.

 Step \circled{10} in Figure~\ref{fig:proof-structure}: Statement 4 could be proved when $i=k+1$ given Statements 2 and 3 are true when $i=k+1$. The proof is similar to the case when $i=1$.

  Finally, we show that the Statements and Claims also hold if at each iteration, AH only adds shortcuts concern the boarder nodes of $R_{i+1}$ and the level-$i$ cores. Let $B$ be a ($4\times 4$)-cell region in $R_{i+1}$ ($i\geq 1$). In the computation of spanning paths,  AH invokes a Dijkstra algorithm to start a traversal from a boarder node of $R_{i+1}$. During the traversal, AH only visits the level-$i$ cores.  After the traversal is completed, AH only visits the edges from a level-$i$ core to a boarder node to obtain a spanning path. Hence, AH only uses the edges concerning the boarder nodes of $R_{i+1}$ and the level-$i$ cores. As such, it suffices to only add shortcuts concerning the boarder nodes and level-$i$ cores.
\end{proof}

\setcounter{lemma}{2}
\begin{lemma} 
For any two nodes $u,v \in G$, if no ($3 {\times} 3$)-cell region in $R_i$ ($i \in [1, h]$) can cover $u$ and $v$ simultaneously, then the shortest path from $u$ to $v$ must go through a node at level $i$ or above.
\end{lemma}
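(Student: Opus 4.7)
The plan is to reduce the lemma to Statement~4 of Lemma~\ref{lemma:spanning_path} (proved in Appendix~\ref{sec:proof-3-proof}) by choosing the reference region $B_s$ to be large enough to contain all of $G$. The natural choice is $B_s:=R_h$, which by construction is a single $(4{\times}4)$-cell region that tightly covers every node of $G$; hence the global shortest path $P$ from $u$ to $v$ in $G$ coincides with the local shortest path from $u$ to $v$ inside $B_s$.

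For the main case $1\le i<h$, the hypothesis that no $(3{\times}3)$-cell region in $R_i$ contains both $u$ and $v$ is precisely the condition that $u$ and $v$ are far-apart in $R_i$ in the terminology of Section~\ref{sec:ah-revision}. Invoking Lemma~\ref{lemma:spanning_path} Statement~4 with $j=h>i$, $s=u$, $t=v$, and $B_s=R_h$ then places a level-$i$ core on $P$. Since level-$i$ cores are by definition the nodes at level at least $i$ in $\H$, this is exactly the desired conclusion.

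For the boundary case $i=h$, Statement~4 is not directly applicable because there is no grid coarser than $R_h$, so here I would argue directly. The $(3{\times}3)$-cell regions in $R_h$ are just the four corner blocks of the $(4{\times}4)$-cell grid $R_h$, so the hypothesis forces $u$ and $v$ to lie in diagonally opposite corner cells. The shortest path $P$ must therefore cross both the vertical and horizontal bisectors of $R_h$; discarding a short prefix and suffix of $P$ that enter the cells adjacent to a crossed bisector extracts a sub-path $P^\sharp$ of $P$ that is a spanning path of $R_h$ in the sense of Definition~\ref{def:def-ae}. Statement~2 of Lemma~\ref{lemma:spanning_path} at $i=h$, which is the last step of the induction used to prove that lemma and which asserts that every path in $\mathcal{P}_{h,R_h}$ carries a level-$h$ core as an interior node, then supplies such a node on $P^\sharp$, and hence on $P$.

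The hard part will be making the boundary case $i=h$ fully rigorous. Concretely, one must verify that the sub-path $P^\sharp$ extracted above actually belongs to $\mathcal{P}_{h,R_h}$: its endpoints must be border nodes of $R_h$ in the sense of Definition~\ref{def:ah-border}, and every shortcut appearing on it must satisfy the coverage condition used by AH's preprocessing at iteration $h$. This is essentially a re-run, for the outermost region $R_h$, of the base-case bookkeeping in the induction underlying Lemma~\ref{lemma:spanning_path}; once this check is carried out, the lemma follows immediately from the two cases above.
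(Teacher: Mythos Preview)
Your reduction for $1\le i<h$ is exactly the paper's proof: it invokes Statement~4 of Lemma~\ref{lemma:spanning_path} with $B_s$ the $(4{\times}4)$-cell region $R_h$ that covers all of $G$, and nothing more. The paper in fact applies this single sentence to the whole range $i\in[1,h]$ and does not separate out $i=h$; you are being more careful than the paper in flagging that Statement~4 formally requires $j>i$.

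That said, your treatment of the boundary case $i=h$ contains an incorrect geometric claim. The condition ``no $(3{\times}3)$-cell region in $R_h$ covers both $u$ and $v$'' is equivalent to $u$ and $v$ lying in opposite boundary rows \emph{or} opposite boundary columns of the $4{\times}4$ grid; it does \emph{not} force them into diagonally opposite corner cells, and hence $P$ need not cross both bisectors. Consequently the extraction of $P^\sharp$ as you describe it is not guaranteed. A cleaner way to close the case is to observe that the inductive proof of Statements~3 and~4 in Lemma~\ref{lemma:spanning_path} nowhere uses $j>i$ except to ensure that $B_s$ admits a $(4{\times}4)$-cell sub-region in $R_i$; taking $j=i=h$ and $B_s$ the single $(4{\times}4)$-cell region $R_h$ already satisfies this, so Statement~4 at $i=h$ goes through verbatim with $B_s=R_h$. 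This avoids the separate spanning-path extraction and the $\mathcal{P}_{h,R_h}$ bookkeeping you correctly identified as the sticking point.
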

\begin{proof}
This lemma follows from Statement 4 of Lemma~\ref{lemma:spanning_path}, when $B_{s}$ is the ($4\times 4$)-cell region in $R_h$ where $B_s$ covers the entire road network $G$.
\end{proof}

\setcounter{lemma}{8}

\section{Proof of Lemma 4} \label{sec:proof-4}

Our proof of Lemma~\ref{lmm:ah-node-num} is based on the following lemma.

\begin{lemma} \label{lemma:density}
Let $B$ be any ($4\times 4$)-cell region in $R_i$ ($i \in [1, h]$), $E_B$ be the set of pseudo-arterial edges for $B$, and $V_B$ be a set containing the endpoints of the edges in $E_B$. Then, the number of nodes in $V_B$ is at most $50\lambda^2$.
\end{lemma}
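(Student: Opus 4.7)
The plan is to bound $|V_B|$ by first controlling the number of pseudo-arterial edges $|E_B|$, via a natural map $\phi\colon E_B \to A_B$ that sends each pseudo-arterial edge to an arterial edge of $B$ in the original graph $G$. Given any pseudo-arterial edge $\langle u,v\rangle\in E_B$, it lies on a spanning path $P'$ of $B$ in the reduced graph $G'_{i-1}$ that satisfies both the border and coverage conditions. Because every shortcut on $P'$ is generated from a sub-region of $B$, recursively expanding the shortcuts on $P'$ yields a walk $P$ in $G$ entirely contained in $B$ whose endpoints are the same as those of $P'$. An inductive application of Statement~1 of Lemma~\ref{lemma:spanning_path} (at each lower level, on the local shortest paths encoded by the shortcuts we expand) shows that $P$ is itself a local shortest path of $B$. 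Since $P'$ straddles $B$'s bisector so does $P$, and thus $P$ is a spanning path of $B$ in $G$ and must contain an arterial edge $\phi(\langle u,v\rangle)\in A_B$. By Assumption~\ref{assu:def-ae}, $|A_B|\le\lambda$.

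Next I would bound the preimage of each arterial edge. Fix $e^\star=\langle a,b\rangle\in A_B$ and consider any $\langle u,v\rangle\in\phi^{-1}(e^\star)$. The expansion of $\langle u,v\rangle$ passes through $e^\star$, so the prefix of the expansion from $u$ to $a$ is a local shortest sub-path in $B$ whose interior uses only nodes of level strictly below $i-1$; otherwise the corresponding shortcut in $P'$ would have been split further during the selection of level-$(i-1)$ cores. Applying Lemma~\ref{lmm:fc-sliding} (equivalently, Statement~4 of Lemma~\ref{lemma:spanning_path}) at the finer grid $R_{i-1}$, the absence of higher-level interior nodes forces $u$ and $a$ to be covered by a common $(3{\times}3)$-cell region in $R_{i-1}$. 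By Lemma~\ref{lmm:fc-node-num} that region contains only $O(\lambda)$ nodes of level at least $i-1$ (the candidates for $u$), and symmetrically for $v$. Tracking the explicit constant from the sliding-window cover (rather than absorbing it into $O(\cdot)$) yields at most $5\lambda$ admissible choices of $u$ and $5\lambda$ of $v$, and using that $\langle u,v\rangle$ is a single edge of $G'_{i-1}$ across the bisector then bounds $|\phi^{-1}(e^\star)|$ by a constant multiple of $\lambda$.

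Combining the two steps, $|E_B|\le\lambda\cdot O(\lambda)=O(\lambda^2)$, and since each pseudo-arterial edge contributes at most two endpoints and both vertical and horizontal bisectors of $B$ are accounted for, one obtains $|V_B|\le 50\lambda^2$ after explicit bookkeeping of the constants. The main obstacle is the middle step: showing that only $O(\lambda)$ reduced-graph endpoints on each side of the bisector can be routed through the same arterial edge via an expansion whose interior avoids level-$(i-1)$ cores. This requires combining the coverage condition (to keep expansions inside $B$) with a sliding-window argument on the finer grid $R_{i-1}$, and crucially tracking the multiplicative constants through Lemmas~\ref{lmm:fc-sliding} and~\ref{lmm:fc-node-num} rather than hiding them in asymptotic notation, so that the final constant comes out to $50$ rather than an unspecified $O(1)$.
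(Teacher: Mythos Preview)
Your high-level strategy matches the paper's: group the pseudo-arterial edges of $B$ by the arterial edge of $B$ in $G$ that each one contracts, then bound the number of distinct endpoints contributed by each group. The gap is in how you bound the endpoints per group.

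You localize $u$ near $a$ in a $(3{\times}3)$-cell region of $R_{i-1}$ (this part is fine), and then invoke Lemma~\ref{lmm:fc-node-num} to say that region holds only $O(\lambda)$ candidates for $u$. But Lemma~\ref{lmm:fc-node-num} is a statement about FC's hierarchy $H$, where node levels are assigned directly from arterial edges of $G$. The endpoints $u,v$ you are counting are level-$(i{-}1)$ cores in AH's hierarchy $H^*$, whose levels come from \emph{pseudo}-arterial edges through the cascading reduction. The correct density bound for AH is Lemma~\ref{lmm:ah-node-num}, which gives $O(\lambda^2)$ cores per $(3{\times}3)$ region, not $O(\lambda)$---and Lemma~\ref{lmm:ah-node-num} is derived from Lemma~\ref{lemma:density} itself, so invoking it here is circular. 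Even if you recast the whole argument as induction on $i$, the inductive hypothesis only delivers $O(\lambda^2)$ level-$(i{-}1)$ cores near $a$, and your final bound collapses to $O(\lambda^3)$ rather than $50\lambda^2$.

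The paper closes this gap by a different mechanism. Instead of counting all level-$(i{-}1)$ cores in a neighborhood of $a$, it exploits the uniqueness of local shortest paths: the sub-paths $P'_j$ from $a$ to the various targets $t_j$ form a \emph{tree} $G_a$ rooted at $a$. On each branch, $Y_j$ is the first level-$(i{-}1)$ core after $a$. Lemma~\ref{lemma:fixnode_path_set} shows each branch covers a spanning path in $\mathcal{P}_{i-1,B'}$ for one of the $25$ $(4{\times}4)$-cell sub-regions $B'\subset B$ in $R_{i-1}$; since $G_a$ is a tree, at most $\lambda$ branches can cross any particular bisector of $B'$ (each such crossing picks out a distinct arterial edge of $B'$ in $G$). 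This yields at most $25\lambda$ distinct $Y_j$'s (and symmetrically $X_j$'s) directly from the arterial dimension at level $i{-}1$, without ever counting level-$(i{-}1)$ cores in a region and without any induction on $i$. The tree structure and Lemma~\ref{lemma:fixnode_path_set} are the missing ingredients in your proposal.
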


In what follows, we will first prove Lemma~\ref{lmm:ah-node-num} based on Lemma~\ref{lemma:density}, and then establish the validity of Lemma~\ref{lemma:density}.

\setcounter{lemma}{3}
\begin{lemma} 
Any ($\alpha {\times} \alpha$)-cell region in $R_i$ contains $O(\alpha^2 \a^2)$ nodes whose level in $\H$ are no lower than $i$, where $\a$ is the arterial dimension of $G$.
\end{lemma}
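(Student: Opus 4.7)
The plan is to mirror the two-step strategy used in the proof of Lemma~\ref{lmm:fc-node-num}, but with the per-region density bound upgraded from $O(\lambda)$ (the arterial-edge count) to $O(\lambda^2)$ (the pseudo-arterial-edge count given by Lemma~\ref{lemma:density}). This upgrade is exactly what turns the $O(\alpha^2 \lambda)$ bound in the FC setting into the $O(\alpha^2 \lambda^2)$ bound claimed here.

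First I would characterize the level-$\ge i$ nodes of $\H$ that fall inside an $(\alpha \times \alpha)$-cell region $A \subseteq R_i$. By construction, any node at level $\ge i$ in $\H$ is a level-$i$ core and is therefore an endpoint of some pseudo-arterial edge of a $(4{\times}4)$-cell region in $R_i$. I would then partition the level-$\ge i$ nodes $v \in A$ according to the certifying region $B$: either (i) $B$ is contained in the $((\alpha{+}6){\times}(\alpha{+}6))$-cell region $A^+$ concentric with $A$, or (ii) $B$ is disjoint from $A$.

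The core geometric step, and the main obstacle, is to reduce case (ii) to case (i), i.e., to show that every such $v$ also admits a certifying $(4{\times}4)$-cell region inside $A^+$. This is the pseudo-arterial analogue of the corresponding step in the proof of Lemma~\ref{lmm:fc-node-num}. If $v \in A$ is the endpoint of a pseudo-arterial edge lying on a spanning path $P$ of a disjoint region $B$, I would slide a $(4{\times}4)$-cell window inside $A^+$ until $v$ becomes an interior node of a sub-path $P''$ of $P$ that remains a spanning path of the new window, and such that the distinguished edge incident to $v$ still crosses the relevant bisector. The delicate points are (a) verifying that the border and coverage conditions transfer to $P''$ so that $P''$ is among the spanning paths actually discovered by the AH preprocessing algorithm, and (b) ensuring that the edge at $v$ is then recognised as a pseudo-arterial edge of the shifted window. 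Here I expect to invoke Lemma~\ref{lemma:sliding_window} to construct the shifted window and Statements 3 and 4 of Lemma~\ref{lemma:spanning_path} to certify that the shifted sub-path is a spanning path produced by the construction.

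Once case (ii) is absorbed into case (i), the count is immediate. The super-region $A^+$ contains at most $(\alpha+3)^2$ distinct $(4{\times}4)$-cell regions in $R_i$, and each contributes at most $50\lambda^2$ endpoints of pseudo-arterial edges by Lemma~\ref{lemma:density}. Summing gives at most $50(\alpha+3)^2 \lambda^2 = O(\alpha^2 \lambda^2)$ level-$\ge i$ nodes in $A$, which is the desired bound.
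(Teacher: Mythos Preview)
Your overall skeleton is exactly the paper's: enlarge the $(\alpha{\times}\alpha)$-cell region $A$ to the concentric $((\alpha{+}6){\times}(\alpha{+}6))$-cell region $A^+$, split the certifying $(4{\times}4)$-cell regions into those contained in $A^+$ and those disjoint from $A$, and finish with the per-region bound of $50\lambda^2$ from Lemma~\ref{lemma:density} times the $(\alpha+3)^2$ count. The only substantive divergence is how you dispose of case~(ii).

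Here the paper does \emph{not} slide windows or appeal to Statements~3 and~4 of Lemma~\ref{lemma:spanning_path}. Instead it invokes Statement~2: if $u\in A$ is a level-$i$ core certified by a spanning path $P\in\mathcal{P}_{i,B_2}$ with $B_2$ disjoint from $A$, then $u\notin B_2$, and Statement~2 forces $P$ to consist of a \emph{single} edge. With $P$ reduced to one edge, the relocation is immediate (just as in the proof of Lemma~\ref{lmm:fc-node-num}): one exhibits a $(4{\times}4)$-cell region $B\subseteq A^+$ containing $u$ for which that same one-edge path is again a spanning path, so $u$ lands in $V_B$. Your sliding-window plan is aimed at the right target but is strictly harder: Lemma~\ref{lemma:sliding_window} and Statements~3--4 produce \emph{some} spanning path of \emph{some} shifted region, but give no control over which edge of that path crosses the new bisector, so there is no guarantee that the edge incident to $u$ is pseudo-arterial for the shifted region, nor that the coverage condition survives the shift. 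The single-edge fact from Statement~2 is precisely what removes both obstacles in one stroke; swap it in and your argument becomes the paper's.
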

\begin{proof}
Consider the ($\alpha {\times} \alpha$)-cell region $A$ in Figure~\ref{fig:density_1}, as well as the ($(\alpha{+}6)\times (\alpha{+}6)$)-cell region that is centered at $A$. Contained in $A$, the nodes whose level in $\H$ are no lower than $i$ (i.e., the level-$i$ cores), can be categorized into two overlapping groups. The first group contains the level-$i$ cores selected due to a spanning path of a ($4\times 4$)-cell region $B_1$ completely covered by the ($(\alpha{+}6)\times (\alpha{+}6)$)-cell region, and the second group consists of the level-$i$ cores selected from a region ($4{\times}4$)-cell region $B_2$ that is disjoint from $A$.

The number of nodes in the first group is $50\cdot (\alpha+3)^2 \cdot \a^2$. This is because (i) the number of ($4{\times}4$)-cell regions contained in the ($(\alpha{+}6)\times (\alpha{+}6)$)-cell area is $(\alpha+3)^2$, and (ii) each ($4{\times}4$)-cell region generates at most $50\lambda^2$ level-$i$ cores followed by Lemma~\ref{lemma:density} (note that the endpoints of the pseudo-arterial edges in $R_i$ are level-($i{-}1$) cores). Meanwhile, all nodes in the second group also appear in the first group. To explain, note that for any level-$i$ core $u$ in $A$, and any ($4\times 4$)-cell region $B_2$ that is disjoint from $A$, if $u$ is selected as a level-$i$ core due to a spanning path $P$ of $B_2$, then, by Statement 2 of Lemma~\ref{lemma:spanning_path}, $P$ contains only one edge since $u$ is not in $B_2$. It could be verified that there should exist a ($4\times 4$)-cell region $B$ in $A$, such that $B$ contains $u$, and $P$ is also a spanning path of $B$, as exemplified in Figure~\ref{fig:density_2}. In other words, the node $u$ is also contained in the first group mentioned before, because $u$ is selected due to $P$, a spanning path of $B$. As a consequence, the total number of level-$i$ cores in $A$ is at most $50\cdot (\alpha+3)^2 \cdot \a^2$, which is $O(\alpha^2\a^2)$.
\end{proof}

It remains to prove Lemma~\ref{lemma:density}. The key idea of our proof is to show that, for any ($4\times 4$)-cell region $B$ in $R_i$, the spanning paths of $B$ contain $O(\lambda^2)$ level-($i{-}1$) cores, which results in $O(\lambda^2)$ level-$i$ cores selected for any $i\in [1,h]$. To facilitate our proof, in the following, we will first establish some properties of the shortcuts in $\H$ (in Lemma~\ref{lemma:shortcut_scope}). Next, based on Lemma~\ref{lemma:shortcut_scope}, we demonstrate the characteristics of certain spanning paths of $B$ (in Lemma~\ref{lemma:type_b_long_path}). Subsequently, we will employ Lemma~\ref{lemma:type_b_long_path} to show a general property of every spanning path in a ($4\times 4$)-cell region $B$ in $R_i$ (in Lemma~\ref{lemma:fixnode_path_set}). Finally, we will prove lemma~\ref{lemma:density} based on Lemma~\ref{lemma:fixnode_path_set}.

\setcounter{lemma}{9}
\begin{lemma} \label{lemma:shortcut_scope}
  Let $B$ be a ($4\times 4$)-cell region in $R_i$, and $\langle u, v \rangle$ a shortcut created in $B$. Then the path contracted by $\langle u, v \rangle$ is contained in $B$.
\end{lemma}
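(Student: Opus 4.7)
The plan is to proceed by strong induction on $i$, the level of the grid $R_i$ that contains the region $B$ where the shortcut $\langle u,v \rangle$ was created. The key is to exploit the coverage condition: by construction, the Dijkstra-style traversal that produced $\langle u,v\rangle$ is restricted so that every edge it follows either (a) is an original edge whose endpoints lie in $B$, or (b) is a shortcut generated in a sub-region $B' \subseteq B$ sitting in some coarser grid $R_j$ with $j < i$.

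For the base case $i = 1$, the region $B$ sits in the finest grid $R_1$, and the subgraph $G'_{0,B}$ consulted by the algorithm contains only original edges of $G$ whose endpoints lie in $B$. Thus any shortcut created in $B$ contracts to a sequence of original edges in $B$, and is trivially contained in $B$.

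For the inductive step, assume the claim for every region in $R_j$ with $j < i$. Let $\langle u,v \rangle$ be a shortcut created in a $(4{\times}4)$-cell region $B$ of $R_i$, and let $P = \langle w_0 = u, w_1, \ldots, w_k = v\rangle$ be the path in $G'_{i-1,B}$ from which $\langle u,v\rangle$ was contracted. Each edge $\langle w_j, w_{j+1}\rangle$ of $P$ satisfies the coverage condition. If it is an original edge, it lies in $B$ directly. If it is a shortcut, it was created in some region $B_j$ of a coarser grid $R_{j'}$ with $j' < i$, and $B_j$ is completely contained in $B$. By the inductive hypothesis, the path contracted by $\langle w_j, w_{j+1}\rangle$ is contained in $B_j$, hence in $B$. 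Concatenating the contracted sub-paths (recursively expanded) along $P$ yields the full path contracted by $\langle u,v\rangle$, and every node and edge on it lies in $B$.

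The main obstacle is purely bookkeeping: one must confirm that every shortcut produced by the AH preprocessing, whether during spanning-path computation or during the level-$i$/elevating-edge construction, respects the coverage condition relative to its ``home'' region. This follows from the description in Section~\ref{sec:ah-pre} and its revision in Section~\ref{sec:ah-revision}, where the Dijkstra traversal is explicitly restricted to $G'_{i-1,B}$ and to edges satisfying the coverage condition. Once this is established, the induction is routine. No additional geometric reasoning or appeal to Assumption~\ref{assu:def-ae} is required, since the lemma is purely a structural statement about how shortcuts are nested.
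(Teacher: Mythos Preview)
Your induction-on-$i$ plan using the coverage condition is exactly the paper's approach, and the inductive step for shortcut edges is fine. There is, however, one point you should tighten. You assert that an original edge followed by the traversal ``lies in $B$ directly,'' i.e.\ has both endpoints in $B$. The paper's convention (Section~\ref{sec:ah-revision}) is that an original edge $\langle x,y\rangle$ is \emph{generated from} $B$ precisely when its tail $x$ lies in $B$; the coverage condition therefore pins only the tail, not the head, and the subgraph $G'_{i-1,B}$ likewise admits edges that merely overlap $B$. So the traversal may in principle step from some $w_j\in B$ to $w_{j+1}\notin B$ along an original edge, and your direct argument does not rule this out.

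The paper closes this gap by contradiction: suppose some node $x$ on the contracted path lies outside $B$. Since $v\in B$, we have $x\neq v$, so there is an outgoing edge $\langle x,y\rangle$ on the path. If $x$ is a level-$k$ core it was visited by the Dijkstra traversal, but no edge out of $x$ can satisfy the coverage condition (its tail is outside $B$), a contradiction; if $x$ is not a level-$k$ core it was never visited and must sit inside a shortcut generated from a sub-region of $B$, whence $x\in B$ by the inductive hypothesis. Your argument can be patched in the same spirit (or, equivalently, by noting that $w_{j+1}$ is the tail of the next edge and hence also constrained to $B$, with the terminal node $v\in B$ handled separately). Also, a minor terminological slip: the sub-region $B'\subseteq B$ in $R_j$ with $j<i$ sits in a \emph{finer} grid, not a coarser one.
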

\begin{proof}
  This lemma could be proved by mathematical induction. Let $P$ be the path contracted by $\langle u, v \rangle$. First, we show that the lemma holds when $i=1$. Assume to the contrary that the path contracted by $\langle u, v \rangle$ is not contained in $B$. Then, there is a node $x$ on $P$, such that $x$ is beyond $B$. Let $\langle x, y \rangle$ be the edge on $P$. Such $y$ exists because $x\not= v$ given $\langle u,v \rangle$ is a shortcut created in $B$, which follows that $v$ should be in $B$. In that case, $\langle x, y \rangle$ is not an edge generated from $B$ because $x$ is not in $B$.  Second, suppose that the lemma holds when $i=k$, we show that it also holds when $i=k+1$. By contradiction, let $x$ be the node on $P$ that is not in $B$. If $x$ is not a level-$k$ core, $x$ is not visited during the creation of  $\langle u,v\rangle$. It follows that $x$ is contracted by a shortcut $e$ where $e$ is generated from a sub-region of $B$, and given the lemma true when $i=k$, $x$ should be in $B$. If $x$ is a level-$k$ core, let $\langle x, y \rangle$ be the edge visited during the creation of $\langle u,v\rangle$. Then $\langle x, y \rangle$ violates the coverage condition since $x$ is beyond $B$. Hence, the lemma is proved.
\end{proof}

Given Lemma~\ref{lemma:shortcut_scope}, we have the following lemma:
\begin{lemma} \label{lemma:type_b_long_path}
 Let $B$ be a ($4\times 4$)-cell region in $R_i$, $P\in \mathcal{P}_{i,B}$ be a path from $s$ to $t$, and $\langle u,t\rangle$ be the last edge on $P$. If $t$ is beyond $B$, then $u,t$ are level-($i{-}1$) cores.
\end{lemma}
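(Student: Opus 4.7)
The plan is to read the conclusion off directly from the construction of $\mathcal{P}_{i,B}$ described in Section~\ref{sec:ah-revision}, using Lemma~\ref{lemma:shortcut_scope} as the only non-trivial ingredient. Recall that each path $P \in \mathcal{P}_{i,B}$ is built in two stages: first a constrained Dijkstra search rooted at some border node $s$ of $B$ grows a spanning tree $T_s$ that only relaxes an edge $\langle w,v\rangle$ when $v$ is a level-$(i{-}1)$ core and the coverage condition is satisfied; second, one additional ``terminal'' edge $\langle u,t\rangle$ out of a node $u$ on $T_s$ is appended, subject to (a) the coverage condition on $\langle u,t\rangle$, and (b) either $t$ being a border node in $B$, or $t \notin B$ and $t$ being a level-$(i{-}1)$ core.

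The claim for $t$ is immediate: by hypothesis $t$ lies beyond $B$, so the first clause of (b) is excluded and the second clause forces $t$ to be a level-$(i{-}1)$ core. For $u$, I would first use Lemma~\ref{lemma:shortcut_scope} to rule out $\langle u,t\rangle$ being a shortcut. If it were, the coverage condition would place the region that generated $\langle u,t\rangle$ entirely inside $B$, and by Lemma~\ref{lemma:shortcut_scope} the contracted path of the shortcut would lie in that region, hence inside $B$; but that path contains $t$, contradicting $t$ beyond $B$. So $\langle u,t\rangle$ is an original edge of $G$.

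With this in place, I would split on whether $u = s$. If $u \neq s$, then $u$ is a non-root node of $T_s$; the Dijkstra-search invariant that we only relax into level-$(i{-}1)$ cores gives that $u$ is a level-$(i{-}1)$ core, which finishes the argument. The delicate case is $u = s$, where the entire path collapses to the single edge $\langle s,t\rangle$ exiting $B$. Here I would argue that $s$ must itself be a level-$(i{-}1)$ core: $s$ is present in $G'_{i-1}$ either as a level-$(i{-}1)$ core or as a border node of some region in $R_i$, and in the latter case the original edge $\langle s,t\rangle$ together with the fact that $t$ is a level-$(i{-}1)$ core beyond $B$ witnesses $s,t$ as endpoints of a single-edge spanning path of a region in $R_i$, so by Statement 2 of Lemma~\ref{lemma:spanning_path} (applied one level lower, via the inductive hypothesis in the proof of that lemma) the node $s$ was already promoted during the iteration that produced $G'_{i-1}$. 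I expect this degenerate single-edge case to be the main obstacle; the rest of the argument is essentially bookkeeping against the algorithm's invariants, with Lemma~\ref{lemma:shortcut_scope} being the one geometric fact that controls how shortcuts can straddle the boundary of $B$.
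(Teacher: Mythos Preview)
Your approach is essentially the paper's: read off $t$ from clause~(b) of the terminal-edge rule, use Lemma~\ref{lemma:shortcut_scope} to force the last edge to be an original edge of $G$, and then use the Dijkstra invariant (only level-$(i{-}1)$ cores are relaxed into) to conclude for $u$. The paper packages this as an induction on $i$, but the hypothesis is never invoked, so your direct argument is in fact the same proof with the vacuous induction stripped away.

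The one place you go further than the paper is the degenerate case $u=s$. The paper's sentence ``the Dijkstra traversal would only visit the level-$(i{-}1)$ cores'' silently assumes $u\neq s$, so your instinct to treat $u=s$ separately is sound. However, your proposed fix is not quite right: invoking Statement~2 of Lemma~\ref{lemma:spanning_path} at level $i{-}1$ requires the single edge $\langle s,t\rangle$ to actually belong to some $\mathcal{P}_{i-1,B'}$ with $B'$ a $(4{\times}4)$-cell region in $R_{i-1}$, and for that you must exhibit $B'$, verify that $s$ (or $t$) is a border node of $B'$, and check the coverage and endpoint conditions there. Saying the edge ``witnesses $s,t$ as endpoints of a single-edge spanning path of a region in $R_i$'' conflates the grid levels and does not establish membership in any $\mathcal{P}_{i-1,B'}$. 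If you want to close this corner case cleanly, argue geometrically that an original edge from a border node of $B$ to a point beyond $B$ must span far enough in $R_{i-1}$ to be a spanning path of some sub-region $B'$, and then check that the algorithm would have discovered it during the $(i{-}1)$-th iteration. Alternatively, note that for the downstream use in Lemma~\ref{lemma:fixnode_path_set} the relevant sub-path starts at an interior node $a$, so the degenerate case does not arise there.
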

\begin{proof}
  Apparently the lemma holds when $i=1$ given all the nodes in $G$ are level-$0$ cores. Suppose that the lemma holds when $i=k$, in the following we show that the lemma true when $i=k+1$.
  First, $t$ is a level-$k$ core, since except for the source node $s$, the Dijkstra traversal only visits the level-$k$ cores to find the spanning paths. Second, we show that $\langle u,t \rangle$ is visited by the Dijkstra traversal. Suppose that $\langle x, t \rangle$ is the edge on $P$ relaxed by the Dijkstra traversal. Then, $\langle x,t\rangle$ should be an original edge because by Lemma ~\ref{lemma:shortcut_scope}, if $\langle x,t\rangle$ is a shortcut, $t$ should be in $B$, which violates the hypothesis. Hence, $x=u$. In what follows, $u,t$ are both level-($i{-}1$) cores because the Dijkstra traversal would only visit the level-($i{-}1$) cores. Therefore, the lemma holds.
\end{proof}

Given Lemma~\ref{lemma:type_b_long_path}, the following lemma is proved.
\begin{lemma} \label{lemma:fixnode_path_set}
  Let $B$ be a ($4\times 4$)-cell region in $R_i$ ($i\geq 2$), $P\in \mathcal{P}_{i,B}$ a path from a node $s$ to another node $t$. Let $\langle a,b \rangle$ be an arterial edge on $P$. Then, there exists a  ($4\times 4$)-cell sub-region of $B$ in $R_{i-1}$ (denoted as $B'$), such that the sub-path of $P$ from $a$ to $t$ covers a path in $\mathcal{P}_{i-1,B'}$.
\end{lemma}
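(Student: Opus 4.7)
The plan is to reduce the statement to Statement~3 of Lemma~\ref{lemma:spanning_path} applied at level $i-1$. First, note that $P_{at}$ is a local shortest path in $B$, being a sub-path of the local shortest path $P$. Without loss of generality, assume $\langle a, b\rangle$ crosses the vertical bisector of $B$ and that $s$ (resp.\ $t$) lies in the west (resp.\ east) strip of $B$; the other configurations are symmetric.

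The key geometric observation is that one of $\{a, b\}$ must lie on the $s$-side (west) of the bisector, since $\langle a, b\rangle$ crosses it. In the refined grid $R_{i-1}$, whose cells are half the size of those in $R_i$, the entire west half of $B$ occupies the four westernmost columns of $R_{i-1}$ (within $B$), while the east strip of $B$ (where $t$ resides, in the type (a) case) occupies the two easternmost columns of $R_{i-1}$. Hence, whichever of $\{a, b\}$ lies on the west side is separated from $t$ by at least three columns of $R_{i-1}$, so no $(3\times 3)$-cell region of $R_{i-1}$ can cover both: this pair is \emph{far-apart} in $R_{i-1}$. The corresponding sub-path of $P_{at}$ (namely $P_{at}$ itself if $a$ is on the west side, or the tail from $b$ to $t$ if $a$ is on the east side) has far-apart endpoints and remains a local shortest path in $B$.

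I then invoke Statement~3 of Lemma~\ref{lemma:spanning_path} at level $i-1$, with the ambient region $B_s = B$ (an element of $R_i$, so the hypothesis $j > i-1$ is met), the source and destination taken as the far-apart pair above, and the local shortest path as the corresponding sub-path of $P_{at}$. The statement produces a $(4\times 4)$-cell sub-region $B'$ of $B$ in $R_{i-1}$ such that this sub-path covers a path in $\mathcal{P}_{i-1, B'}$; since the sub-path is contained in $P_{at}$, so is the covered spanning path, establishing that $P_{at}$ covers a path in $\mathcal{P}_{i-1, B'}$.

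The main obstacle lies in the type (b) case, in which $t$ lies outside $B$ and so Statement~3 cannot be invoked with $t$ as a destination (it requires both endpoints to lie in $B_s$). I resolve this by replacing $t$ with its $P$-predecessor $w$, which is a level-$(i-1)$ core inside $B$ adjacent to the east boundary. The far-apart argument applies unchanged to the pair involving $w$ in place of $t$, and Statement~3 then supplies a spanning path in $\mathcal{P}_{i-1, B'}$ that is a sub-path of the truncated local shortest path from $a$ (or $b$) to $w$. Since this truncated path is itself a sub-path of $P_{at}$, the spanning path is still covered by $P_{at}$, as required.
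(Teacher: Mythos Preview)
Your type (a) argument is essentially the paper's: once $a$ lies west of $B$'s vertical bisector and $t$ lies in the east strip, the two are far-apart in $R_{i-1}$, and Statement~3 of Lemma~\ref{lemma:spanning_path} (with $B_s = B$) yields the desired sub-region $B'$.

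The gap is in your type (b) case. You assert that the predecessor $w$ of $t$ is ``inside $B$ adjacent to the east boundary'' and that ``the far-apart argument applies unchanged to the pair involving $w$ in place of $t$.'' Neither claim is justified. By Lemma~\ref{lemma:shortcut_scope} the last hop $\langle w,t\rangle$ must be an original edge of $G$ (a shortcut generated from a sub-region of $B$ would keep $t$ inside $B$), but original edges carry no bound on geometric length relative to the grid. Thus $w$ can sit anywhere in $B$, in particular within the same $(3{\times}3)$-cell region of $R_{i-1}$ as $a$. In that situation the far-apart hypothesis of Statement~3 fails and your invocation of it is illegitimate.

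The paper confronts exactly this sub-case explicitly. When every node of $P_{a,w}$ fits in a single $(3{\times}3)$ block of $R_{i-1}$, it does \emph{not} appeal to Statement~3; instead it hand-builds the region $B'$ so that this block occupies the west strip of $B'$, picks a border node $v$ of $B'$ on $P_{a,w}$, and argues directly that $P_{v,t}$ is a type~(b) path in $\mathcal{P}_{i-1,B'}$. The crucial ingredient you are missing is Lemma~\ref{lemma:type_b_long_path}, which guarantees that both $w$ and $t$ are level-$(i{-}1)$ cores (hence level-$(i{-}2)$ cores), so that the tail $\langle w,t\rangle$ qualifies as the final edge of a type~(b) spanning path at level $i{-}1$. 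Without this, you cannot certify membership in $\mathcal{P}_{i-1,B'}$ even if you had a candidate $B'$.
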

\begin{proof}
Suppose that $P_{a,t}$ is the sub-path of $P$ from $a$ to $t$. Without loss of generality, suppose that $s$ is in the west strip of $B$, and $\langle a,b \rangle$ goes across the vertical bisector of $B$ (denoted as $l_b$) where $a$ (resp. $b$) lies at the west (resp. east) of $l_b$. According to the definition of \emph{Spanning Path}, in that case, $t$ should be at the east of $l_b$, and $t$ is not at the cell adjacent to $l_b$.  It could be verified that $a$ and $t$ are far-apart in $R_{i-1}$ since the side length of a cell in $R_i$ is two times of that in $R_{i-1}$. If $t$ is contained in $B$, then, followed by Statement 3 of Lemma~\ref{lemma:spanning_path}, this lemma holds. In the following we consider the case where $t$ is beyond $B$.

Let $\langle u,t\rangle$ be the last edge on $P_{a,t}$. We denote the sub-path of $P_{a,t}$ from $a$ to $u$ as $P_{a,u}$. If on $P_{a,u}$ there exist two node $x,y$, such that $x$ and $y$ are far-apart in $R_{i-1}$, then, followed by Statement 3 of Lemma~\ref{lemma:spanning_path}, this lemma holds. Next, we consider the case when there is a ($3\times 3$)-cell sub-region of $B$ in $R_{i-1}$, such that the sub-region contains all the nodes on $P_{a,u}$. Let $a_l$ be the node on $P_{a,u}$ which has the smallest x-coordinate. It could be verified that there is a ($4\times 4$)-cell sub-region of $B$ in $R_{i-1}$, denoted as $B'$, such that: (i) $B'$ contains all the nodes on $P_{a,u}$, and (ii) $a_l$ is in the west strip of $B'$.  Let $v$ be the node on $P_{a,u}$ such that (i) $v$ is in the west strip of $B'$, and (ii) $v$ is a boarder node of $B'$. Figure~\ref{fig:fixnode_path} illustrates an example of the positions of $a$, $b$, $a_l$, $v$, $u$, $t$, $B$ and $B'$ as described above.Then, the path $P_{v,t}$ from $v$ to $t$ is in $\mathcal{P}_{i-1, B'}$ because: (i) $t$ is at the east of $l_b$, but is not in the cell adjacent to $l_b$, which indicates that $t$ is at the east of the vertical bisector of $B'$ (denoted as $l'_b$), and $t$ is not in the adjacent cell of $l'_b$, and (ii) followed by Lemma~\ref{lemma:type_b_long_path}, both $u$ and $t$ should be level-($i{-}1$) cores because $\langle u,t \rangle$ is the last edge on $P$, which indicates that $u,t$ are level-($i{-}2$) cores. Hence, $P_{v,t}$ is in $\mathcal{P}_{i-1,B'}$, and therefore the lemma holds.
\end{proof}
\begin{figure}[t]
\centering
\includegraphics[width = 3.3in]{./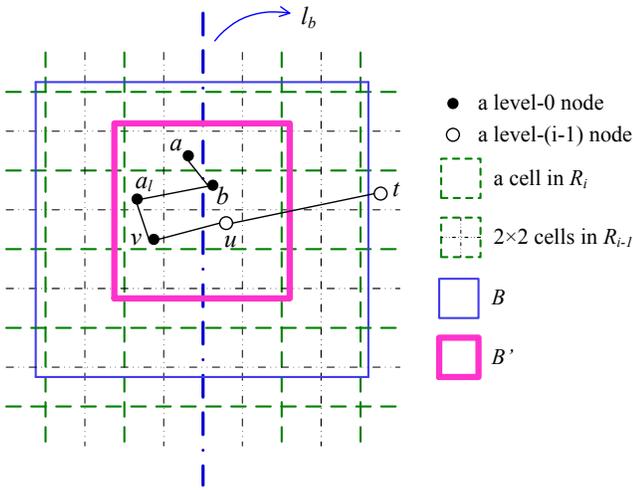}
\vspace{-2mm}
\figcapup \caption{A sub-path from $a$ to $t$.} \figcapdown \label{fig:fixnode_path}
\vspace{-2mm}
\end{figure}

Given Lemma~\ref{lemma:fixnode_path_set}, we prove Lemma~\ref{lemma:density} as follows:
\begin{proof} [Of Lemma~\ref{lemma:density}]
When $i=1$, the edges in $E_B$ are arterial edges of $B$. Then, there are at most $2\lambda$ nodes in $V_B$. In the following, we show that the lemma also holds when $i\geq 2$.

Each pseudo-arterial edge $e\in E_B$ is either an original edge in $G$ or a shortcut. We consider two disjoint subsets $E_1$ and $E_2$ of $E_B$, where $E_1$ contains all the original arterial edges, and $E_2$ the shortcuts. Suppose that $|E_1|=\lambda_1$ and $|E_2|=\lambda_2$. Then, there are at most $2 \lambda_1$ nodes in $V_B$ that are endpoints of the edges in $E_1$. In what follows, we consider $E_2$.

We divide $E_2$ into several disjoint subsets according to the arterial edges $e\in E_2$ contracts, i.e., the shortcuts that contract the same arterial edge are in the same subset. For each subset $E_{sub}$, we show that there are at most $50\lambda$ nodes in $V_B$ that are endpoints of the edges in $E_{sub}$. Suppose that the shortcuts in $E_{sub}$ are $\langle X_1,Y_1 \rangle,\ldots, \langle X_k,Y_k\rangle$ (note that by the AH algorithm, $X_1,\ldots,X_k,Y_1,\ldots,Y_k$ are all level-($i{-}1$) cores), and they are respectively on the paths $P_1,\ldots,P_k$ to the nodes $t_1,\ldots,t_k$. We use $\langle a,b \rangle$ to denote the arterial edge that those shortcuts $\langle X_1,Y_1 \rangle,\ldots, \langle X_k,Y_k \rangle$ contract. Besides, we use $P'_j$ to denote the sub-path of $P_j$ from $a$ to $t_j$ for all $1\leq j \leq k$. Let $G_a$ be a graph which is composed of the edges $e$ where $e$ is on $P'_j$. Then, $G_a$ is a tree, otherwise, it violates the hypothesis that the local shortest paths in $B$ are unique. Besides, $Y_1,\ldots, Y_k$ is in $G_a$, the tree. We show that there are at most $25\lambda$ distinct nodes among $Y_1,\ldots,Y_k$. By Lemma~\ref{lemma:fixnode_path_set}, for each $P'_j$, ($1\leq j \leq k$), there exists a ($4\times 4$)-cell region in $R_{i-1}$ (denoted as $B'$), where $B'$ is a sub-region of $B$, such that $P'_j$ covers a path in $\mathcal{P}_{i-1,B'}$. And followed by Statement 2 of Lemma~\ref{lemma:spanning_path}, $P'_j$ is covered by a level-($i{-}1$) core. Since $B'$ contains at most $\lambda$ arterial edges, and there are at most twenty-five sub-regions of $B$, hence, there are at most $25\lambda$ level-($i{-}1$) cores that cover the paths $P'_j$ for $1\leq j \leq k$. On the other hand, on $P'_j$, among the level-($i{-}1$) cores, $Y_j$ is the closet one to $a$, otherwise, if a level-($i{-}1$) core $u$ is on the path from $a$ to $Y_j$, then $\langle X_j,Y_j \rangle$ contracts a level-($i{-}1$) core, which violates the algorithm. Hence, there are at most $25\lambda$ distinct nodes of $Y_1,\ldots, Y_k$. Symmetrically, the number of distinct nodes of $X_1,\ldots,X_k$ is at most $25\lambda$ as well. Hence, in $V_B$, there are at most $50 \lambda$ nodes that are endpoints of the edges in $E_{sub}$. In addition, given $|E_2|=\lambda_2$, there are at most $\lambda_2$ disjoint subsets, which follows that, in $V_B$, there are at most $50\lambda\cdot\lambda_2$ nodes that are endpoints of the edges in $E_2$.

On the other hand, if $e\in E_B$ is an original edge, $e$ cannot be contracted by other shortcuts in $AE$ at the same time because both of the two endpoints of $e$ are level-($i{-}1$) cores. Hence, totally there are at most $50\lambda^2$ nodes in $V_B$.
\end{proof}

\section{Proof of Theorem 1} \label{sec:complexities}

We prove Theorem~\ref{thrm:ah-complexity} by presenting a series of lemmas and theorems that establish the space and time complexities of AH, as well as the correctness of AH's query processing algorithms.

\begin{lemma} \label{lemma:num_query}
Each node in $\H$ has $O(\a^2)$ non-elevating edges.
\end{lemma}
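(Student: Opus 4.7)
The plan is to bound, for each node $u$ at level $\ell$ in $\H$, the number of non-elevating edges incident to $u$ by a localized counting argument built on Lemma~\ref{lmm:ah-node-num}. First I would use the shortcut-creation subroutine of Section~\ref{sec:ah-pre} to confine every non-elevating neighbor of $u$ to a bounded cell region: whenever AH creates a non-elevating shortcut $\langle u,v\rangle$ or $\langle v,u\rangle$ during a reduction step $j$, both endpoints lie inside the $(5{\times}5)$-cell region $C_j$ in $R_{j+1}$ used to define that step's forward/backward SPT. A node at level $\ell$ participates only in reduction steps $j\in\{0,1,\ldots,\ell+1\}$, and since the grids are nested, $\bigcup_{j\le\ell+1} C_j$ fits inside a single $O(1){\times}O(1)$-cell region of $R_{\ell+2}$, equivalently an $O(1){\times}O(1)$-cell region of $R_\ell$.

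Next I would show that the far endpoint $v$ of any non-elevating edge incident to $u$ has level at least $\ell$. For the level-$j$ edges this is immediate: the definition requires $v$ to strictly out-rank $u$, and AH's ordering places every node of level $\geq\ell$ strictly above every node of level $<\ell$. For the connectivity-preserving shortcuts attached to each reduction step (introduced ``to ensure that $\G_i$ would remain connected''), I would charge each such shortcut to whichever of its two endpoints has the lower rank; then every shortcut charged to $u$ has $v$ at level $\geq\ell$ by construction, and every shortcut in which $u$ plays the role of the higher-ranked endpoint is already accounted for by the symmetric bound applied to the other endpoint.

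Finally I would invoke Lemma~\ref{lmm:ah-node-num} with $i=\ell$ and $\alpha=O(1)$ to bound the number of level-$\geq\ell$ nodes in the above $O(1){\times}O(1)$-cell region of $R_\ell$ by $O(\lambda^2)$. Since AH retains at most one shortcut per ordered pair of endpoints, distinct non-elevating edges correspond to distinct endpoints $v$, yielding $O(\lambda^2)$ outgoing non-elevating edges at $u$; the backward-SPT direction is symmetric and gives the same bound for incoming edges, establishing the lemma.

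The main obstacle is the charging step for the connectivity-preserving shortcuts: the construction permits them to connect a level-$\ell$ node $u$ to a node $v$ whose level may range all the way down to $j$, so one must fix a charging scheme that simultaneously counts each edge exactly once and preserves the ``endpoint at level $\geq\ell$'' invariant so that Lemma~\ref{lmm:ah-node-num} applies. Once that scheme is in place, the rest of the argument is a direct application of Lemma~\ref{lmm:ah-node-num} with a constant region size and requires no further machinery.
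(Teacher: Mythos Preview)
Your approach is essentially the paper's: confine the far endpoint of any non-elevating edge incident to a level-$\ell$ node $u$ to an $O(1){\times}O(1)$ cell region, observe that this endpoint must be a level-$\ell$ core, and invoke Lemma~\ref{lmm:ah-node-num}. The paper's execution is more direct, however: it works with the single $(5{\times}5)$-cell region $C$ in $R_{\ell+1}$ centered at $u$ rather than a union over reduction steps, notes that the SPT at that step is built inside $C$ and that the resulting non-elevating edges go from $u$ only to level-$\ell$ cores on that SPT, and then applies Lemma~\ref{lmm:ah-node-num} to $C$ viewed as a $(10{\times}10)$-cell region in $R_\ell$. No nesting of regions across multiple steps is needed.

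Two remarks on your extra machinery. First, the connectivity-preserving shortcuts need not be handled separately: the paper's proof treats the non-elevating edges of $u$ as precisely those created from $u$'s own SPT, all of which already land on level-$\ell$ cores inside $C$, so no charging is needed. Second, even on its own terms your charging scheme does not yield a per-node bound: charging each shortcut to its lower-ranked endpoint controls how many edges are \emph{charged to} $u$, but says nothing about how many edges have $u$ as their higher-ranked endpoint; saying those are ``accounted for by the symmetric bound applied to the other endpoint'' bounds the other endpoint's degree, not $u$'s. You correctly flag this as the main obstacle, and the sketch you give does not close it; fortunately the paper's argument sidesteps the issue entirely.
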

\begin{proof}
  Let $u$ be a node at level $i$, and $\langle u, v \rangle$ a non-elevating edge. Under the grid $R_{i+1}$, consider the ($5\times 5$)-cell region centered at $u$ (denoted as $C$). Then, $v$ is in $C$, because the algorithm firstly builds a \emph{SPT} (see definition \ref{def:ah-spt}) from $u$ in $C$, and then adds non-elevating edges from $u$ to the level-$i$ cores on the SPT. Besides, by Lemma~\ref{lmm:ah-node-num}, there are $O(\lambda^2)$ level-$i$ cores in $C$, which is a ($10\times 10$)-cell region in $R_i$. Hence, the number of non-elevating edges is $O(\lambda^2)$.
\end{proof}

\begin{lemma} \label{lemma:num_jumping}
Each node in $\H$ has $O(h\a^2)$ elevating edges.
\end{lemma}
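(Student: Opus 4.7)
The plan is to fix a node $u \in \H$ and decompose the elevating edges incident to $u$ by the reduction step $i \in [1, h-1]$ at which each was created. I will argue that $u$ acquires at most $O(\lambda^2)$ new elevating edges at each such step, so that summing over the $O(h)$ steps will yield the claimed bound of $O(h \lambda^2)$. The division of labor into a per-step count is natural because the elevating edge construction in Section~\ref{sec:ah-pre} is itself run once per step, from a clearly delimited subgraph.

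Fix a step $i$ at which $u$ still lies in $\G_i$ strictly below level $i$. By construction, AH takes the $(5 \times 5)$-cell region $C$ in $R_i$ centered at $u$, forms a subgraph of $\G_i$ consisting of (a) the level-$i$ edges covered by $C$ together with (b) every edge from $u$ to a node at level $i$ or above, and then computes a forward (and, symmetrically, a backward) SPT rooted at $u$; an elevating edge $\langle u, v\rangle$ is emitted exactly when $v$ is the first node on some SPT path whose level is at least $i$. The crucial step in the plan is a locality claim: every such $v$ must lie in an $O(1) \times O(1)$-cell region of $R_i$ around $u$. Given this, Lemma~\ref{lmm:ah-node-num} applied to that region bounds the number of candidate targets at level $\geq i$ by $O(\lambda^2)$, and since the algorithm retains at most one shortcut per ordered pair, step $i$ contributes at most $O(\lambda^2)$ elevating edges out of (and, by symmetry, into) $u$.

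To verify the locality claim I would split on the first edge of the SPT path from $u$. If that first edge belongs to case~(a), then since every subsequent edge on the path must also be a level-$i$ edge confined to $C$, the entire path stays inside $C$ and so $v \in C$, giving locality immediately. If instead the first edge belongs to case~(b), then $v$ is itself that direct neighbor of $u$ and I must argue that every direct edge from $u$ to a level-$\geq i$ node in $\G_i$ is local. I would prove this by an induction on the step at which the edge was introduced: an original edge of $G$ contributes only $O(1)$ endpoints by degree-boundedness; a non-elevating shortcut produced at an earlier step $i' < i$ has both endpoints inside a $(5 \times 5)$-cell region of $R_{i'+1}$ by Lemma~\ref{lemma:num_query}; and an earlier elevating shortcut is local by the inductive hypothesis. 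Because the grids $R_1, R_2, \ldots, R_h$ are nested and strictly coarser as the index grows, any $(5 \times 5)$-cell region in $R_{i'+1}$ with $i' < i$ is contained in an $O(1)$-cell region of $R_i$, so the locality constant does not accumulate across steps.

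The principal obstacle will be this inductive treatment of case~(b). Nominally the subgraph at step $i$ admits any edge of $\G_i$ that connects $u$ to a level-$\geq i$ endpoint, so without the inductive argument one cannot a priori rule out edges reaching arbitrarily far in $R_i$; resolving this requires the nested-grid observation sketched above. Once that is in place, Lemma~\ref{lmm:ah-node-num} delivers the $O(\lambda^2)$ per-step count without further work, and summing over the $h-1$ reduction steps yields the lemma.
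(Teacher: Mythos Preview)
Your proposal is correct and follows essentially the same strategy as the paper: partition the elevating edges at $u$ by level (you index by the reduction step $i$, the paper by the level of the target $v$; these amount to the same decomposition), argue that each target lies in an $O(1)\times O(1)$-cell window of $R_i$ around $u$, and then invoke Lemma~\ref{lmm:ah-node-num} to get $O(\lambda^2)$ targets per level, summing to $O(h\lambda^2)$. The paper's proof is a one-liner that simply asserts the locality of $v$ (``$\langle u,v\rangle$ is obtained from an SPT \ldots\ in a $(5\times5)$-cell region in $R_i$''); you go further and isolate the case-(b) subtlety---direct edges from $u$ to level-$\ge i$ nodes that are nominally unconstrained---and close it with a clean induction exploiting the nesting of the grids $R_1,\dots,R_h$, which the paper leaves implicit.
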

\begin{proof}
Let $\langle u, v \rangle$ be an elevating edge. Then $v$ has a higher rank than that of $u$. Suppose $v$ is at level $i$. It suffices to show that for a fixed $i$, the number of such elevating edges is $O(\a^2)$. This is because $\langle u, v \rangle$ is obtained from a \emph{SPT} rooted at $u$ generated in a ($5\times 5$)-cell region centered at $u$ in $R_i$, and by Lemma~\ref{lmm:ah-node-num} the number of such node $v$ is $O(\a^2)$. In the worst case, $u$ is at level $0$, and AH adds elevating edges from $u$ to the level-$i$ nodes where $i\in [1,h]$. Hence, $u$ has $O(h\a^2)$ elevating edges.
\end{proof}

\begin{theorem}
The space overhead of AH is $O(hn)$.
\end{theorem}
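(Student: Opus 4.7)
The plan is to bound the total space of AH by summing the storage requirements over all nodes of the hierarchy $\H$, treating the node storage and edge storage separately. Since each node in the road network $G$ is assigned to exactly one of the $h+1$ levels of $\H$, the total number of nodes stored across all levels is exactly $n$, contributing $O(n)$ to the space complexity. The remaining contribution comes from the edges (original edges, non-elevating shortcuts, and elevating shortcuts) adjacent to each node, and this is where Lemmas~\ref{lemma:num_query} and~\ref{lemma:num_jumping} do the work.

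First I would invoke Lemma~\ref{lemma:num_query} to conclude that each node in $\H$ is adjacent to $O(\lambda^2)$ non-elevating edges, which sums to $O(n\lambda^2)$ across all nodes. Next I would invoke Lemma~\ref{lemma:num_jumping} to conclude that each node is adjacent to $O(h\lambda^2)$ elevating edges, which sums to $O(hn\lambda^2)$ across all nodes. The total edge storage is therefore dominated by the elevating-edge term and is $O(hn\lambda^2)$. Combining with the $O(n)$ term for node storage gives a total of $O(hn\lambda^2)$.

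Finally, I would invoke the standing assumption (Assumption~\ref{assu:def-ae}) that the arterial dimension $\lambda$ is a constant, so that $\lambda^2 = O(1)$, and the total space simplifies to $O(hn)$, as claimed. There is no real obstacle here: the theorem is essentially a bookkeeping consequence of the per-node edge bounds already established in Lemmas~\ref{lemma:num_query} and~\ref{lemma:num_jumping}, together with the trivial observation that the node-to-level assignment partitions $V$, so the argument consists of little more than multiplying the per-node bounds by $n$ and adding them up.
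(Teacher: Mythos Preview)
Your proposal is correct and follows essentially the same approach as the paper: invoke Lemmas~\ref{lemma:num_query} and~\ref{lemma:num_jumping} to bound the number of edges per node by $O(h\lambda^2)$, multiply by $n$, and use that $\lambda$ is a constant. The paper's proof is slightly terser (it does not separately call out the $O(n)$ node-storage term or split the edge count into elevating and non-elevating contributions), but the argument is the same bookkeeping you describe.
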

\begin{proof}
By Lemmas \ref{lemma:num_query} and \ref{lemma:num_jumping}, each node has $O(h\a^2)$ edges, and there are $n$ nodes in $G$. Hence, the space overhead is $O(hn)$ when $\a$ is a constant.
\end{proof}

\begin{theorem}
AH answers a distance query in $O(h \log h)$ time. Besides, it answers a shortest path query in $O(h \log h + k)$ time, where $k$ is the number of edges in the shortest path.
\end{theorem}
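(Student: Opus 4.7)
The plan is to prove the distance-query bound by separately bounding (i) the number of hierarchy nodes visited by each constrained Dijkstra traversal and (ii) the number of edges relaxed from each visited node, then combining these via the standard Dijkstra overhead; the shortest-path bound will follow by adding the cost of unfolding the hierarchy path back into the original road network.

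First, I would bound the nodes touched per traversal. Let $j$ be the smallest index such that no $(3\times 3)$-cell region in $R_j$ covers both the source $s$ and the destination $t$. By Lemma~\ref{lmm:ah-longpath}, the two searches must meet at level $\geq j$, which justifies why AH's ``jump'' strategy via elevating edges to level $j$ is sound. The proximity constraint then restricts each traversal, at level $i$, to a single $(5\times 5)$-cell region in $R_{i+1}$; by Lemma~\ref{lmm:ah-node-num} such a region contains $O(\lambda^2)$ nodes of level $\geq i$, so summing across the $h+1$ levels gives $O(h\lambda^2)$ visited nodes per traversal.

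Next, I would bound edges explored per visited node. Lemmas~\ref{lemma:num_query} and~\ref{lemma:num_jumping} from the appendix show that every node of $\H$ has $O(\lambda^2)$ non-elevating edges and $O(h\lambda^2)$ elevating edges in total. Within a single query, however, AH only follows the elevating edges targeting one chosen level (the meeting level $j$, or the closest available border-induced level), which amounts to $O(\lambda^2)$ edges; the remaining relaxed edges from each node are the $O(\lambda^2)$ non-elevating ones surviving the rank and proximity constraints. Thus each traversal relaxes $O(h\lambda^4)$ edges over $O(h\lambda^2)$ nodes, and with a binary-heap Dijkstra it runs in $O(h\lambda^4 + h\lambda^2 \log(h\lambda^2))$ time; invoking Assumption~\ref{assu:def-ae} that $\lambda$ is a constant reduces this to $O(h\log h)$, and running both traversals preserves the bound. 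For the shortest-path version I would first execute the distance query to obtain $P'$ in $\H$ at cost $O(h\log h)$, then bound the conversion of $P'$ to the actual path $P$ in $G$ using the shortcut design in Section~\ref{sec:ah-overview}: every shortcut $\langle u,v\rangle$ stores an associated node $w$ so that it unfolds to $\langle u,w,v\rangle$ in $O(1)$ time, and a standard amortized analysis on the recursive unfolding tree yields $O(k)$ total work since the final path has exactly $k$ edges. Combining gives $O(h\log h + k)$.

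The main obstacle I anticipate is justifying that only $O(\lambda^2)$ elevating edges, rather than the naive $O(h\lambda^2)$, are actually traversed from a single visited node. This requires a careful case analysis of the three branches in the query strategy of Section~\ref{sec:ah-query} (elevating directly to level $j$ when the current node is a border node in $R_j$, elevating to the nearest lower level $j' < j$ when it is a border node only in $R_{j'}$, or relaxing non-elevating edges otherwise), combined with a density argument applying Lemma~\ref{lmm:ah-node-num} within the relevant $(5\times 5)$-cell region to show that at most $O(\lambda^2)$ elevating edges of any node can be relevant to a given query.
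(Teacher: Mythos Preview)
Your proposal is correct and matches the paper's proof essentially step for step: bound visited nodes per level via the proximity constraint and Lemma~4 to get $O(h\lambda^2)$ nodes, argue that from each visited node only $O(\lambda^2)$ edges are relaxed (because AH follows elevating edges to a single target level, and Lemma~14 gives $O(\lambda^2)$ elevating edges per level), combine into the Dijkstra cost $O(h\lambda^2\log(h\lambda^2)+h\lambda^4)=O(h\log h)$, and add the $O(k)$ shortcut-unfolding for shortest paths. The ``obstacle'' you flag is exactly the point the paper dispatches with the observation that the query strategy only ever targets one level's worth of elevating edges from any given node, so your anticipated case analysis is already more detailed than what the paper spells out.
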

\begin{proof}
 AH answers any distance query with two traversals of $\H$, starting from the source $s$ and destination $t$ of the query, respectively. Due to the proximity constraint, in the $i$-th level of $\H$ ($i \in [0, h]$), each traversal of AH only visits the nodes in a ($5 {\times} 5$)-cell region in $R_{i+1}$. By Lemma~\ref{lmm:ah-node-num}, such a region only contains $O(\a^2)$ level-$i$ cores because a ($5\times 5$)-cell region in $R_{i+1}$ is a ($10\times 10$)-cell region in $R_i$. Hence, the total number of nodes traversed by AH is $O(h \a^2)$. Furthermore, for each node $v$ visited during a traversal, AH either follows the elevating edges of $v$ to a certain level of $\H$, or moves along the non-elevating edges of $v$ that satisfy the rank and proximity constraints. As previously discussed in Lemma~\ref{lemma:num_jumping}, $v$ has $O(\a^2)$ elevating edges to each level of $\H$, and has $O(\a^2)$ non-elevating edges. Therefore, the total number of edges visited by AH is $O(h \a^4)$. Since each traversal is performed using Dijkstra's algorithm, its overall time complexity is $O(h\a^2 \log(h\a^2) + h\a^4)$. Consequently, when $\a$ is a constant, the time complexity of AH for a distance query is $O(h \log h)$.

To answer a shortest path query from $s$ to $t$, AH first processes its corresponding distance query to retrieve the shortest path $P'$ from $s$ to $t$ in $\H$, and then its transforms $P'$ into the actual shortest path $P$ from $s$ to $t$ in $G$. For each shortcut $e$, it requires $O(1)$ time to decompose $e$ into two edges, and an original edge cannot be further decomposed. Hence, the transformation from $P'$ to $P$ takes $O(k)$ time, where $k$ is the number of edges in $P$. Therefore, AH requires $O(h \log h + k)$ time to answer a shortest path query.
\end{proof}

\begin{lemma}
AH requires $O(h n^2)$ time to construct $\H$.
\end{lemma}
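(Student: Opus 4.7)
The plan is to decompose the preprocessing into the three stages already sketched in Section~4.5 (level assignment, node ordering, shortcut construction), bound each stage's cost per level using Lemma~\ref{lmm:ah-node-num} (the density lemma), and then sum over the $h+1$ levels of $\H$.

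For the level-assignment stage at level $i$, I would argue as follows. AH imposes the grid $R_{i+1}$ on the reduced graph $G'_i$ and, for each non-empty $(4{\times}4)$-cell region $B$, runs a constrained Dijkstra from every border node of $B$ (both forward and backward). By Lemma~\ref{lmm:ah-node-num}, the number of level-$i$ cores inside $B$ is $O(\lambda^2)$, and a similar counting (applied to edges of $G$ crossing strip boundaries, each of which contributes $O(1)$ border nodes) shows that the number of border nodes in $B$ is $O(\lambda)$. Combined with Lemma~\ref{lemma:num_query}, each node in $G'_i$ carries $O(\lambda^2)$ incident edges that satisfy the coverage condition, so each constrained Dijkstra inside $B$ runs in time polynomial in $\lambda$, i.e. $O(1)$ when $\lambda$ is constant. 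Since every node of $G$ is contained in only a constant number of $(4{\times}4)$-cell regions in $R_{i+1}$, the total work at level $i$ is $O(n)$; naively bounding it across overlaps still gives at worst $O(n^2)$ per level. The second stage (ordering via the linear-time vertex-cover approximation on the graph of pseudo-arterial edges) takes $O(n)$ per level.

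For the shortcut-construction stage at level $i$, AH iterates over each node $u$ in $\G_i^*$, builds the forward and backward SPTs rooted at $u$ restricted to the $(5{\times}5)$-cell region centered at $u$ in $R_{i+1}$, and emits shortcuts to nearby higher-rank nodes together with elevating edges. By Lemma~\ref{lmm:ah-node-num} the relevant region contains $O(\lambda^2)$ level-$i$-or-above nodes, and by Lemmas~\ref{lemma:num_query} and~\ref{lemma:num_jumping} each such node has $O(h\lambda^2)$ incident edges; the Dijkstra traversal rooted at $u$ therefore costs $O(\lambda^2 \log(\lambda^2) + h\lambda^4)$. Summed over the $n$ choices of $u$, the per-level shortcut cost is $O(n \cdot h \lambda^4)$, i.e., $O(nh)$ when $\lambda$ is constant. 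A matching bound of $O(n)$ per level holds for elevating edges by the same regional argument.

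Summing the three stages gives a per-level cost dominated by the level-assignment step, which I expect to be the main obstacle because one must argue carefully that even in the worst case the pairwise interactions among border nodes of a single $(4{\times}4)$ region do not blow up beyond $O(n)$ globally, and that the reduced graphs $G'_i$ shrink enough that re-imposing a coarser grid does not double-count work. Once that per-level cost is shown to be $O(n^2)$ (the loose bound claimed in the paper), summing over the $h+1$ levels yields the target $O(hn^2)$ total preprocessing time, completing the proof and, together with the preceding space and query-time lemmas, establishing Theorem~\ref{thrm:ah-complexity}.
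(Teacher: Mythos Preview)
Your decomposition into the three stages matches the paper's, and your treatment of the ordering stage and the shortcut stage is essentially right (though for shortcuts the paper gets $O(n\lambda^2)$ per level rather than your $O(nh\lambda^4)$, since at level $i$ the SPTs are built only over level-$(i{-}1)$ edges, not all edges; this does not affect the final bound).

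The genuine gap is in your analysis of the level-assignment stage. Your claim that ``the number of border nodes in $B$ is $O(\lambda)$'' is not justified and in general false: by Definition~\ref{def:ah-border}, a border node of $B$ is any endpoint of an \emph{arbitrary} edge of $G$ that crosses a strip boundary of $B$, not just an arterial edge. The arterial dimension $\lambda$ bounds only arterial edges (edges on spanning paths crossing bisectors); it says nothing about how many original edges of $G$ may cross a given grid line. A single $(4{\times}4)$ region in a coarse grid $R_{i+1}$ can therefore have up to $\Theta(n)$ border nodes. This is exactly why the paper's own argument uses the loose estimate of $n$ border nodes per region: each border node $u$ launches a constrained Dijkstra whose tree has $O(\lambda^2)$ nodes, and then one must inspect edges from those tree nodes out to the (up to $n$) other border nodes to complete the spanning paths, yielding $O(n\lambda^2)$ work per source and $O(n^2\lambda^2)$ per level.

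Your fallback sentence (``naively bounding it across overlaps still gives at worst $O(n^2)$ per level'') does not rescue the argument, because the quadratic cost does not arise from region overlaps --- each node lies in $O(1)$ regions of $R_{i+1}$, so overlaps contribute only a constant factor. The $n^2$ comes from the possibly $\Theta(n)$ border nodes within a \emph{single} region interacting pairwise. You need to make that explicit rather than appeal to overlaps.
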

\begin{proof}
The preprocessing algorithm of AH consists of three steps: (i) assigning nodes to each level of $\H$, (ii) deriving the strict total order on nodes at the same level, and (iii) creating shortcuts in $\H$. As for (i), When assigning nodes to the $i$-th level of $\H$ ($i \in [0, h-1]$), AH inspects each non-empty ($4 {\times} 4$)-cell region in $R_{i+1}$, and constructs a subgraph that consists of the level-$i$ cores and border nodes in the region. For each boarder node $u$, AH needs to apply Dijkstra's algorithm to traverse the subgraph. By Lemma~\ref{lmm:ah-node-num}, $O(\a^2)$ level-$i$ cores are visited during the traversal, and each node visited has $O(\a^2)$ edges. Hence building a Dijkstra tree from $u$ require $O(\a^4)$ time. After the Dijkstra tree is constructed, AH needs to inspect each node in the tree and the boarder nodes to find out a spanning path which requires $O(n \a^2)$, because there are $O(\a^2)$ nodes in the tree and $n$ boarder nodes for a loose estimation. On the other hand, $u$ is contained in a constant number of ($4 {\times} 4$)-cell region in $R_i$, hence, it requires $O(n \a^2)$ time to find out the spanning paths from $u$. As such, it requires $O(n^2 \a^2)$ time to find out all the spanning paths. As for (ii), AH takes only $O(n)$ time to derive the strict total order at level $i$ of $\H$, since the derivation is based on a linear time algorithm for vertex cover. As for (iii), to construct shortcuts at the $i$-th level of $\H$, AH needs to inspect a graph $G^*_i$ reduced from $G$. For each node $u$ in $G^*_i$, AH invokes a Dijkstra's algorithm to start a  traversal in the ($5 {\times} 5$)-cell region in $R_{i+1}$ that is centered at $u$. After that, it creates shortcuts from $u$ by traversing the nodes in the Dijkstra tree obtained. By Lemma~\ref{lmm:ah-node-num}, there are $O(\a^2)$ nodes in the tree. Hence, it costs $O(\a^2)$ time to generate level-$i$ shortcuts from $u$. As such, the time required to create shortcuts at level $i$ of $\H$ is $O(n \a^2)$. There are $h$ levels, and therefore, AH costs $O(h n\a^2)$ to create shortcuts. Since $\a$ is a constant, AH totally requires $O(h n^2)$ time to construct $\H$.
\end{proof}

To prove Theorems~\ref{thrm:distance_correct} and~\ref{thrm:path_correct}, we need the following lemma:
\begin{lemma} \label{lemma:unimodal_path}
  For any two nodes $s$ and $t$, there exits a path $P$ from $s$ to $t$ such that (i) the weight of $P$ equals the weight of the shortest path from $s$ to $t$ in $G$, and (ii) the rank sequence of $P$ is unimodal.
\end{lemma}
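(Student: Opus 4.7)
The plan is to start with an arbitrary shortest path $P^{\star} = \langle v_0,v_1,\ldots,v_k\rangle$ from $s$ to $t$ in $G$ and iteratively transform it into a path $P$ in $\H$ of the same total weight whose rank sequence is unimodal. I call a position $j \in \{1,\ldots,k-1\}$ a \emph{valley} if $v_j$ strictly ranks below both $v_{j-1}$ and $v_{j+1}$; a rank sequence is unimodal if and only if it contains no valley. The transformation repeatedly picks a valley position $j$ and replaces the consecutive triple $v_{j-1}, v_j, v_{j+1}$ by the pair $v_{j-1}, v_{j+1}$ joined by a single edge of $\H$ of weight $l(v_{j-1},v_j) + l(v_j,v_{j+1})$.

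The crux is to prove that such an edge always exists in $\H$. Without loss of generality assume the rank of $v_{j-1}$ is at most that of $v_{j+1}$; the reverse case is symmetric, using the backward SPT of $v_{j+1}$. Let $i := \text{level}(v_j)$. Because $v_{j-1}$ ranks above $v_j$, its level is at least $i$, so $v_{j-1}$ belongs to the reduced graph $\G_i$ on which AH's $i$-th iteration operates. During that iteration AH builds a forward SPT rooted at $v_{j-1}$ on the subgraph induced by the edges overlapping the $(5{\times}5)$-cell region of $R_{i+1}$ centered at $v_{j-1}$, and it creates a shortcut $\langle v_{j-1}, v_{j+1}\rangle$ precisely when $v_{j+1}$ appears in this SPT, its rank exceeds $v_{j-1}$'s, and every strict ancestor of $v_{j+1}$ ranks below $v_{j-1}$'s. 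The first two conditions hold by the case assumption, and the third holds because the unique shortest $v_{j-1}$-to-$v_{j+1}$ path in $G$ is $v_{j-1}\to v_j\to v_{j+1}$, whose only intermediate $v_j$ ranks strictly below $v_{j-1}$ by the valley hypothesis. The shortcut thus generated has weight equal to the $G$-distance between its endpoints, which is $l(v_{j-1},v_j)+l(v_j,v_{j+1})$, as required.

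The step I expect to be the main obstacle is verifying that $v_{j+1}$ actually lies within the $(5{\times}5)$-cell region of $R_{i+1}$ around $v_{j-1}$, so that it is visible to the local SPT. For this I would invoke Lemma~\ref{lmm:ah-longpath}: if $v_{j-1}$ and $v_{j+1}$ were not jointly covered by a common $(3{\times}3)$-cell region of $R_{i+1}$, then the shortest path from $v_{j-1}$ to $v_{j+1}$ in $G$ would have to pass through a node of level at least $i+1$; but its only internal node is $v_j$, whose level is exactly $i$ by the choice of $i$, a contradiction. Hence $v_{j-1}$ and $v_{j+1}$ share a $(3{\times}3)$-cell region in $R_{i+1}$, which sits strictly inside the $(5{\times}5)$-cell region around $v_{j-1}$, so the desired shortcut is indeed created.

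It remains to handle termination and bookkeeping. Each valley replacement strictly reduces the number of nodes in the current path, so the process halts in at most $k$ steps, and the total weight is preserved throughout. A subtle point is that after the first replacement some edges of the current path are shortcuts rather than original edges of $G$, so the invocation of Lemma~\ref{lmm:ah-longpath} above must be read against the underlying $G$-subpath that each shortcut contracts. I would resolve this by always selecting the valley whose $v_j$ has the currently lowest level (breaking ties by the strict total order), ensuring that the two adjacent shortcuts contract subpaths of $G$ whose internal nodes all rank strictly below $v_j$, so the level-based proximity argument remains coherent across iterations.
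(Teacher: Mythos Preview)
Your approach is correct but takes a genuinely different route from the paper's. The paper argues \emph{top-down}: it fixes the apex $u$ (the highest-ranked node on the $G$-shortest path), and from $s$ it repeatedly jumps to the first subsequent node of strictly higher rank. For each such jump $s \to v$, the sub-path $P_{s,v}$ has $v$ of highest rank and $s$ second-highest, so the construction of level-$\ell$ edges (with $\ell=\text{level}(s)$) supplies the shortcut $\langle s,v\rangle$ directly; one then recurses from $v$. Symmetry handles the $t$-to-$u$ half.

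Your \emph{bottom-up} valley-removal argument is a legitimate alternative and in fact converges to the same unimodal path. The invariant you need---that every edge of the current path contracts a $G$-sub-path whose internal nodes rank strictly below \emph{both} endpoints---is preserved by any valley removal (not only the lowest-ranked one), and it is exactly the hypothesis under which AH's shortcut rule fires. Your ``pick the lowest-ranked valley'' refinement is what makes the stronger statement (internal nodes rank below $v_j$ itself) true, which you use for the proximity bound via Lemma~\ref{lmm:ah-longpath}. Two small imprecisions worth tightening: (i) the iteration at which AH actually creates $\langle v_{j-1},v_{j+1}\rangle$ is tied to $\text{level}(v_{j-1})$ (the lower-ranked endpoint), not $\text{level}(v_j)$; your proximity bound at $R_{i+1}$ with $i=\text{level}(v_j)$ is only stronger, so no harm, but the SPT you invoke lives in the coarser grid; (ii) after the first step the SPT is built on level-$(i{-}1)$ edges, not on $G$, so the ``only intermediate is $v_j$'' phrasing should be replaced by the invariant above. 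The paper's direct construction sidesteps these bookkeeping issues entirely, which is its main advantage; your approach has the merit of making explicit the contraction-hierarchy flavour of the argument.
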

\begin{proof}
  Among the nodes on the shortest path from $s$ to $t$ in $G$, let $u$ be the one with the highest rank. Let $P_{s,u}$ be the shortest path from $s$ to $u$ in $G$. It suffices to show that there is a path $P_f$ from $s$ to $u$, such that (i) the weight of $P_f$ equals the weight of $P_{s,u}$, and (ii) the rank sequence of $P_f$ is increasing. We use $rank(u)$ to denote the rank of a node $u$.

  Among the nodes on $P_{s,u}$ let $v$ be such a node that: (i) $rank(s)<rank(v)$, and (ii) if $v'$ is another node on $P_{s,u}$ that has a higher rank value than that of $s$, $v$ is closer to $s$ than $v'$. Let $P_{s,v}$ be the sub-path of $P_{s,u}$ from $s$ to $v$. Then $P_{s,v}$ is also a shortest path, and among the nodes on $P_{s,v}$ $v$ has the highest rank and $s$ comes the second. Otherwise it violates the second property of $v$. If $P_{s,v}$ contains multiple edges, by the algorithm $\langle s,v \rangle$ is a level-$rank(s)$ edge, and the weight of $\langle s,v \rangle$ equals the weight of $P_{s,v}$. It means that $\langle s,v\rangle$ is an edge on $P_f$. Then in a similar way we continue to consider the path from $v$ to $u$. Therefore, the lemma is proved.
\end{proof}

	\begin{theorem} \label{thrm:distance_correct}
AH correctly answers any distance query.
\end{theorem}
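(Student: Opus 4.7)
The plan is to prove $\theta = \mathrm{dist}(s,t)$, where $\theta$ is the value returned by AH, via the two bounds $\theta \ge \mathrm{dist}(s,t)$ and $\theta \le \mathrm{dist}(s,t)$.

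The lower bound is routine. Whenever AH updates $\theta$, it sets $\theta \gets \kappa_s(u) + \kappa_t(u)$ for some node $u$, and $\kappa_s(u)$ (resp.\ $\kappa_t(u)$) is by construction the length of some path in $\H$ from $s$ to $u$ (resp.\ from $u$ to $t$). Recursively replacing each shortcut by its associated two-hop path (Section~\ref{sec:ah-overview}) converts these into an $s$-to-$t$ walk in $G$ of exactly the same total length, so $\kappa_s(u) + \kappa_t(u) \ge \mathrm{dist}(s,t)$, and hence $\theta \ge \mathrm{dist}(s,t)$ at termination.

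The substantive direction is $\theta \le \mathrm{dist}(s,t)$. I would first invoke Lemma~\ref{lemma:unimodal_path} to obtain a path $P^*$ in $\H$ from $s$ to $t$ whose weight equals $\mathrm{dist}(s,t)$ and whose rank sequence is unimodal, then split $P^*$ at its highest-rank node $w$ into a prefix $P_1$ from $s$ to $w$ (strictly increasing in rank) and a suffix $P_2$ from $w$ to $t$ (strictly decreasing in rank). The goal is to show that the forward traversal from $s$ discovers $P_1$ and the backward traversal from $t$ discovers the reverse of $P_2$; once both traversals have relaxed $w$, the update rule for $\theta$ forces $\theta \le l(P_1) + l(P_2) = l(P^*) = \mathrm{dist}(s,t)$.

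The main obstacle is verifying that every edge of $P_1$ survives the rank and proximity filters (with a symmetric argument for $P_2$). The rank constraint is immediate from the strict-increase property of $P_1$. For the proximity constraint at a node $v$ on $P_1$ of level $i$, I would argue by contradiction: if no $(3{\times}3)$-cell region in $R_{i+1}$ covered both $s$ and $v$, Lemma~\ref{lmm:ah-longpath} would insert a node of level at least $i+1$ on the shortest path from $s$ to $v$; but that shortest path is precisely the prefix of $P_1$ ending at $v$ (being a sub-path of the shortest path $P^*$), and unimodality bounds every rank, hence every level, along this prefix by $\mathrm{rank}(v)$ -- a contradiction. The elevating-edge optimization is handled by one more application of Lemma~\ref{lmm:ah-longpath} to the pair $(s,t)$: at the coarsest grid $R_j$ where no $(3{\times}3)$-cell region covers both endpoints, the peak $w$ must lie at level at least $j$, so the direct jump from $s$ (or from any visited border node) through elevating edges to level $j$ or above is consistent with $P_1$; by the way elevating shortcuts are created and associated with two-hop paths, the contribution to $\kappa_s$ equals the weight of the corresponding initial segment of $P_1$, so the traversal still reaches $w$ with key $l(P_1)$, completing the upper bound.
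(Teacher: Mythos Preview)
Your proposal is correct and follows essentially the same route as the paper: both invoke Lemma~\ref{lemma:unimodal_path}, split at the peak, note that the rank constraint is immediate, derive the proximity-constraint contradiction from a higher-level node forced onto the shortest sub-path (you cite Lemma~\ref{lmm:ah-longpath}, the paper cites Statements~2 and~4 of Lemma~\ref{lemma:spanning_path}, which amounts to the same thing), and dispatch the elevating-edge optimization by observing via Lemma~\ref{lmm:ah-longpath} that the peak lies at level at least~$j$. Your version additionally spells out the easy direction $\theta \ge \mathrm{dist}(s,t)$, which the paper leaves implicit in this proof.
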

\begin{proof}
  Suppose $P_{s,t}$ is a shortest path from $s$ to $t$ in $G$. Lemma~\ref{lemma:unimodal_path} shows there exists a unimodal rank sequence path $P$ from $s$ to $t$ where the weight of $P$ equals the weight of $P_{s,t}$. i.e., the \emph{Rank Constraint} would not affect the query correctness. Now we show that the \emph{Proximity constraint} would not affect the correctness either.

  Let $u$ be the node with the highest rank among all the nodes on $P_{s,t}$. Let $P_f$ be the shortest path from $s$ to $u$ and the rank sequence of $P_f$ is increasing. It suffices to show that the proximity constraint in the forward search would not affect the discovery of $P_f$. By contradiction, suppose that $v$ is a level-$i$ node on $P_f$, but $v$ is beyond the ($5\times 5$)-cell region in $R_{i+1}$ centered at the cell that contains $s$. Let $P_{s,v}$ be the shortest path from $s$ to $v$. Then, in this case, $P_{s,v}$ contains multiple edges, otherwise, since $s$ and $v$ are far-apart in $R_{i+1}$, and followed by Statement 2 of Lemma~\ref{lemma:spanning_path}, $s$ and $v$ are both level-($i{+}1$) cores, which violates the assumption $v$ as at level $i$. Besides, by Statement 4 of Lemma~\ref{lemma:spanning_path}, there is a node level-($i{+}1$) core $x$ on $P_{s,v}$ and $x\not=v$. Then, it violates the rank-increasing property of $P_f$ because $x$ comes before $v$ on $P_f$.

  Finally, we show that the elevating edges would not affect the correctness. Let $R_j$ ($j \in [1, h]$) be the coarsest grid where no ($4 {\times} 4$)-cell region contains both $s$ and $t$. By Lemma~\ref{lmm:ah-longpath}, the shortest path from $s$ to $t$ should passes through at least one node whose level is at least $j$. This indicates that AH's traversal from $s$ would meet its traversal from $t$ at level $i$ or above of $H$. As such, omitting visiting the nodes that are lower than level $j$ would not affect the correctness. Therefore, the lemma is proved.
\end{proof}

\begin{theorem} \label{thrm:path_correct}
AH correctly answers any shortest path query.
\end{theorem}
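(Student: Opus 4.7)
The plan is to reduce Theorem 3 (path correctness) to Theorem 2 (distance correctness) together with the two-hop decomposition property of shortcuts established in Section~\ref{sec:ah-overview}. Specifically, by Theorem~\ref{thrm:distance_correct}, the first phase of AH's shortest-path procedure produces a path $P'$ in $\H$ from $s$ to $t$ whose total weight $l(P')$ equals $dist(s,t)$. It therefore suffices to show that the second phase, which recursively replaces each shortcut in $P'$ by its associated two-hop path, (i) terminates, (ii) yields a path $P$ in $G$ consisting solely of original edges, and (iii) preserves the total weight, i.e.\ $l(P)=l(P')$. Combined, these three facts imply that $P$ is a valid $s$-to-$t$ walk in $G$ of weight $dist(s,t)$, and hence a shortest path.

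My first step would be to formalize the shortcut invariant from Section~\ref{sec:ah-overview}: every shortcut $\langle v_a,v_c\rangle$ inserted by AH is associated with a node $v_b$ such that both $\langle v_a,v_b\rangle$ and $\langle v_b,v_c\rangle$ are edges in $\H$ and $l(\langle v_a,v_c\rangle)=l(\langle v_a,v_b\rangle)+l(\langle v_b,v_c\rangle)$. This invariant, which follows directly from the construction of the level-$i$ and elevating edges (each is built by tracing an edge in an SPT of $\G_i$ back to the preceding hop), immediately gives length preservation for a single replacement step. Summing over all replacements along $P'$ then gives $l(P)=l(P')=dist(s,t)$.

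For termination and for concluding that $P$ contains only original edges, I would proceed by induction on a suitable rank. The natural choice is the level of the shortcut, using the fact that when a shortcut $\langle v_a, v_c\rangle$ of level $i$ is replaced by $\langle v_a,v_b,v_c\rangle$, the associated node $v_b$ is obtained from an SPT built over edges of level strictly less than $i$ (level-$0$ edges in the case $i=0$ are original by definition, since $\G_0=G$ and level-$-1$ edges are exactly the edges of $G$). Hence each of the two replacement edges has level strictly smaller than that of the shortcut being expanded. By induction on the level, every expansion eventually bottoms out in edges of level $-1$, i.e.\ original edges of $G$, and the recursion terminates. The linearity of the total work, $O(k)$, comes from the fact that each expansion step produces exactly one new edge and the total number of original edges in $P$ is $k$.

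The main obstacle will be establishing the level-monotonicity of the associated middle node $v_b$ in a clean way that covers both types of shortcuts that AH creates, namely level-$i$ edges and elevating edges. For level-$i$ edges the invariant is immediate from the SPT construction restricted to level-$(i{-}1)$ edges. For elevating edges, the definition in Section~\ref{sec:ah-pre} prescribes the associated node differently depending on whether the source $u$ is at level $i-1$ or below, and I would need to verify in each case that both halves $\langle u,v_b\rangle$ and $\langle v_b,v\rangle$ are themselves edges of $\H$ whose levels are strictly smaller than that of the elevating edge, so that the induction still goes through. Once that case analysis is complete, the three facts (i)--(iii) combine to prove Theorem~\ref{thrm:path_correct}.
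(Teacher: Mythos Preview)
Your overall reduction is the same as the paper's: invoke Theorem~\ref{thrm:distance_correct} to obtain a path $P'$ in $\H$ of weight $dist(s,t)$, then argue that recursive two-hop expansion yields a path in $G$ of the same weight. The weight-preservation and $O(k)$-work observations are fine.

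The gap is in your termination argument. You propose to induct on the \emph{level} of the shortcut, claiming that the two replacement edges have strictly smaller level because the SPT used to build a level-$i$ shortcut consists of level-$(i{-}1)$ edges. That inference does not hold. For a level-$i$ non-elevating shortcut $\langle u,v\rangle$ with $u$ at level $i$, the associated node $w$ is the highest-rank internal node on the $u$-to-$v$ path; nothing prevents $w$ from also being at level $i$ (merely with lower rank than $u$ in the strict total order). Then $\langle u,w\rangle$ and $\langle w,v\rangle$ are both level-$i$ edges, so the level does not strictly decrease and your induction stalls. What does strictly decrease is the \emph{minimum rank} among the two endpoints: since $\mathrm{rank}(w)<\mathrm{rank}(u)<\mathrm{rank}(v)$, both halves have minimum endpoint rank $\mathrm{rank}(w)$, strictly below the original's $\mathrm{rank}(u)$. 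An induction on rank (a total order, hence well-founded) would work for non-elevating edges; level alone does not.

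The paper's proof does not run a single uniform induction. For non-elevating edges it argues directly, via the forward and backward SPTs at the relevant nodes, that both halves $\langle u,w\rangle$ and $\langle w,v\rangle$ exist in $\H$ as non-elevating edges (this is precisely the rank-based existence argument sketched above). For elevating edges it runs an induction on the level of the \emph{target} node, with an explicit case split on whether the source is a border node of $R_{i+1}$ or a level-$i$ node; in each case it identifies the marked node and shows the two halves are either elevating edges to a strictly lower level (handled by the inductive hypothesis) or non-elevating edges (handled by the first part). Your proposal correctly flags this case analysis as the main obstacle, but the clean level-monotonicity you hope for does not hold uniformly, so you will need to carry out essentially the same split the paper does rather than a single descent on level.
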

\begin{proof}
 By theorem~\ref{thrm:distance_correct}, given two nodes $s$ and $t$, the query algorithm can discover a unimodal node rank sequence path $P$. It remains to show that every shortcut $e$ on $P$ can reconstruct the path $e$ contracts. There are two types of shortcuts: elevating edges and non-elevating edges (i.e. the level-$i$ edges).

 As for the non-elevating edges, it suffices to show that the non-elevating edges on the rank-increasing path discovered by the forward search from $s$ can be reconstructed. Suppose that $\langle u, v \rangle$ is a non-elevating edge on the path discovered by the forward search. Then, $\langle u,v \rangle$ contracts the shortest path $P_{u,v}$ from $u$ to $v$. In addition, the rank of $v$ is higher than that of $u$. Suppose that $\langle u, v \rangle$ is marked with a node $w$. Then the sub-path $P_{w,v}$ (resp. $P_{u,w}$) from $w$ to $v$ (resp. from $u$ to $w$) of $P_{u,v}$ is also a shortest path. It suffices to show that $\langle w,v \rangle$ (resp.\ $\langle u,w \rangle$) is an original edge or a non-elevating edge. If $P_{w,v}$ contains multiple edges, since by the meaning of $w$, among the nodes on $P_{w,v}$, $v$ has the highest rank and $w$ comes the second, hence, AH would add a level-$i$ edge $\langle w,v \rangle$ where $w$ is at level-$i$. On the other hand, we can also have a similar conclusion of $\langle u,w \rangle$ if we build the non-elevating edge to $w$ from a \emph{Backward SPT} rooted at $w$.

 As for the elevating edges, we use mathematical induction to prove that every elevating edge $e$ on the shortest path $P$ could be reconstructed. First, let $\langle u, v \rangle$ be an elevating edge where (i) $u$ is a node at level-$0$, (ii) $v$ at level-$1$, and (iii) $\langle u,v \rangle$ is marked with a node $w$. Then $w$ is the first node that has higher rank than that of $u$ on the shortest path from $u$ to $v$. In that case, $\langle u, w \rangle$ is a level-$0$ edge (non-elevating edge), and as discussed before, $\langle u, w \rangle$ could be reconstructed. If $w\not= v$, we continue to consider $\langle w, v \rangle$. Since $w$ is also at level-$0$, and $v$ is the first node at level-$1$ on the shortest path from $w$ to $v$. Then $\langle w, v \rangle$ is also an elevating edge. Similar to the discussion aforementioned, $\langle w, v \rangle$ could also be reconstructed. As such, all the elevating edges to a node at level $1$ on the shortest path could be reconstructed.
 Suppose that the elevating edges $e$ to a level-$i$ node on a shortest path $P$ could be reconstructed. We show that the elevating edges $e$ to a level-$(i+1)$ node could also be reconstructed. Let $\langle u, v \rangle$ be an elevating edge where $v$ is at level-($i{+}1$). There are two types of elevating edges: (a) $u$ is a boarder node of $R_{i+1}$, and (b) $u$ is a node at level $i$. We firstly consider (a). Let $T$ be an SPT from $u$. Let $w$ be the node immediately follows $u$ on the branch from $u$ to $v$ in $T$. Then $w$ is a level-$i$ core. In that case, $\langle u, w \rangle$ is also an elevating edge because: (i) $u$ is a boarder node of $R_{i+1}$ implies that $u$ is also a boarder node of $R_i$, and (ii) $w$ is a node at level-$i$ that is closet to $u$ on the path from $u$ to $v$. By induction, $\langle u, w \rangle$ could be reconstructed. If $w\not v$, we continue to consider $\langle w, v \rangle$. Since the shortest path from $w$ to $v$ does not go through another node at level $i{+}1$, $\langle w, v \rangle$ is an elevating edge of type (b). It remains to show that the type (b) elevating edges could be reconstructed. Let $\langle u, v \rangle$ be a type (b) elevating edge, i.e., $u$ is at level $i-1$ and $v$ at level $i$. By the algorithm, $\langle u, v \rangle$ is marked with a node $w$, the first node that has a higher rank than that of $u$ on the shortest path from $u$ to $v$. Then, $\langle u, w \rangle$ is a level-$i$ edge and could be reconstructed. If $w\not= v$, we continue to consider $w$, and similarly, $\langle w, v \rangle$ is also an elevating edge of type (b) marked with $w'$. As such, there is an edge $\langle w',v \rangle$ such that $w'$ is at level $i$, and the nodes on the shortest path from $w'$ to $v$ (excluding $w'$ and $v$) have lower ranks than that of $w'$. In what follows, $\langle w',v \rangle$ is also a level-$i$ edge, and could be reconstructed. Similarly, all the elevating edges from a level-($i{+}1$) node to the boarder nodes of $R_{i+1}$ and the level-$i$ nodes could also be reconstructed.

 Therefore, the lemma is proved.
\end{proof}


\end{sloppy}
\end{document}